\renewcommand{\baselinestretch}{1.2}
\newcommand{\eref}[1]{(\ref{#1})}
\newcommand{\fref}[1]{Figure~\ref{#1}}
\newcommand{\sref}[1]{\S\ref{#1}}
\newcommand{\tref}[1]{Table~\ref{#1}}
\newcommand{\nn}{\nonumber}
\newcommand{\beal}[1]{\begin{eqnarray}\label{#1}}
\newcommand{\eea}{\end{eqnarray}}
\newcommand{\ba}{\begin{array}}
\newcommand{\ea}{\end{array}}
\newcommand{\ud}{\mathrm{d}}
\newcommand{\Tr}{\text{Tr}}
\newcommand{\comment}[1]{}
\newcommand{\vev}[1]{\langle #1 \rangle}
\newcommand{\setall}{\setcounter{equation}{0}
        \setcounter{theorem}{0}}
\newtheorem{theorem}{\bf Theorem}
\newtheorem{conjecture}[theorem]{\bf Conjecture}
\newtheorem{lemma}[theorem]{\bf Lemma}
\newtheorem{proposition}[theorem]{Proposition}
\newtheorem{corollary}[theorem]{Corollary}
\newtheorem{definition}[theorem]{Definition}
\newenvironment{proof}[1][Proof]{\begin{trivlist}
\item[\hskip \labelsep {\bfseries #1}]}{\end{trivlist}}
\newcommand{\qed}{\nobreak \ifvmode \relax \else
      \ifdim\lastskip<1.5em \hskip-\lastskip
      \hskip1.5em plus0em minus0.5em \fi \nobreak
      \vrule height0.75em width0.5em depth0.25em\fi}
\newcommand{\cB}{{\cal B}}
\newcommand{\cC}{{\cal C}}
\newcommand{\cX}{{\cal X}}
\newcommand{\IP}{\mathbb{P}}
\newcommand{\IQ}{\mathbb{Q}}
\newcommand{\IN}{\mathbb{N}}
\newcommand{\IR}{\mathbb{R}}
\newcommand{\IC}{\mathbb{C}}
\newcommand{\IF}{\mathbb{F}}
\newcommand{\IZ}{\mathbb{Z}}
\newcommand{\td}{{\rm Td}}
\begin{document}

\begin{flushright}
UUITP-10/17
\end{flushright}

\vspace{1cm}

\begin{center}
{\Large \bf  Calabi-Yau Volumes and Reflexive Polytopes}
\end{center}
\medskip

\vspace{.4cm}
\centerline{
{\large Yang-Hui He}$^1$, \
{\large Rak-Kyeong Seong}$^2$, \
{\large Shing-Tung Yau}$^3$
}
\vspace*{3.0ex}

\renewcommand{\baselinestretch}{0.5}
\begin{center}
{\it
  {\small
    \begin{tabular}{cl}
      ${}^{1}$ 
      & Merton College, University of Oxford, OX14JD, UK \&\\
      & Department of Mathematics, City, University of London, EC1V 0HB, UK \&\\
      & School of Physics, NanKai University, Tianjin, 300071, P.R.~China\\
      ${}^{2}$ 
      & Department of Physics and Astronomy, Angstrom Laboratory,  \\
       & Uppsala University, Box 516, 75120 Uppsala, Sweden \\
      ${}^3$
      & Department of Mathematics, \&
      Center of Mathematical Sciences and Applications, \& \\ 
      & Jefferson Physical Laboratory,
      Harvard University, Cambridge, MA 02138, USA\\
      &\\
      & \qquad
      {\rm hey@maths.ox.ac.uk, \ rakkyeongseong@gmail.com, \ yau@math.harvard.edu}
    \end{tabular}
  }
}
\end{center}

\renewcommand{\baselinestretch}{1.2}

\vspace*{4.0ex}
\centerline{\textbf{Abstract}} \bigskip

We study various geometrical quantities for Calabi-Yau varieties realized as cones over Gorenstein Fano varieties, obtained as toric varieties from reflexive polytopes in various dimensions. Focus is made on reflexive polytopes up to dimension 4 and the minimized volumes of the Sasaki-Einstein base of the corresponding Calabi-Yau cone are calculated. By doing so, we conjecture new bounds for the Sasaki-Einstein volume with respect to various topological quantities of the corresponding toric varieties. We give interpretations about these volume bounds in the context of associated field theories via the AdS/CFT correspondence.
\\

\newpage

\begin{singlespace}
\tableofcontents
\end{singlespace}


\newpage

\section{Introduction}\setall

Whereas there has been a multitude of constructions for Calabi-Yau varieties over the years, both compact and non-compact, by far the largest number thereof have been realized with toric geometry as the point d'appui.
The combinatorial nature of toric varieties, in translating the relevant geometric quantities to manipulations on lattice polytopes, renders them particularly useful in systematically constructing and studying Calabi-Yau varieties and related phenomena in string theory and mathematics.

Compact smooth Calabi-Yau $(n-1)$-folds have been constructed \cite{bat,BB,Kreuzer:1995cd,Kreuzer:1998vb,Kreuzer:2000xy,Kreuzer:2002uu,Batyrev:1994hm,1998math......1107B,oldcy3,Braun:2011ik,Altman:2014bfa,cy3online} as hypersurfaces corresponding to the anti-canonical divisor within a $n$-dimensional toric variety coming from a reflexive polytope. 
The result was an impressive list of at least half a billion smooth, compact Calabi-Yau 3-folds. 
These, when plotted in terms of their Hodge numbers $h^{1,1} - h^{2,1}$ against $h^{1,1} + h^{2,1}$, gave rise to the famous funnel shape with left-right symmetry signifying mirror symmetry \cite{Candelas:1994bu,Batyrev:1994hm}.

Parallel to constructions of compact smooth Calabi-Yau $(n-1)$-folds, toric varieties also played a significant role in the construction of non-compact Calabi-Yau varieties.
A non-compact Calabi-Yau $n$-fold $\mathcal{X}$ of complex dimension $n$ can be realized as an affine cone over 
a complex base $X$, of complex dimension $n-1$.
By far the largest class of these affine Calabi-Yau $n$-folds is when the base is a toric variety $X(\Delta)$, from an $(n-1)$-dimension polytope $\Delta$.
A key fact is that $\mathcal{X}$ itself is a real cone over a compact, smooth Sasaki-Einstein manifold $Y$ of real dimension $2n -1$.

In string theory, the discovery of the AdS/CFT correspondence on $AdS_5\times S^5$ \cite{Maldacena:1997re}, later generalized to $AdS_5\times Y$ \cite{Morrison:1998cs,Acharya:1998db}, has led to considerable interest in Sasaki-Einstein 5-manifolds $Y$ and the non-compact Calabi-Yau $3$-fold cone over $Y$.
The dual field theory is a $4d$ $\mathcal{N}=1$ superconformal field theory on the worldvolume of a stack of D3-branes probing the Calabi-Yau singularity \cite{Maldacena:1997re,Witten:1998qj,Klebanov:1998hh}.
A large class of examples was obtained by identifying the $4d$ $\mathcal{N}=1$ worldvolume theories corresponding to the probed Calabi-Yau 3-folds, which were initially constructed from known toric varieties such as $T^{1,1}$ and the del Pezzos cones. 
Infinite families of theories corresponding to Sasaki-Einstein manifolds such as the class of $Y_{p,q}$ \cite{Martelli:2004wu,Benvenuti:2004dy}, $L_{p,q,r}$ \cite{Benvenuti:2005ja,Butti:2005sw} and $X_{p,q}$ \cite{Hanany:2005hq} manifolds followed giving a comprehensive picture over the correspondence between $4d$ $\mathcal{N}=1$ supersymmetric theories and toric non-compact Calabi-Yau 3-folds.

This development culminated in a Type IIB brane configuration of D5-branes suspended between a NS5-brane wrapping a holomorphic surface $\Sigma$, which realizes the $4d$ $\mathcal{N}=1$ supersymmetric theories and encodes its vacuum moduli space given by the probed toric Calabi-Yau 3-fold. These brane configurations, T-dual to the D3-branes probing the toric Calabi-Yau 3-fold, can be represented by a bipartite periodic graph on a 2-torus and are known as brane tilings \cite{Hanany:2005ve,Franco:2005rj}. 
Brane tilings describe for every toric Calabi-Yau 3-fold the corresponding $4d$ $\mathcal{N}=1$ theory, including theories that are related by Seiberg duality.

A parallel story was with regards to the volume of the Sasaki-Einstein 5-manifold $Y$, which is related to the central charge  $a$-function of the $4d$ $\mathcal{N}=1$ theory via the AdS/CFT correspondence \cite{Gubser:1998vd,Henningson:1998gx}. 
In \cite{Martelli:2005tp}, it was shown that minimizing the volume of the Sasaki-Einstein manifold determines its Reeb vector, which in turn showed that volume minimization is the geometric dual of a-maximization \cite{Intriligator:2003jj}. 
Following this work, \cite{Martelli:2006yb} found a general formula for the volume function using an equivariant index on the Calabi-Yau cone over $Y$, which counts holomorphic functions. 
This index is the Hilbert series of the Calabi-Yau and has been used in different contexts to count gauge invariant operators of supersymmetric gauge theories \cite{Benvenuti:2006qr,Feng:2007ur}. 
In \cite{Martelli:2006yb}, the Hilbert series and consequently the volume function was obtained using a localization formula relying on the toric data coming from the cone over the Sasaki-Einstein 5-manifold. 
This volume formula, which essentially depends only on the topological fixed point data encoded in the toric geometry of the Calabi-Yau cone, showed that the minimized volume of the Sasaki-Einstein 5-manifold is always an algebraic number representing the geometric counterpart of the maximized $a$-function.

Thus inspired, it is natural to ask how the minimum volume for toric non-compact Calabi-Yau manifolds behaves in dimension other than 3.
In particular, we focus on toric Calabi-Yau $n$-folds whose toric diagram is a $(n-1)$-dimensional reflexive polytope in $\mathbb{Z}^{n-1}$.
Such polytopes have been fully classified up to dimension $4$ by \cite{bat,Batyrev:1994hm,1998math......1107B,Kreuzer:1995cd,Kreuzer:1998vb,Kreuzer:2000xy},
providing a finite set of toric Calabi-Yau $n$-folds for which we can calculate the volume of the corresponding Sasaki-Einstein base.
By computing the volume minima for such a large set of Sasaki-Einstein manifolds as bases of the toric Calabi-Yau $n$-folds,
we gain a compendious understanding of its distribution. 
When compared to topological quantities of the corresponding toric variety, such as the Chern numbers and the Euler number, we remarkably observe that the distribution of the volume minima is not at all random.

We identify bounds of the volume minima determined by topological quantities of the toric variety, which in turn shed light on the properties of the corresponding supersymmetric gauge theories living on the probe branes at the associated Calabi-Yau singularity.
This not only provides new insights into the properties of the $4d$ $\mathcal{N}=1$ worldvolume theories on probe D3-branes at Calabi-Yau 3-folds singularities and their corresponding brane tiling constructions, but also sheds light on more recent developments regarding $3d$ $\mathcal{N}=2$ worldvolume theories on probe M2-branes \cite{Aharony:2008ug,Martelli:2008si,Hanany:2008cd,Hanany:2008fj} and $2d$ $(0,2)$ worldvolume theories on probe D1-branes \cite{Franco:2015tna,Franco:2015tya,Franco:2016qxh,GarciaCompean:1998kh,Mohri:1997ef} at toric Calabi-Yau 4-folds as well as $0d$ supersymmetric matrix models living on Euclidean D(-1)-branes at toric Calabi-Yau 5-folds \cite{Franco:2016tcm}. 
In the case of $4d$ $\mathcal{N}=1$ theories related to toric Calabi-Yau 3-folds, we re-interpret the volume bounds as bounds on the central charge $a$ in the $4d$ theory \cite{Intriligator:2003jj,Butti:2005vn,Butti:2005ps}. We make a similar connection to the free energy $F$ for $3d$ $\mathcal{N}=2$ theories living on M2-branes probing toric Calabi-Yau 4-folds \cite{Herzog:2010hf,Martelli:2011qj}.

The paper is organized as follows. 
Section \sref{sprel} overviews the construction of both compact and non-compact Calabi-Yau manifolds with a review on reflexive polytopes and toric varieties. 
The section also summarizes how non-compact toric Calabi-Yaus are related to supersymmetric gauge theories in different dimensions.
In section \sref{stop}, we give a summary of topological quantities related to toric varieties such as the Euler and Chern numbers. 
By doing so, we present combinatorial formulae that simplify the computation of these topological quantities from the toric variety.
Section \sref{shsvol} gives the construction of non-compact toric Calabi-Yau cones over Sasaki-Einstein manifolds and the computation of Hilbert series of the cone, adhering to the notation of \cite{Martelli:2006yb} which related an equivariant index on the Calabi-Yau cone with the volume function of the Sasaki-Einstein base manifold. 
At the end of the section, we explain volume minimization and how it relates to $a$-maximization for $4d$ $\mathcal{N}=1$ supersymmetric gauge theories. The method of computing this volume from the Hilbert series and Reeb vector, which in turn is exacted from the polytope data, is described in an algorithmic fashion.

Finally, section \sref{s:res} summarizes our results for volume minima for Sasaki-Einstein manifolds related to toric Calabi-Yau 3, 4 and 5-folds with reflexive toric diagrams. 
We plot the volume minima against topological quantities of the corresponding toric varieties and arrive at lower and upper bounds for the volume minima. 
These in the context of toric Calabi-Yau 3-folds and 4-folds with reflexive polygons as toric diagrams can be re-interpreted as bounds on the central charge $a$-function and the Free energy $F$ of the corresponding $4d$ $\mathcal{N}=1$ and $3d$ $\mathcal{N}=2$ theories, respectively.
The appendices summarize the implementations used to efficiently compute the minimum volumes, including tables with explicit values for volume minima for the reflexive toric Calabi-Yau cases considered.

\subsection*{Nomenclature}

\begin{longtable}{lcl}
  $\Delta$
  &:& (reflexive) convex lattice polytope;\\
  && subscript emphasizes dimension: $\Delta_{n-1} \subset \IZ^{n-1}$ \\
  && we distinguish between {\it vertices}: the extremal lattice points \& \\
  && {\it perimeter} points: lattice points (including vertices) lying on edges
  \\
  $X(\Delta_{n-1})$
  &:& toric variety corresponding to $\Delta_{n-1}$; \\
  && $\dim_{\IC} X(\Delta_{n-1}) = n-1$ \\
  FRS
  &:&
  Fine Stellar Regular triangulation of polytope \\
  $\widetilde{X(\Delta_{n-1})}$
  &:&
  smooth toric variety (if it exists) obtained from FRS of $\Delta$ \\
  $\Sigma(\Delta)$ 
  &:&
  (Inner) normal fan associated with $\Delta$ \\
  $\cX = \cC_{\IC}(X(\Delta_n))$
  &:& affine Calabi-Yau (complex) cone over $X(\Delta_{n-1})$;\\
  && $\dim_{\IC}\cX = n$\\
  $Y = \cB_{\IR}(\cX)$
  &:& Sasaki-Einstein manifold, base of $\cX$ which is real one over $Y$;\\
  && $\dim_{\IR} Y = \dim_{\IR} \cB(\cC(X(\Delta_{n-1}))) = 2n-1$ \\
  $b_i$ && components of the Reeb vector, $i=1,\ldots,n$ for $\cX$;\\
  && set the last component $b_{n} = n$\\
  $V(b_i; Y)$ && volume function for $Y$, also denoted $V(b_i; \cX)$ \\
  $b_i^*$ && the value of the Reeb vector with $b_n = n$ which minimizes
  $V(b_i; Y)$ \\
  $g(t_i; \cX)$ && multi-graded ($i=1,\ldots,n$) Hilbert series for $\cX$\\
\end{longtable}

\newpage

\section{Preliminaries \label{sprel}}

We begin with some preliminaries in both the mathematics and the physics, expanding on the nomenclature summarized above.
First, we introduce the concept of reflexive polytopes and explain why they have become so central in the construction of Calabi-Yau varieties.
Next, we focus on the affine Calabi-Yau varieties constructed therefrom and summarize their connection to quiver gauge theories.

Reflexive polygons and more generally reflexive polytopes have appeared both in the context of compact and non-compact Calabi-Yau manifolds.
Batyrev-Borisov \cite{bat,BB} first studied reflexive polyhedra $\Delta_n$ in $n$ dimensions in order to construct families of smooth Calabi-Yau hypersurfaces that are compactification in projective toric varieties given by $\Delta$. This allowed for the construction of dual families of compact $(n-1)$-dimensional Calabi-Yau varieties as hypersurfaces.
This duality is none other than mirror symmetry \cite{Batyrev:1994hm,Kreuzer:1995cd,Kreuzer:1998vb,Kreuzer:2000xy}.

Furthermore, as we will illustrate in section \sref{ss2}, non-compact toric Calabi-Yau $n$-folds have been studied more recently in the context of supersymmetric quiver gauge theories in various dimensions as worldvolume theories of probe D$(9-2n)$-branes.
These non-compact Calabi-Yau $n$-folds can be thought of as rational polyhedral cones generated by the vertices of a convex polytope $\Delta_{n-1}$ in $n-1$ dimensions.
A finite class of such non-compact Calabi-Yau $n$-folds can be generated in every dimension when the convex polytope is taken to be reflexive.
For example, in 2 dimensions, there are exactly 16 reflexive polygons as shown in \fref{f:n=2} which give rise to 16 non-compact toric Calabi-Yau 3-folds.
For these, the full list of corresponding  $4d$ $\mathcal{N}=1$ supersymmetric quiver gauge theories and of the corresponding brane tilings was classified in \cite{Hanany:2012hi}. 

From the work on mirror symmetry for compact Calabi-Yau manifolds \cite{bat,BB,Batyrev:1994hm,Kreuzer:1995cd,Kreuzer:1998vb,Kreuzer:2000xy} and the work on brane tilings for non-compact Calabi-Yau varieties \cite{Hanany:2005ve,Franco:2005rj,Feng:2005gw}, we see that reflexive polytopes are at the crux of studying Calabi-Yau geometry, offering a plenitude of new insights.
In this paper, we will focus exclusively on non-compact Calabi-Yau cones and the reader is referred to a careful exposition of the geometry and combinatorics of lattice polytopes in this context in \cite{nill} as well as an excellent rudimentary treatment in \cite{DW,Skarke:2012zg}.

\subsection{The Reflexive Polytope $\Delta$ \label{sreflexive}} 

Let $\Delta$ be a {\it convex lattice polytope} in 
$\mathbb{Z}^n$, which we can think of either as
\begin{enumerate}
\item a collection of vertices (dimension 0), each of which is a $n$-vector with integer entries, each pair of neighboring vertices defines an edge (dimension 1), each triple, a face (dimension 2), etc., and each $n-1$-tuple, a facet (dimension $n-1$ or codimension 1);
  or as
\item a list of linear inequalities with integer coefficients, each of which slices a facet defined by the hyperplane.
\end{enumerate}
We will only consider those polytopes containing the origin $(0,\ldots,0)$ as their unique interior point and define the {\it dual polytope} $\Delta^\circ$ (sometimes also called the {\it polar polytope}) to $\Delta$ as all vectors in 
$\mathbb{Z}^n$ whose inner product with all interior points of $\Delta$ is greater than or equal to $-1$: 
\begin{equation}
\label{dualp}
\mbox{Dual polytope: }
\Delta^\circ = \{ \underline{v} \in \mathbb{Z}^n ~|~ \underline{m} \cdot \underline{v} \geq -1\;\; \forall \underline{m} \in \Delta \} \ .
\end{equation}
The $-1$ is a vestige of definition (2) of $\Delta$, since any hyperplane not passing the origin can be brought to the form $\sum\limits_i c_i x_i = -1$.
In general, while the polar dual of a lattice polytope may have vertices with rational entries, the case where they are all integer vectors is certainly special and interesting. Consequently, we have that 
\begin{definition}
A convex lattice polytope $\Delta$ is \textbf{reflexive} if its dual polytope $\Delta^\circ$ is also a lattice polytope.
\end{definition}
Indeed, the symmetry of this definition means that the dual polytope $\Delta^\circ$ is also reflexive.
In particular, the origin is always a point in both and is, in fact, the {\em the unique interior point}.
Furthermore, we refer to any $\Delta = \Delta^\circ$ as self-dual or self-reflexive.

\subsection{The Toric Variety $X(\Delta)$} \label{s:toric}

Given a lattice polytope $\Delta_n$, it is standard to construct a compact toric variety $X(\Delta_n)$ of complex dimension $n$ (cf.~\cite{fulton} and an excellent recent textbook \cite{CLS}).
Briefly, one constructs the (inner) normal fan $\Sigma(\Delta)$ consisting of cones whose apex is an interior point - which can be chosen to be the origin - with rays extending to the vertices of each face.
That is, $\Sigma(\Delta)$ is constructed from $\Delta$ as the positive hull of the $n$-cone over the faces $F$ of $\Delta$ as follows
\beal{es80a1}
\Sigma = \{
\text{pos}(F) ~:~ F \in \text{Faces}(\Delta)
\}~,~
\eea
where 
\beal{es80a2}
\text{pos}(F) = 
\left\{
\sum_i \lambda_i \underline{v}_i ~:~
\underline{v}_i \in F ~,~ \lambda_i \geq 0
\right\} ~.~
\eea
We can think of $\Delta_n$ as the fan of $n$-cones subtended from the apex at the origin to the proper faces of $\Delta$, where the origin is always a point in our reflexive $\Delta$.

From the fan $\Sigma$, the construction of the compact toric variety $X(\Sigma)$ follows the standard treatment of \cite{fulton}, with each cone giving an affine patch.
The resulting variety $X(\Sigma)$ is not guaranteed to be smooth.
In order to ensure smoothness, we need a notion of regularity.
\begin{definition}
  The polytope $\Delta_n$ and the fan $\Sigma(\Delta_n)$ are called regular if every cone in the fan has generators that form part of a $\mathbb{Z}$-basis.
\end{definition}
One way to check this is to ensure that the determinant of all $n$-tuples of generators of each cone is $\pm 1$.
Note that each generator/ray of the cone is an integer $n$-vector but the number of such generators will always be $\geq n$. We need to choose all possible $n$-tuples and compute the determinant. The key result we will need is that (cf.~\cite{CLS}):
\begin{theorem}\label{thm:delta1} 
The toric variety $X(\Delta_n)$ is smooth iff $\Delta_n$ is regular. 
\end{theorem}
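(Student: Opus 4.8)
The plan is to exploit the locality of smoothness. The toric variety $X(\Sigma)$ is covered by the affine patches $U_\sigma = \mathrm{Spec}\,\IC[\sigma^\vee \cap M]$ attached to each cone $\sigma \in \Sigma(\Delta_n)$, where $M = \mathrm{Hom}(\IZ^n,\IZ)$ is the dual lattice and $\sigma^\vee$ the dual cone. Since it suffices to check the patches of the maximal (full-dimensional) cones, and since a cone is regular iff all its faces are regular (so that regularity, like smoothness, is detected on maximal cones), the statement reduces to a single local claim: the patch $U_\sigma$ is smooth precisely when the primitive ray generators of $\sigma$ extend to a $\IZ$-basis of $\IZ^n$. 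First I would establish this reduction cleanly, noting that $U_\tau$ is an open subset of $U_\sigma$ whenever $\tau$ is a face of $\sigma$.

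For the easy implication (regular $\Rightarrow$ smooth), suppose the $k$ ray generators $v_1,\ldots,v_k$ of a $k$-dimensional cone $\sigma$ form part of a $\IZ$-basis $\{v_1,\ldots,v_n\}$. Applying the corresponding element of $GL_n(\IZ)$, I may assume $v_i = e_i$, so that $\sigma = \mathrm{pos}(e_1,\ldots,e_k)$ and $\sigma^\vee = \mathrm{pos}(e_1,\ldots,e_k,\pm e_{k+1},\ldots,\pm e_n)$. The semigroup $\sigma^\vee \cap M$ is then freely generated, so $\IC[\sigma^\vee \cap M]\cong \IC[x_1,\ldots,x_k,x_{k+1}^{\pm 1},\ldots,x_n^{\pm 1}]$ and hence $U_\sigma \cong \IC^k \times (\IC^*)^{n-k}$, which is manifestly smooth.

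The substantive direction (smooth $\Rightarrow$ regular) is where the real work lies. Here I would focus on a maximal cone $\sigma$ and its distinguished torus-fixed point $x_\sigma \in U_\sigma$, which corresponds to the maximal monomial ideal $\mathfrak{m}$. The dimension of the Zariski tangent space $\dim \mathfrak{m}/\mathfrak{m}^2$ at $x_\sigma$ equals the minimal number of semigroup generators of $\sigma^\vee \cap M$. Since $U_\sigma$ is normal of dimension $n$, smoothness at $x_\sigma$ forces this tangent dimension to be exactly $n$, so the dual semigroup is minimally generated by $n$ elements; dualizing shows $\sigma$ is generated by $n$ rays forming a basis. An equivalent and perhaps more transparent route is through the lattice index: the generators of a full-dimensional $\sigma$ span a sublattice of $\IZ^n$ of index $|\det(v_1,\ldots,v_n)|$, this index is the multiplicity of the quotient singularity $\IC^n/G$ (with $|G|$ equal to the index), and smoothness forces $|G|=1$, i.e.\ $\det = \pm 1$, which is the regularity criterion.

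I expect the main obstacle to be precisely this converse, specifically the identification of $\dim\mathfrak{m}/\mathfrak{m}^2$ at $x_\sigma$ with the minimal generating set of the monoid $\sigma^\vee\cap M$, together with the lemma that a non-regular cone (either non-simplicial, hence with more than $n$ rays that cannot lie in a basis, or simplicial but of index $>1$) necessarily needs more than $n$ such generators. Handling the non-simplicial case uniformly — where the variety fails even to be $\IQ$-factorial — alongside the simplicial-but-non-unimodular case is the delicate point. All of these ingredients are standard and may be imported from \cite{fulton,CLS}.
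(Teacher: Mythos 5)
The paper does not prove this theorem at all: it is quoted as a standard result with the citation ``(cf.~\cite{CLS})'', so there is no internal proof to compare against. Your sketch is precisely the standard textbook argument from those references (reduction to the affine patches $U_\sigma$ of the maximal cones, the easy direction by moving the generators to $e_1,\ldots,e_k$ via $GL_n(\IZ)$, and the converse via the Zariski tangent space at the torus-fixed point, which handles the simplicial and non-simplicial cases uniformly), and it is correct as outlined.
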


The above discussions hold for any convex lattice polytope.
Henceforth, we will concentrate only on reflexive polytopes $\Delta$ with a single internal lattice point, viz., the origin. According to \cite{nill}, we have:
\begin{theorem}
Any reflexive polytope $\Delta$ corresponds to a Gorenstein toric Fano variety $X(\Delta)$.
\end{theorem}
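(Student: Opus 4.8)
The plan is to verify the two defining properties separately---that $X(\Delta)$ is \emph{Fano} (its anticanonical divisor is ample) and that it is \emph{Gorenstein} (its anticanonical divisor is Cartier)---together with completeness, which here is automatic: since the origin is an interior point of $\Delta$, the fan $\Sigma(\Delta)$ of cones $\text{pos}(F)$ covers all of $\IR^n$, so $X(\Delta)$ is complete. Throughout I would work with the torus-invariant prime divisors $D_\rho$, one for each ray $\rho$ of $\Sigma(\Delta)$, and use the standard fact (cf.~\cite{fulton,CLS}) that the anticanonical class is represented by $-K_X = \sum_\rho D_\rho$.

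The first step is a combinatorial observation that pins down the rays: every vertex of a reflexive $\Delta$ is primitive. Indeed, if a vertex $v$ equalled $k v'$ for a lattice point $v'$ and an integer $k \ge 2$, then $v'$ would lie in $\Delta$, hence on some facet $F'$ whose supporting hyperplane---by reflexivity---has the form $\{x : u_{F'}\cdot x = -1\}$ with $u_{F'} \in \Delta^\circ$; this would force $u_{F'}\cdot v = k\,(u_{F'}\cdot v') = -k \le -2$, contradicting $v \in \Delta$. Consequently the rays of $\Sigma(\Delta)$ are generated exactly by the vertices of $\Delta$, and each facet $F$ furnishes a lattice vector $u_F \in \Delta^\circ$ with $u_F \cdot v = -1$ for every vertex $v$ of $F$.

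For the Gorenstein property I would recall that a Weil divisor $\sum_\rho a_\rho D_\rho$ is Cartier precisely when, on each maximal cone $\sigma$ with ray generators $v_i$, there is a lattice functional $m_\sigma$ satisfying $m_\sigma \cdot v_i = -a_\rho$ for all $i$. The maximal cones of $\Sigma(\Delta)$ are the $\text{pos}(F)$ over facets $F$, and for the anticanonical divisor (all $a_\rho = 1$) the previous step supplies exactly such data, namely $m_\sigma = u_F$. Hence $-K_X$ is Cartier and $X(\Delta)$ is Gorenstein.

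Finally, for the Fano property I would identify the polytope of the (now Cartier) anticanonical divisor as $\{m : m\cdot v \ge -1 \text{ for all vertices } v\} = \Delta^\circ$, and observe that its support function takes the value $u_F$ on the cone $\text{pos}(F)$. Ampleness of $-K_X$ is equivalent to strict convexity of this support function, which holds because the $u_F$ are the distinct vertices of the full-dimensional polytope $\Delta^\circ$ and $\Sigma(\Delta)$ is precisely its normal fan. I expect this last identification---matching the reflexivity data (facets of $\Delta$ at lattice distance one, integrality of $\Delta^\circ$) with the strict convexity, hence ampleness, statement---to be the point requiring the most care; the Gorenstein conclusion is then essentially a byproduct of the very same supporting-hyperplane vectors $u_F$.
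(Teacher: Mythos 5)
The paper does not actually prove this statement: it is quoted as a known result with a citation to \cite{nill}, so there is no internal argument to compare against. Your proposal supplies the standard textbook proof (essentially the one in \cite{nill,CLS}): automatic normality and completeness from the fan covering $\IR^n$, the Cartier property of $-K_X=\sum_\rho D_\rho$ from the lattice vectors $u_F\in\Delta^\circ$ supporting the facets at lattice distance one, and ampleness from strict convexity of the resulting support function, i.e.\ from $\Sigma(\Delta)$ being the normal fan of the full-dimensional lattice polytope $\Delta^\circ$. The overall structure is correct and this is the right level of argument for the statement.

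One step as written is wrong, though easily repaired. In your primitivity lemma you assert that if a vertex $v$ equals $kv'$ with $k\ge 2$, then $v'\in\Delta$ ``hence on some facet $F'$.'' That inference fails: since the origin is an interior point of $\Delta$, the point $v'=\tfrac{1}{k}v$ lies in the topological \emph{interior} of $\Delta$ and is on no facet at all, so you cannot evaluate a facet functional at $-1$ on it. Two immediate fixes: (i) $v'$ would then be a nonzero interior lattice point, contradicting the standing assumption that the origin is the unique interior lattice point of a reflexive $\Delta$; or (ii) more directly, take the facet $F$ containing $v$ itself, so that $u_F\cdot v=-1$ forces $u_F\cdot v'=-1/k\notin\IZ$, contradicting $u_F, v'\in\IZ^n$. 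With either repair the rays of $\Sigma(\Delta)$ are generated by the vertices of $\Delta$, and the rest of your argument goes through as stated.
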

Note that $X(\Delta)$ is in general singular and as we will see the smooth ones are very rare and we will need to perform desingularization.
Indeed, when $\Delta_n$ is regular, $X(\Delta_n)$ is a smooth toric Fano $n$-fold and it is a manifold of positive curvature \cite{bat,ewald,WW,1998math......1107B}.
In the general case, by Gorenstein Fano \cite{nill} we mean that $X(\Delta)$ is a normal toric variety whose anticanonical divisor is ample and Cartier.

\subsection{Classification of $\Delta_n$} 

\begin{table}[ht!!]
\centering
  \begin{tabular}{c||c|c|c|c|c}
    \mbox{dimension} & 1 & 2 & 3 & 4 & \ldots \\ \hline
    \mbox{\# Reflexive Polytopes} & 1 & 16 & 4319 & 473,800,776 & \ldots \\
    \hline
    \mbox{\# Regular} & 1 & 5  & 18 & 124 & \ldots
  \end{tabular}
  \caption{{\sf {\small The number of inequivalent reflexive polytopes in different dimensions.
        The reflexive polytopes all give toric Gorenstein Fano varieties and the regular ones, smooth toric Fano $n$-folds.}}
  \label{tcyno}
  }
\end{table}

\begin{figure}[t!!]
\begin{center}
\resizebox{0.95\hsize}{!}{
  \includegraphics[trim=0mm 0mm 0mm 0mm, width=9in]{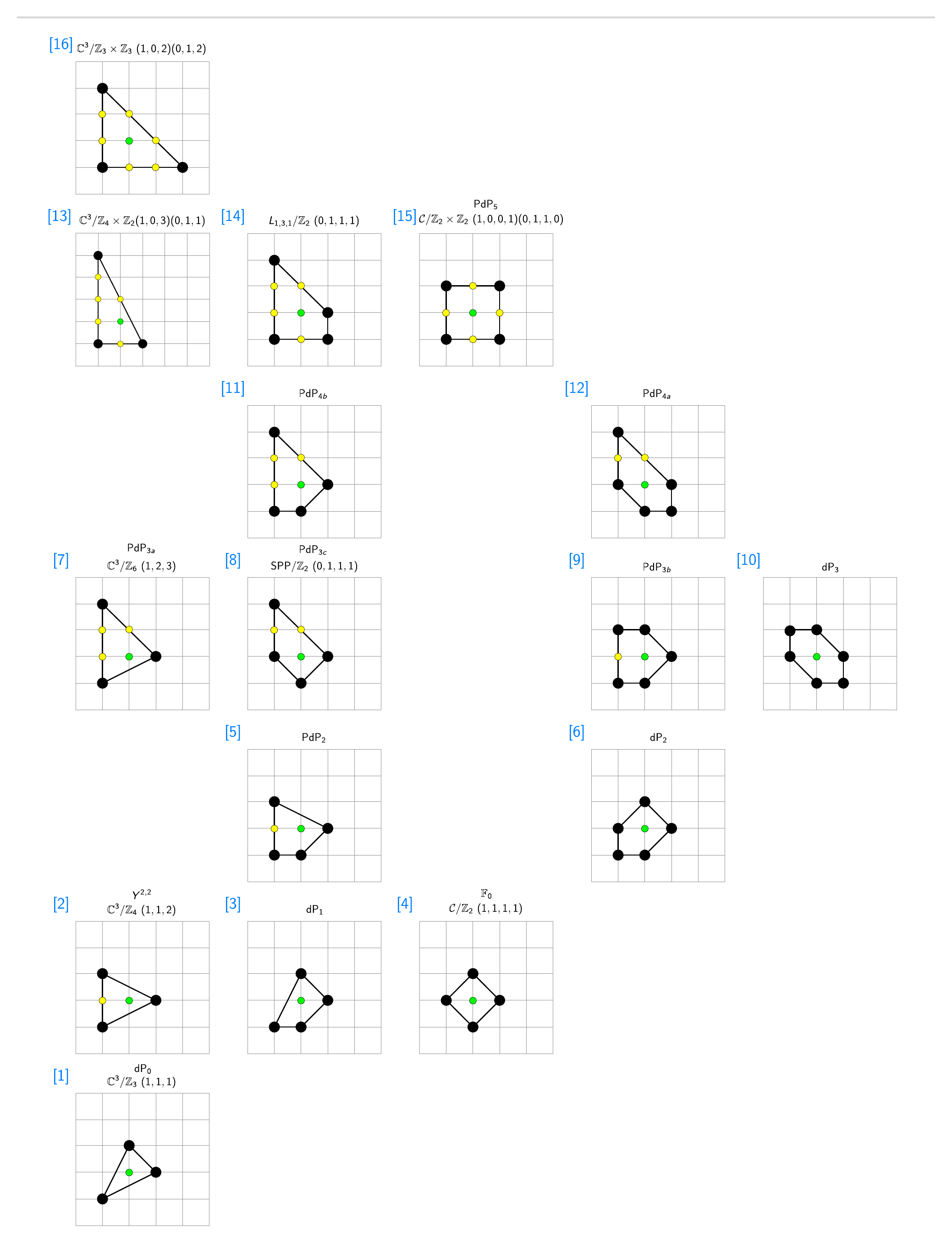}
}
\caption{{\sf {\small
  The 16 inequivalent reflexive polytopes $\Delta_2$ in dimension 2.
  We see that, in particular, this includes the toric del Pezzo surfaces, numbers 1, 4, 3, 6, 10, which are the 5 smooth ones.
  For the naming of these polytopes we refer to the Calabi-Yau cone $\cX = \cC(X(\Delta))$.
  The middle 4 are self-dual while the other 6 polar dual pairs are drawn mirror-symmetrically about this middle line.
}}
\label{f:n=2}}
\end{center}
\end{figure}

Clearly, two polytopes are equivalent if there is a $GL(n;\mathbb{Z})$ matrix relating the vertices of one to the other.
The classification of $GL(2;\IZ)$-inequivalent reflexive polytopes is a non-trivial story.
In dimension 2, there are only 16 such objects and are known classically (cf.~\cite{DW}).
We illustrate these in Figure \ref{f:n=2}, where the polygons have been organized judiciously: the area increases as one goes up and the number of extremal vertices increases as one goes right.
The middle 4 are self-dual while the other dual pairs are drawn mirror-symmetrically about this middle line.
Moreover, the 16 include the 5 toric diagrams for the smooth toric Fano 2-folds, viz., the toric del Pezzo surfaces: $\IP^2$, $\IF_0 = \IP^1 \times \IP^1$, as well as $\IP^2$ blown up at up to 3 generic points, which are shown, respectively, as numbers 1, 4, 3, 6, 10 in the figure.
The remaining 11 are singular but Gorenstein and include orbifolds which can be resolved.
The naming of these polytopes refers to the Calabi-Yau cone $\cX = \cC(X(\Delta))$; for example, number 16 is the $\IZ_3 \times \IZ_3$ orbifold of $\IC^3$.
There is a slight abuse of notation, which context will make clear, that $dP_n$ refers both to the $n$-th del Pezzo surface and the affine Calabi-Yau cone over it.

\begin{figure}[ht!!]
\begin{center}
\resizebox{0.85\hsize}{!}{
  \includegraphics[trim=0mm 0mm 0mm 0mm, width=8in]{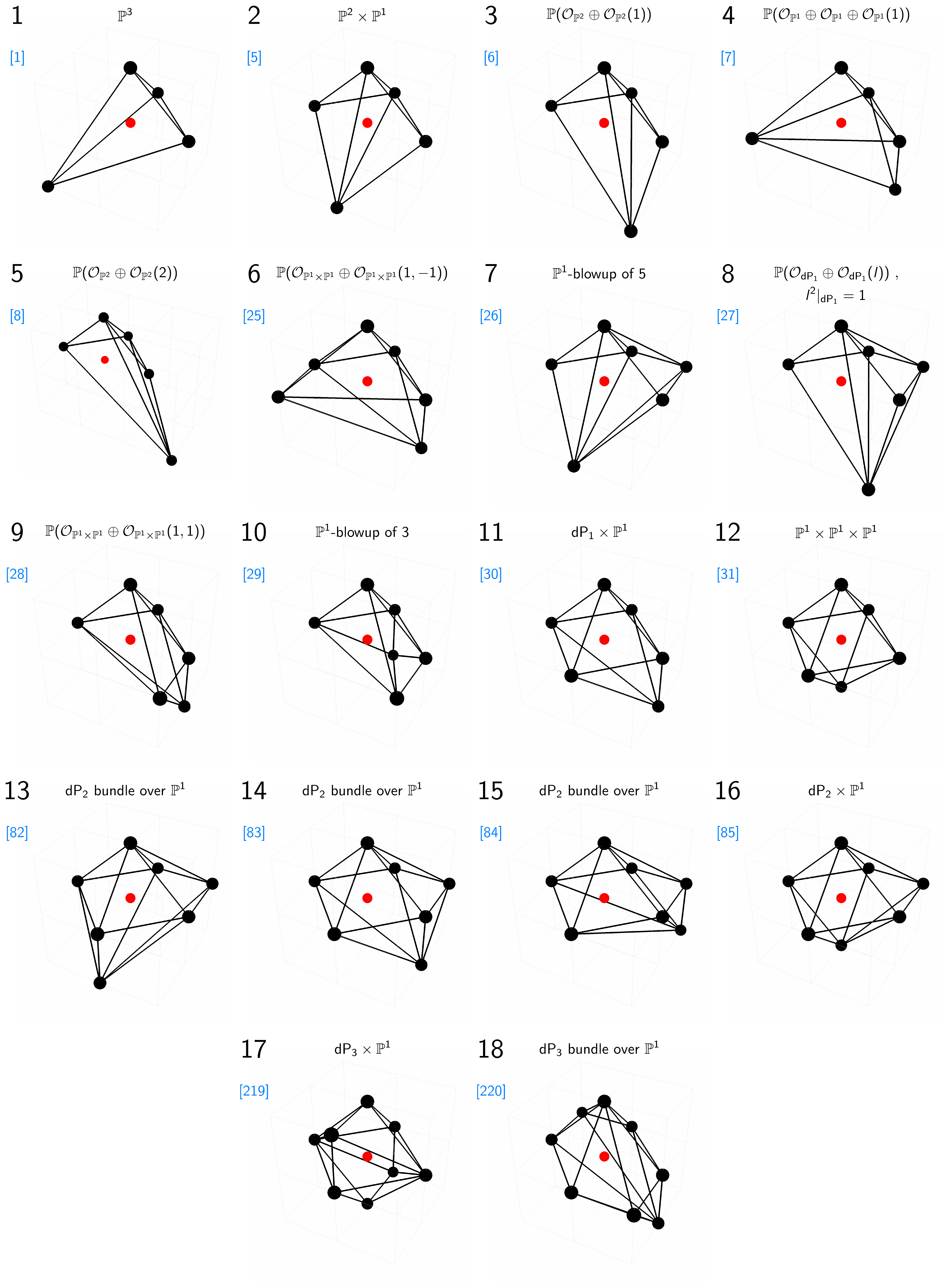}
}
\caption{{\sf {\small
 The 18 reflexive polytopes in dimension 3 corresponding to smooth Fano varieties.
 All of these have a single interior point $(0,0,0)$ which is shown in red.
 The notation $dP_n$ refers to the del Pezzo surface of degree $9-n$, i.e., formed by $\IP^2$ blown up at $n$ generic points. \label{f:n=3smooth}}}}
\end{center}
\end{figure}

In dimension 3, there is already a multitude of reflexive polyhedra.
The classification of these was undertaken in \cite{Kreuzer:1998vb} and 4319 were found of which only 18 are smooth, corresponding to the 18 smooth Fano toric threefolds \cite{bat}.
Of course, it is impossible to draw all 4319 polyhedra here.
For reference, we draw the 18 smooth cases in Figure \ref{f:n=3smooth}.
Moreover, there are 79 which are self-dual.
In the figure and hence forth, the blue labels and identification numbers of the reflexive polyhedra follow the labeling order \footnote{
  Beware that in SAGE, the index of lists begin with 0, thus to access the SAGE data, one actually uses index $0, \ldots, 4318$.
  } of the Sage \cite{sage} database, $1, \ldots, 4319$.
We refer the reader to appendix \ref{appalg} for further discussions on SAGE implementations.

In dimension 4, the number of reflexive $\Delta_4$ increases dramatically to $473,800,776$, the fruit of a tour-de-force computer search by \cite{Kreuzer:1995cd,Kreuzer:2000xy}; of these, 124 give smooth Fano toric 4-folds (and have been explored in the context of heterotic string theory in \cite{He:2009wi}).
In general, it is known that the number of inequivalent reflexive polytopes up to $GL(2;\IZ)$ in each dimension is finite but above $n = 4$, the number is not known.
We summarize the remarkable sequence in \tref{tcyno} (cf.~\cite{oeis}).

\subsection{Calabi-Yau Constructions \label{scy}} 

Stemming from a reflexive polytope are two Calabi-Yau constructions:
\begin{enumerate}
\item
  The theorem of Batyrev-Borisov \cite{bat,BB} and the consequent constructions of Kreuzer-Skarke \cite{Kreuzer:1995cd,Kreuzer:1998vb,Kreuzer:2000xy,Kreuzer:2002uu} is the fact that to a reflexive $\Delta_n$, one can associate a smooth Calabi-Yau $(n-1)$-fold $\cX_{n-1}$ given by the vanishing set of the polynomial
  \footnote{
    The famous quintic in $\IP^4$, for instance, is thus realized in the toric variety $\IP^4$ as follows.
    We have $x_{1,\ldots,5}$ as the (homogeneous) coordinates of $\IP^4$ and can think of the reflexive polytope $\Delta$ as having vertices
    $\underline{m}_1 = (-1,-1,-1,-1)$,
    $\underline{m}_2 = (~4,-1,-1,-1)$,
    $\underline{m}_3 = (-1,~4,-1,-1)$,
    $\underline{m}_4 = (-1,-1,~4,-1)$, and
    $\underline{m}_5 = (-1,-1,-1,~4)$
    as well as all the points interior to these extremal points, including, for example, $(0,0,0,0)$.
    The dual polytope $\Delta^\circ$ is easily checked to have vertices
    $\underline{v}_1 = (1, 0,0,0)$,
    $\underline{v}_2 = (0,1,0,0)$,
    $\underline{v}_3 = (0,0,1,0)$,
    $\underline{v}_4 = (0,0,0,1)$, and
    $\underline{v}_5 = (-1,-1,-1,-1)$.
    Thus each lattice point $\bold{m} \in \Delta$ contributes a quintic monomial in the coordinates $x_{1,\ldots,5}$ to the defining polynomial.
  }
\begin{equation}
\label{deq}
0 = \sum\limits_{\underline{m} \in \Delta} C_{\underline{m}}
\prod\limits_{\rho = 1}^k x_\rho^{(\underline{m} \cdot \underline{v_\rho})  + 1} \ ,
\end{equation}
where $C_{\underline{m}} \in \mathbb{C}$ are numerical coefficients parametrizing the complex structure of $\cX_{n-1}$, the dot being the usual vector dot-product, $x_{\rho = 1, \ldots, k}$ are the projective coordinates of $X(\Delta)$
and $\underline{v}_{\rho = 1, \ldots, k}$ are the vertices of $\Delta^\circ$ ($k$ being the number of vertices of $\Delta^\circ$, or equivalently, the number of facets in the original polytope $\Delta$).

\item
  Alternatively, we can think of the vertices of $\Delta$ as generating a {\it rational polyhedral cone} $\sigma$ with the apex at the origin as discussed in \eqref{es80a2}.
  That is, while our reflexive polytope lives in $\mathbb{Z}^n$, we shall take a point $(0,0,\ldots,0) \in \mathbb{Z}^{n+1} =: N$ and consider the cone generated by the vectors $\underline{u}_i$ from this apex to the vertices of $\Delta$:
\beal{es40a1}
\sigma = \left\{
    \sum\limits_{u_i} \lambda_{i} \underline{u}_i | \lambda_{u_i} \ge 0
    \right\} \subset N_{\mathbb{R}} := N \otimes_{\mathbb{Z}} \mathbb{R} \ .
\eea
  The dual cone then lives in $M_{\mathbb{R}} := M \otimes_{\mathbb{Z}} \mathbb{R}$ with $M := \hom(N, \mathbb{Z})$, as
\beal{es40a2}
    \sigma^\vee = \left\{
    \underline{m} \in M_{\mathbb{R}}
    | \underline{m} \cdot \underline{u} \ge 0 \quad \forall u \in \sigma
    \right\} \ . 
\eea
  Subsequently, we can define an affine algebraic variety $\cX_{n+1}$ associated to this data by taking the maximal spectrum of the group algebra generated by the lattice points which the dual cone covers in $M$: 
  \begin{equation}
    \cX_{n+1} \simeq \mbox{Spec}_{{\rm Max}}\left( \mathbb{C}[\sigma^\vee \cap M]  \right) \ .
  \end{equation}
  The fact that in $N$ the end-points of the vector generators of the cone are co-hyperplanar ensures that $\cX_{n+1}$ is a Gorenstein singularity, and thus admits a resolution to a Calabi-Yau manifold.
\end{enumerate}

\begin{figure}[h!!!]
\begin{center}
\resizebox{1\hsize}{!}{
  \includegraphics[trim=0mm 0mm 0mm 0mm, width=10in]{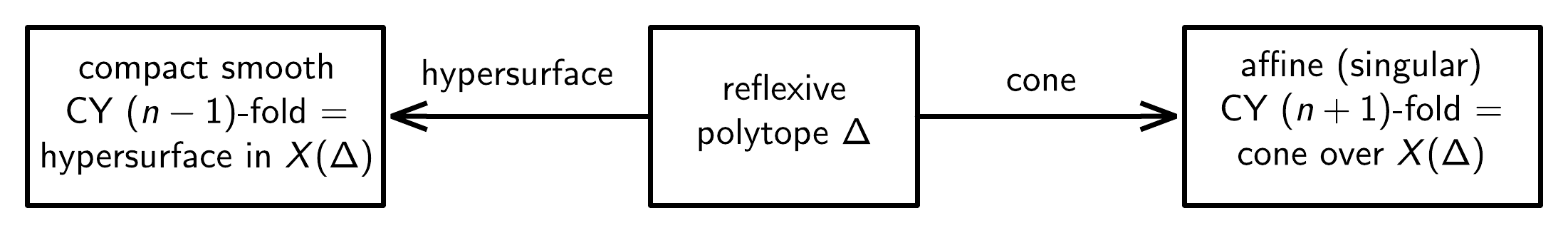}
}
\caption{{\sf {\small
Two different constructions of Calabi-Yau varieties from a reflexive polytope $\Delta_n$. In this work, we will concentrate on non-compact Calabi-Yau cones over $X(\Delta)$.
\label{fsummaryplot}
}}}
\end{center}
\end{figure}

We summarize the above two constructions in \fref{fsummaryplot}.
While there has been a host of activity following Batyrev-Borisov and Kreuzer-Skarke in studying the distribution of the topological quantities of compact Calabi-Yau hypersurfaces \cite{Kreuzer:2002uu,Candelas:2008wb,Anderson:2016cdu,Johnson:2014xpa,Gray:2014fla,Candelas:2016fdy,He:2015fif,Altman:2014bfa,cy3online,oldcy3,Braun:2011ik},
in this paper we will be exclusively concerned with the arrow to the right in the figure and study the affine Calabi-Yau $(n+1)$-fold $\cX$, as a {\it complex} cone over the Gorenstein Fano variety $X(\Delta_n)$ where $\Delta_n$ is reflexive.
For convenience, we will use $\Delta_{n-1}$ so that $\cX$ is of complex dimension $n$.
A key fact is that $\cX$ is itself a {\it real} cone over a Sasaki-Einstein manifold $Y$ of (real) dimension $2n - 1$.
We will study in detail the volumes of these manifolds in relation to the various topological quantities of $X(\Delta)$ and find surprising patterns.

\subsection{Connection to Quiver Gauge Theories \label{ss2}}

\begin{table}[h!!!]
  \centering
  \begin{tabular}{c|c|c|c}
    & Brane Configuration & T-Duality & D-Brane Probe \\ \hline 
    a)
    &
    \begin{tabular}{c|cccccccccc}
      \; & 0 & 1 & 2 & 3 & 4 & 5 & 6 & 7 & 8 & 9 
      \\
      \hline
      D5 & $\times$ & $\times$ & $\times$ & $\times$ &  $\cdot$ &  $\times$ &  $\cdot$ &  $\times$   &  $\cdot$ &  $\cdot$
      \\
      NS5 & $\times$ & $\times$ & $\times$ & $\times$ & \multicolumn{4}{c}{----- \ $\Sigma$ \ -----} &  $\cdot$ &  $\cdot$
    \end{tabular}
    &
    $\stackrel{\mbox{2 times}}{\longleftrightarrow}$
    &
    D3 $\perp$ CY3  
    \\ \hline
    b)
    &
    \hspace{-0.23cm}
    \begin{tabular}{c|cccccccccc}
      \; & 0 & 1 & 2 & 3 & 4 & 5 & 6 & 7 & 8 & 9 
      \\
      \hline
      D4 & $\times$ & $\times$ & $\cdot$ & $\times$ &  $\cdot$ &  $\times$ &  $\cdot$ &  $\times$   &  $\cdot$ &  $\cdot$
      \\
      NS5 & $\times$ & $\times$ & \multicolumn{6}{c}{----------- \ $\Sigma$ \ -----------} &  $\cdot$ &  $\cdot$
    \end{tabular}
    &
    $\stackrel{\mbox{3 times}}{\longleftrightarrow}$
    &
    D1 $\perp$ CY4  
    \\ \hline
    c)
    &
        \hspace{-0.38cm}
    \begin{tabular}{c|cccccccccc}
      \; & 0 & 1 & 2 & 3 & 4 & 5 & 6 & 7 & 8 & 9 
      \\
      \hline
      D3 & $\cdot$ & $\times$ & $\cdot$ & $\times$ &  $\cdot$ &  $\times$ &  $\cdot$ &  $\times$   &  $\cdot$ &  $\cdot$
      \\
      NS5 & \multicolumn{8}{c}{--------------- \ $\Sigma$ \ ---------------} &  $\cdot$ &  $\cdot$
    \end{tabular}
    &
    $\stackrel{\mbox{4 times}}{\longleftrightarrow}$
    &
    D(-1) $\perp$ CY5  
    \\
  \end{tabular}
  \caption{{\sf {\small 
        The various brane configurations for brane tilings and how, under T-duality, they map to D-branes probing affine Calabi-Yau cones in various dimensions.
        (a)
        Brane tilings where D5-branes are suspended between a NS5-brane that wraps a holomorphic surface $\Sigma$. The D5 and NS5-branes meet in a $T^2$ inside $\Sigma$. Under thrice T-duality, the D5-branes are mapped to D3-branes probing CY3;
        (b)
        Brane brick models where D4-branes are suspended between a NS5-brane that wraps a holomorphic 3-cycle $\Sigma$. The D4 and NS5-branes meet in a $T^3$ inside $\Sigma$.
        Under T-duality, the D4-branes are mapped to D1-branes  probing CY4;
        (c)        
        Brane hyper-brick models where Euclidean D3-branes are suspended between a NS5-brane that wraps a holomorphic 4-cycle $\Sigma$. The D3 and NS5-branes meet in a $T^4$ inside $\Sigma$.
        Under T-duality, the D3-branes are mapped to D(-1)-branes probing CY5.        
    }}
  }
\label{t:branes}
\end{table}

 For the last two decades, remarkable progress both in string theory and mathematics has been achieved in connection to the AdS/CFT correspondence \cite{Maldacena:1997re}. Maldacena's discovery that $4d$ $\mathcal{N}=4$ supersymmetric Yang-Mills theory relates to string theory on $AdS_5 \times S^5$ led to numerous discoveries related to conformal field theories and the associated geometries. Soon after the discovery, it was realized \cite{Morrison:1998cs,Acharya:1998db} that the correspondence extends to theories with less supersymmetry. When one replaces the $S^5$ sphere with a Sasaki-Einstein 5-manifold $Y$, the supersymmetry is broken down to $\mathcal{N}=1$. A large class of examples was born: $4d$ $\mathcal{N}=1$ superconformal field theories corresponding to IIB string theory in an $AdS_5 \times Y$ background. Classes of Sasaki-Einstein 5-manifolds such as $Y^{p,q}$ and $L^{p,q,r}$ provide a rich set of theories to study the correspondence.

 These $4d$ $\mathcal{N}=1$ theories can be thought of as worldvolume theories living on D3-branes that probe a Calabi-Yau 3-fold singularity. If the Calabi-Yau 3-fold is toric, the $4d$ $\mathcal{N}=1$ theories can be described in terms of a bipartite periodic graph on a 2-torus known as dimers \cite{1997AnIHP..33..591K,2003math.....10326K} and \textit{brane tilings} \cite{Hanany:2005ve,Franco:2005rj,Feng:2005gw}. These bipartite graphs represent a type IIB brane configuration of D5-branes suspended between a NS5-brane that wraps a holomorphic surface $\Sigma$ originating from the toric geometry. 
 The Newton polynomial $P(x,y)$ of the toric diagram coming from the probed toric Calabi-Yau 3-fold defines the holomorphic surface $\Sigma: P(x,y)=0$.
 This IIB brane configuration is T-dual to the probe D3-branes at the Calabi-Yau singularity.
\tref{t:branes}, part (a) summarizes the type IIB brane configuration. 
Brane tilings provide a natural geometric interpretation of gauge theory phenomena such as Seiberg duality in $4d$ \cite{Seiberg:1994pq,Feng:2000mi,Franco:2005rj}.
  
Non-compact toric Calabi-Yau 4-folds attracted much interest in relation to M-theory. 
When M2-branes probe a toric Calabi-Yau 4-fold, the worldvolume theory of the M2-branes is described by a $3d$ $\mathcal{N}=2$ Chern-Simons theory \cite{Bagger:2006sk,Gustavsson:2007vu,Aharony:2008ug}. 
The probed Calabi-Yau 4-folds have a base which is a Sasaki-Einstein 7-manifold \cite{Martelli:2008rt}. 
A generalized brane tiling that encodes the levels of the $3d$ $\mathcal{N}=2$ Chern-Simons theories was proposed in \cite{Hanany:2008cd,Hanany:2008fj}.

More recently, toric Calabi-Yau 4-folds appeared in the context of \textit{brane brick models} and $2d$ $(0,2)$ quiver gauge theories \cite{Franco:2015tna,Franco:2015tya,Franco:2016qxh}. When D1-branes probe a toric Calabi-Yau 4-fold singularity, the worldvolume theory on the D1-branes is a $2d$ $(0,2)$ theory. 
Brane brick models are brane configurations consisting of D4-branes suspended between a NS5-brane that wraps a holomorphic 3-cycle $\Sigma$. This holomorphic 3-cycle derives from the zero locus of the Newton polynomial of the toric diagram characterizing the toric Calabi-Yau 4-fold. 
The brane configuration is summarized in \tref{t:branes}, part (b). 
In \cite{Franco:2016nwv}, brane brick models were used to provide a brane realization of the recently discovered correspondence between $2d$ $(0,2)$ theories known as Gadde-Gukov-Putrov triality \cite{Gadde:2013lxa}. 

In \cite{Franco:2016tcm}, it was argued from the perspective of the Hori-Vafa mirror of a Calabi-Yau $n$-fold \cite{Hori:2000kt},
that there exists a natural generalization of Seiberg duality in $4d$ and Gadde-Gukov-Putrov triality in $2d$ for theories in $0d$. A new correspondence, called quadrality, relates different $0d$ $\mathcal{N}=1$ gauged matrix models that arise as worldvolume theories of D(-1)-branes probing non-compact toric Calabi-Yau 5-folds. The brane realization is known as \textit{brane hyper-brick models} \cite{Franco:2016tcm}; cf.~\tref{t:branes}, part (c).

In summary, toric Calabi-Yau $n$-folds have appeared in connection to quiver gauge theories in various dimensions.
From this panorama of correspondences, it is tempting to ask whether there are any further geometrical features of Calabi-Yau $n$-folds that can teach us more about quiver gauge theories in various dimensions.
In order to search for these properties, we restrict ourselves to toric Calabi-Yau $n$-folds that exist in every dimension $n \ge 1$, namely the ones which arise from reflexive polytopes: $\cX_n = \cC(X(\Delta_{n-1}))$. This class of toric Calabi-Yau $n$-folds is by far the largest family studied in the literature and is a suitable sample set for our analysis in this paper.

Our strategy is to study the connection between the topology and metric geometry of the Calabi-Yau variety $\cX$.
The toric geometry of the base $X(\Delta)$ -- the compact toric Gorenstein Fano $(n-1)$-fold --  will furnish us with the convenience of extracting the topological data while the Sasaki-Einstein geometry of the base $Y$ will give us the relevant metrical information.

\section{Topological Quantities of $X(\Delta)$ \label{stop}}
We begin by concentrating on the compact toric variety $X(\Delta)$.
These, as discussed in sections \sref{s:toric} and \sref{scy}, can be constructed from a reflexive lattice polytope $\Delta$. The following section describes topological quantities that can be computed from $\Delta$ such as the Euler number and Chern classes\footnote{
  Note that we are not studying the orbifold or stringy topological quantities associated to the singular variety as has been done in \cite{Batyrev:2016vrl,Dixon:1985jw,gottsche}.
}
of the complete resolution $\widetilde{X(\Delta)}$.

\subsection{Triangulations and the Euler Number \label{stopo}}
For the case when the toric variety $X(\Delta)$ is smooth, it is straightforward to compute its Betti numbers from standard formulae \cite{fulton}.
Unfortunately, as mentioned in section \sref{s:toric}, when $\Delta$ is reflexive, the corresponding toric varieties $X(\Delta)$ are not guaranteed to be smooth.
In fact, only a tiny fraction of the reflexive polytopes actually give directly smooth $X(\Delta)$ corresponding to the regular polytopes, as we saw in Table \ref{tcyno}.

We are therefore compelled to consider resolutions under triangulations of $\Delta$. We concentrate on triangulations of $\Delta$ which guarantee that each cone in $\Delta$ is regular.
For reflexive polytopes $\Delta$ , we consider a special type of triangulation known as {\bf FRS triangulations} \cite{Altman:2014bfa}.
FRS stands for the following:
\begin{itemize}
\item \textit{Fine:} all lattice points of $\Delta$ are involved in the triangulation;
\item \textit{Regular:} as discussed in Theorem \ref{thm:delta1}, ensures that $\Delta$ is regular and $X(\Delta)$ is smooth;
\item \textit{Star:} the origin, which is taken to be the interior lattice point of the polytope $\Delta$, is the apex of all the triangulated cones.
\end{itemize}
We point out that our requirement of {\it regular} is actually stronger than was used in \cite{Altman:2014bfa} and leave a detailed discussion of this as well as algorithmic implementation to appendix \sref{appalg}.
Note that for regularity to hold, the generators of the simplicial cones should give determinants $\pm 1$.

Under FRS triangulation of $\Delta$, we achieve a complete resolution $\widetilde{X(\Delta)}$ whose $n$-cones are all regular.  
Note that all cones in $\widetilde{X(\Delta)}$ under the FRS triangulation of a reflexive polytope $\Delta$ are not only regular but also simplicial.
The FRS triangulation of a reflexive polytope can be obtained using a simple but efficient trick from \cite{Braun:2011ik,Long:2014fba,Altman:2014bfa}. One first ignores the single interior point that forms the origin of all the simplicial cones and applies a triangulation of the boundary points of the polytope.
When one then adds lines from the boundary triangles to the origin at the interior point, one obtains the simplicial cones that form the FRS triangulation of $\Delta$.

Now, it is an important result \cite{bat,nill} that
\begin{theorem}
For $n \leq 3$, any $n$-dimensional Gorenstein toric Fano variety admits a coherent crepant resolution.
\end{theorem}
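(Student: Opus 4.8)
The plan is to translate the existence of a coherent crepant resolution into a purely combinatorial statement about subdividing the face fan $\Sigma(\Delta)$, and then to exploit the low dimension. Any toric resolution $\pi:\widetilde{X(\Delta)}\to X(\Delta)$ is given by a subdivision of $\Sigma(\Delta)$, and such a resolution is \emph{crepant} precisely when every ray added in the subdivision is generated by a lattice point lying on the boundary $\partial\Delta$. This is where reflexivity enters: each facet $F$ of $\Delta$ lies on an affine hyperplane $\{\underline{v}\in\IR^n : \langle \underline{v},\underline{u}_F\rangle = 1\}$ for a primitive dual vector $\underline{u}_F$, so every boundary lattice point sits at lattice height $1$ and contributes zero discrepancy. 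Thus I would first record that a \textbf{fine star} subdivision of $\Sigma(\Delta)$ -- coning the origin over a triangulation of $\partial\Delta$ that uses all boundary lattice points -- is automatically crepant, so that the only remaining conditions are smoothness (each maximal cone unimodular) and coherence (the subdivision is regular, hence projective over $X(\Delta)$).

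The crux is then a dimensional reduction. A maximal cone is $\mathrm{pos}(\sigma)$ for a boundary simplex $\sigma=\mathrm{conv}(\underline{v}_1,\dots,\underline{v}_n)$ on a facet $F$; since $F$ is at height $1$, the criterion of Theorem \ref{thm:delta1} shows the cone is smooth iff $|\det(\underline{v}_1,\dots,\underline{v}_n)|=1$, which holds iff $\sigma$ is unimodular in the affine lattice of $F$. Hence a smooth crepant resolution exists iff each facet admits a unimodular triangulation, compatibly glued and coherent. The decisive observation is that for $n\le 3$ every facet $F$ has $\dim F = n-1\le 2$. For $n=1,2$ this is trivial (facets are points or segments, and segments triangulate uniquely into unit intervals). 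For $n=3$ the facets are two-dimensional lattice polygons, and here I would invoke the classical planar fact -- a consequence of Pick's theorem -- that \emph{every} fine triangulation of a lattice polygon has all triangles of normalized area $1$, i.e.\ unimodular. So each facet automatically possesses a unimodular triangulation.

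For gluing and coherence, compatibility across shared faces is free: faces of $F$ are at most $1$-dimensional and their fine triangulations (into unit segments) are unique, so the per-facet triangulations agree on overlaps and assemble into a fine triangulation of $\partial\Delta$. To secure coherence I would not choose the facet triangulations independently but produce a single global \emph{regular} triangulation at once: a generic lifting/height function on the set of all boundary lattice points yields a regular (hence coherent, projective) fine triangulation, whose restriction to each two-dimensional facet is again fine and therefore unimodular by the Pick argument. Coning to the origin then gives the desired FRS subdivision and a projective crepant resolution $\widetilde{X(\Delta)}$.

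The genuine obstacle is forcing a \emph{single} subdivision to be simultaneously fine, regular \emph{and} unimodular; regularity is cheap (generic heights), so the real content is unimodularity, which is exactly what the two-dimensionality of the facets delivers through Pick's theorem. I expect this to be the heart of the argument, and it is also what makes the hypothesis sharp: for $n\ge 4$ the facets become three-dimensional lattice polytopes, which may contain empty non-unimodular simplices (for instance Reeve tetrahedra of arbitrarily large volume), so fine triangulations need not be unimodular and the crepant resolution can genuinely fail to exist. This pinpoints why the statement is restricted to $n\le 3$.
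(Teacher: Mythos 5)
Your argument is correct, but note that the paper does not actually prove this theorem: it is quoted from the cited literature (Batyrev; Nill), so there is no in-paper proof to compare against. Your route is the standard one behind those references: crepancy forces all new rays to pass through boundary lattice points of the reflexive polytope (all of which sit at height $1$ over the origin with respect to each facet's supporting hyperplane), smoothness of the coned-off maximal cones reduces to unimodularity of the induced triangulation of each facet in its affine lattice, and for $n\le 3$ the facets are lattice polygons, where Pick's theorem guarantees that \emph{every} fine triangulation is unimodular. Your explanation of why the hypothesis $n\le 3$ is sharp (Reeve simplices in $3$-dimensional facets) is also the right one. The only point to tighten is the coherence step: a \emph{generic} height function yields a regular triangulation but not automatically a \emph{fine} one (a point can lie above the lower hull of the others and fail to appear as a vertex), so you should instead invoke a construction that is guaranteed to be both fine and regular --- e.g.\ a pulling triangulation, or a perturbed Delaunay lifting $h(p)=|p|^2$ --- whose restriction to each facet is then fine and hence unimodular. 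This is a standard repair and does not affect the validity of the overall argument.
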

What this means is that our 16 polygons and 4319 polyhedra all have FRS triangulations that completely resolve the singularity, as we can explicitly see case by case.
However, for higher dimensions, we are no longer guaranteed that a smooth $\widetilde{X(\Delta_{n \ge 4})}$ actually exists.
Henceforth, for $n = 4$, we will restrict to cases where a complete desingularization is possible.

Under regularity and smoothness, the Betti numbers $b_i$ and the Euler number $\chi$ can be readily obtained combinatorially \cite{fulton,CLS}:
\begin{theorem}\label{betti}
After performing FRS triangulation $\widetilde{\Delta_n}$ on the reflexive polytope $\Delta_n$, the smooth, compact toric Fano $n$-fold $\widetilde{X(\Delta_n)}$ has Betti numbers
\beal{es80a10}
b_{2k-1} = 0  ~,~
\qquad
b_{2k} = \sum^{n}_{i=k} (-1)^{i-k} \left(\ba{c} i \\ k \ea\right) d_{n_i} ~,~
\eea
where $k=0,1,\dots, n$ and $d_j$ is the number of $j$-dimensional cones in $\widetilde{\Delta}$.
In particular, the Euler number is
\beal{es80a15}
\chi(\widetilde{X(\Delta)}) = d_n \ .
\eea
\end{theorem}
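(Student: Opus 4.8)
The plan is to deduce the statement from the standard cohomological theory of smooth complete toric varieties (Danilov--Jurkiewicz), specialised to the fan $\widetilde{\Sigma}$ produced by the FRS triangulation. By the discussion preceding the statement, $\widetilde{\Sigma}$ is a complete, simplicial and regular (unimodular) fan, so $\widetilde{X(\Delta_n)}$ is a smooth compact toric variety. First I would record the orbit--cone correspondence: each $k$-dimensional cone $\sigma$ contributes a torus orbit $O_\sigma\cong(\IC^*)^{n-k}$, so in particular the $d_n$ maximal (top-dimensional) cones are in bijection with the $T$-fixed points $x_\sigma$. This already isolates $d_n$ as the distinguished combinatorial quantity.

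The second step is the vanishing of odd cohomology together with a formula for the even Betti numbers, obtained from an affine cell decomposition. I would pick a generic one-parameter subgroup $\lambda\colon\IC^*\to T$, equivalently a linear functional $\ell$ on $N_\IR$ that is nonzero on every ray. The associated Bia\l ynicki--Birula (Morse) decomposition writes $\widetilde{X(\Delta_n)}=\bigsqcup_\sigma C_\sigma$ as a disjoint union of affine cells $C_\sigma\cong\IC^{\,k(\sigma)}$ indexed by the maximal cones $\sigma$, where the index $k(\sigma)$ counts the rays of $\sigma$ lying on the $\ell$-negative side. Since every cell has even real dimension, the homology is free and supported in even degree, giving at once $b_{2k-1}=0$ and $b_{2k}=\#\{\sigma : k(\sigma)=k\}$.

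The third step is to identify these cell counts combinatorially. The boundary of $\Delta_n$ (seen from the apex) is triangulated into a simplicial $(n-1)$-sphere whose $f$-vector is exactly $f_{i-1}=d_i$, the number of $i$-dimensional cones. The generic $\ell$ induces a shelling order on the maximal cones, and the classical shelling interpretation of the $h$-vector shows $\#\{\sigma:k(\sigma)=k\}=h_k$, where $h_k$ is defined by $\sum_{k}h_k x^{n-k}=\sum_i d_i\,(x-1)^{n-i}$. Extracting the coefficient of the appropriate power of $x$ (routine binomial inversion), together with the Dehn--Sommerville symmetry $h_k=h_{n-k}$ valid for a simplicial sphere (equivalently the $\ell\to-\ell$ symmetry of the paving), reproduces $b_{2k}=h_k=\sum_{i=k}^{n}(-1)^{\,i-k}\binom{i}{k}d_{n-i}$, which is the claimed expression. (As a cross-check one may instead use the Jurkiewicz--Danilov presentation $H^*(\widetilde{X};\IQ)\cong\IQ[x_\rho]/(I_{SR}+J)$ and read off the same Poincar\'e polynomial as the $h$-polynomial.)

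Finally, for the Euler number, since only even Betti numbers survive, $\chi(\widetilde{X(\Delta)})=\sum_k b_{2k}=\sum_k h_k$. Setting $x=1$ in the defining relation of the $h$-vector annihilates every summand except $i=n$, where $(x-1)^{0}=1$, so $\sum_k h_k=d_n$; equivalently $\chi$ simply counts the $T$-fixed points, of which there are $d_n$ by the orbit--cone correspondence. I expect the genuine content, and hence the main obstacle, to lie in the second and third steps: rigorously establishing the affine paving and proving that the cell dimensions are distributed according to the $h$-vector. Once that Morse/shelling input is in place, the binomial bookkeeping and the Euler-number specialisation are immediate.
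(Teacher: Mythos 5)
The paper does not actually prove Theorem \ref{betti}: it is quoted directly from \cite{fulton,CLS}, and your argument is precisely the proof given in those references --- the Bia\l ynicki--Birula affine paving indexed by the maximal cones, the identification of the cell-dimension distribution with the $h$-vector of the boundary triangulation, binomial inversion to get the stated formula (which also implicitly corrects the paper's typo $d_{n_i}\to d_{n-i}$), and the specialisation $\sum_k h_k=d_n$ for the Euler number. Your proof is correct, with one small imprecision worth fixing: for a maximal smooth cone $\sigma=\mathrm{cone}(v_1,\dots,v_n)$, the attracting cell of the fixed point $x_\sigma$ under a generic one-parameter subgroup $u\in N$ has dimension equal to the number of coordinates of $u$ of a given sign when $u$ is expanded in the basis $v_1,\dots,v_n$ (i.e.\ the number of dual-basis functionals $m_i$ with $\langle m_i,u\rangle<0$), not the number of rays $v_i$ on which a fixed functional $\ell$ is negative; with that correction the shelling/$h$-vector count proceeds exactly as you describe.
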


The above formula in fact even further simplifies for reflexive polytopes in dimensions $n=2,3$.
For $n=2$, the number of facets in $\widetilde{X(\Delta)}$ is simply given by the number of perimeter lattice points, $p$.
Note that we distinguish between {\it vertices} and {\it perimeter} lattice points.
Take the example of reflexive polygon number 16 in Figure \ref{f:n=2}, which has 3 vertices and 9 perimeter lattice points.
Hence, we have that
\begin{corollary}
  For all 16 reflexive polygons $\Delta$ in dimension 2, the smooth toric variety $\widetilde{X(\Delta)}$ corresponding to the complete desingularization by FRS triangulation, has Euler number equal to the number of perimeter lattice points, which is in turn the total number of lattice points of $\Delta$ minus 1 (for the unique interior lattice point):
  \[
  \chi(\widetilde{X(\Delta_2)}) = p = (\Delta_2 \cap \IZ^2) - 1 \ .
  \]
\end{corollary}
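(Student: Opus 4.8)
The plan is to reduce everything to the single clean identity $\chi(\widetilde{X(\Delta)}) = d_n$ supplied by Theorem \ref{betti} -- which applies precisely because the \emph{Regular} part of the FRS triangulation renders $\widetilde{X(\Delta_2)}$ smooth -- and then perform a direct combinatorial count of the maximal cones. Specializing to $n=2$, Theorem \ref{betti} gives $\chi(\widetilde{X(\Delta_2)}) = d_2$, the number of two-dimensional (maximal) cones of $\widetilde{\Delta}$, so the entire content of the first equality collapses to showing $d_2 = p$.

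To count $d_2$, I would lean on the three defining features of the FRS triangulation. The \emph{Star} property forces every maximal cone to be the positive hull of the origin together with a one-dimensional cell of the triangulated boundary $\partial\Delta_2$; equivalently, all rays of the fan emanate from the origin out to boundary lattice points, and the two-cones are exactly the angular sectors they subtend. The \emph{Fine} property guarantees that every one of the $p$ perimeter points is used as a ray generator, so $\partial\Delta_2$ is subdivided at precisely these points into $p$ primitive segments. Because $\Delta_2$ is convex with the origin in its interior, the cyclic order of the boundary points as seen from the origin agrees with their order along $\partial\Delta_2$; hence consecutive rays bound exactly one two-cone apiece, and a closed cycle on $p$ vertices has $p$ edges. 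This delivers $d_1 = p$ rays and $d_2 = p$ maximal cones, so that $\chi(\widetilde{X(\Delta_2)}) = p$.

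For the second equality I would simply invoke reflexivity: by the Definition of a reflexive polytope, together with the observation that the origin is the unique interior lattice point, the total count splits as $(\Delta_2 \cap \IZ^2) = p + 1$, which rearranges to $p = (\Delta_2 \cap \IZ^2) - 1$. The existence of the resolution is not at issue, since for $n \le 3$ every Gorenstein toric Fano admits a coherent crepant, hence FRS, resolution by the theorem cited above.

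The step I expect to be the main obstacle is the rigorous identification $d_2 = p$, in particular the claim that the \emph{Fine} condition chops $\partial\Delta_2$ into exactly the $p$ primitive boundary segments with no stray interior edges or unused perimeter points spoiling the bijection between boundary segments and maximal cones. The convexity of $\Delta_2$ and the Star-centering at the origin are what make this transparent in two dimensions, but it is the place where the argument must be pinned down. As a sanity check -- and because the statement is phrased for the explicit list of 16 polygons -- I would verify the count against Figure \ref{f:n=2} case by case, e.g.\ the $\IP^2$ triangle has $p=3$ and $\chi=3$, while polygon $16$ has $p=9$ and $\chi=9$, which confirms both the general argument and the tabulated values simultaneously.
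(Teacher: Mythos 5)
Your argument is correct and follows essentially the same route as the paper: both reduce to $\chi(\widetilde{X(\Delta_2)}) = d_2$ via Theorem~\ref{betti} and then identify $d_2 = p$, with the final equality $p = (\Delta_2\cap\IZ^2)-1$ being immediate from the origin being the unique interior lattice point. The only difference is in how $d_2=p$ is justified: the paper asserts it and offers Pick's formula ($A = i + p/2 - 1$ with $i=1$ and FRS triangles of area $1/2$) as a derivation, whereas you obtain the same count by the equivalent direct observation that the fine star triangulation subdivides the boundary into exactly $p$ primitive segments, each subtending one maximal cone from the origin.
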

\paragraph{Remark: }
One can equally arrive at this by Pick's formula, that for convex polygons $\Delta_2$, the area $A$ is given by
\begin{equation}
A(\Delta_2) = i + \frac{p}{2} - 1 \ ,
\end{equation}
where $i$ is the number of interior lattice points, and $p$, that of perimeter (or boundary) lattice points. Here $i=1$, corresponding to the origin and an FRS triangulation means each of the triangles formed by neighboring points on the perimeter is of area $1/2$, hence the number of faces is just $p$.

It is of course helpful to have the topological quantities as simply as the $n=2$ case for higher dimensions.
However, generalizing Pick's formula for higher dimensions is not so straightforward.
The standard method is via Ehrhart polynomials \cite{CLS,DR}.
\begin{theorem} [Ehrhart]
  The generating function in $t \in \IZ$ for an $n$-dimensional polytope $\Delta$ is defined as
  \[
  L(\Delta; t) := \mbox{number of lattice points of } \{t \Delta\} \cap \IZ^n
  \]
  where $t \Delta$ is simply $\Delta$ dilated by a factor of $t$.
  Then for $t \in \IN$, $L(\Delta; t)$ is a polynomial in $t$ known as the Ehrhart polynomial with expansion
  \[
  L(\Delta; t) = \mbox{Vol}(\Delta) t^n + \ldots + \chi_{\Delta} \ .
  \]
\end{theorem}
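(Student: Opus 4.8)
The plan is to prove polynomiality first and read off the two named coefficients separately, organizing everything through the \emph{Ehrhart series} $\mathrm{Ehr}_\Delta(z) := \sum_{t \ge 0} L(\Delta; t)\, z^t$. The guiding principle is the standard rational-function criterion: a function $f: \IN \to \IZ$ agrees with a polynomial of degree $n$ for all $t \ge 0$ precisely when $\sum_{t \ge 0} f(t)\, z^t = h(z)/(1-z)^{n+1}$ for some polynomial $h$ with $\deg h \le n$. So the theorem reduces to exhibiting $\mathrm{Ehr}_\Delta(z)$ in this shape and then extracting the leading and constant terms from it.

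First I would reduce to a single lattice simplex. Triangulate $\Delta$ into lattice $n$-simplices using the lattice points of $\Delta$; to keep the lattice-point count additive and avoid inclusion--exclusion over shared faces, I would use a \emph{half-open} decomposition, assigning each shared face to exactly one simplex so that $t\Delta \cap \IZ^n$ becomes a disjoint union of the (half-open) dilated pieces. Polynomiality and the exact degree $n$ are then inherited, since a finite sum of polynomials of degree $\le n$, at least one of exact degree $n$, is again such a polynomial. For a single lattice simplex $\sigma = \mathrm{conv}(v_0,\dots,v_n)$ I would lift to the cone at height one: set $\hat v_i = (v_i, 1) \in \IR^{n+1}$ and $C = \{\sum_i \lambda_i \hat v_i : \lambda_i \ge 0\}$, so that the lattice points of $C$ at last-coordinate height $t$ are exactly the points of $t\sigma \cap \IZ^n$. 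Every lattice point of $C$ has a unique expression $w + \sum_i a_i \hat v_i$ with $a_i \in \IZ_{\ge 0}$ and $w$ in the half-open fundamental parallelepiped $\Pi = \{\sum_i \mu_i \hat v_i : 0 \le \mu_i < 1\}$; since $\Pi$ is bounded it has finitely many lattice points, each of height between $0$ and $n$. Grading by height yields
\[
\mathrm{Ehr}_\sigma(z) = \frac{\sum_{w \in \Pi \cap \IZ^{n+1}} z^{\,\mathrm{ht}(w)}}{(1-z)^{n+1}} \ ,
\]
which has the required form and proves that $L(\sigma; t)$, and hence $L(\Delta; t)$, is a polynomial of degree $n$ in $t$.

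It remains to identify the two named coefficients. The leading coefficient is the Euclidean volume by a soft Riemann-sum argument: tiling $\IR^n$ by unit cubes centered at lattice points gives $L(\Delta; t) = \mathrm{Vol}(t\Delta) + O(t^{n-1}) = \mathrm{Vol}(\Delta)\, t^n + O(t^{n-1})$, the boundary cells contributing only at lower order. For the constant term I would use the numerator relation $L(\Delta; t) = \sum_i h_i \binom{t+n-i}{n}$ implied by the rational form, where $h(z) = \sum_i h_i z^i$: evaluating at $t=0$ kills every term except $i=0$ (because $\binom{n-i}{n}=0$ for $1 \le i \le n$), so the constant term equals $h_0 = h(0)$. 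For a single simplex $h_0$ counts the height-zero lattice points of $\Pi$, namely the apex alone, giving $1$; assembling the simplices of the triangulation turns these local units into the combinatorial Euler characteristic $\chi_\Delta$, which is $1$ for a full-dimensional convex polytope.

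The degree and leading coefficient are soft, so I expect the main obstacle to be pinning the constant term to $\chi_\Delta$: this requires carefully tracking the half-open bookkeeping (or, equivalently, an inclusion--exclusion over the face poset) so that the per-simplex contributions recombine into a genuine Euler-characteristic valuation rather than an overcount. I would handle this either through the $h_0 = h(0)$ identity above applied to the full half-open decomposition, or by proving Ehrhart--Macdonald reciprocity via the companion decomposition of the \emph{open} cone and comparing open and closed parallelepiped numerators; both routes localize the difficulty in the same combinatorial identity.
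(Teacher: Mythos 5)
Your proposal is correct, but there is nothing in the paper to compare it against: the paper states Ehrhart's theorem as a classical result imported from the literature (citing \cite{CLS,DR}) and offers no proof of its own. What you have written is the standard modern proof (the Stanley/Beck--Robins route): lift a lattice simplex to the cone at height one, tile the cone by lattice translates of the half-open fundamental parallelepiped to get $\mathrm{Ehr}_\sigma(z)=h(z)/(1-z)^{n+1}$ with $\deg h\le n$, invoke the rational-function criterion for eventual polynomiality, and then read off the leading coefficient by a Riemann-sum estimate and the constant term from $L(\Delta;0)=h_0$. All the individual steps are sound: the lifts $(v_i,1)$ are linearly independent, the heights of lattice points in $\Pi$ lie in $\{0,\dots,n\}$ so $\deg h\le n$, and $\binom{n-i}{n}=0$ for $1\le i\le n$ isolates $h_0$ at $t=0$. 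The one place that genuinely needs the care you already flag is the assembly step: in a half-open decomposition exactly one cell is fully closed, and for every genuinely half-open simplex the condition $\mu_i>0$ on at least one coordinate excludes the origin, which is the \emph{only} candidate height-zero lattice point of the parallelepiped (height zero forces all $\mu_i=0$); hence $h_0=1$ for the closed cell and $h_0=0$ for all others, summing to $\chi_\Delta=1$ for a convex polytope. Spelling that sentence out closes the only loose end; otherwise the argument is complete and supplies a proof the paper itself omits.
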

Thus $L(\Delta; t)$ has its leading term capturing the volume of the polytope and its constant, the combinatorial Euler number $\chi_\Delta$. Note that $\chi_\Delta$ has nothing to do with the algebro-geometric Euler number $\chi(\widetilde{X(\Delta)})$.
For $n=2$, the above reduces to Pick's formula. For $n=3$, because in a FRS triangulation each tetrahedron is of volume $\frac{1}{3!}$, the number of tetrahedra is simply $\text{Vol}(\Delta_3) / (1/3!) = 6\mbox{Vol}(\Delta_3)$.
As a result, we find that the Euler number is linear in the number of total integer points in $\Delta_3$ and that in particular, it is twice the latter subtracted by 6:
\begin{corollary}\label{chi3}
  For all the 4319 reflexive polyhedra in dimension $n=3$, the Euler number of the complete resolution $\widetilde{X(\Delta_3)}$ is
  \begin{align*}
  \chi(\widetilde{X(\Delta_3)}) & = \#(\mbox{tetrahedra in any FRS triangulation}) = 6 Vol(\Delta_3) \\
  & = 2 \#(\Delta_3 \cap \IZ^3) - 6
  = 2 p - 4 \ ,
  \end{align*}
  where $Vol(\Delta_3)$ is the leading term in the Ehrhart polynomial, giving the volume of the polytope
  and $\#(\Delta_3 \cap \IZ^3)$ is the number of lattice points in the polytope.
\end{corollary}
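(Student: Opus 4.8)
The plan is to read the displayed chain of equalities from left to right and establish each link in turn. The first link, $\chi(\widetilde{X(\Delta_3)}) = \#(\text{tetrahedra})$, follows immediately from Theorem \ref{betti}: equation \eqref{es80a15} gives $\chi = d_n$, and for $n=3$ the top-dimensional cones counted by $d_3$ are exactly the cones subtended from the origin over the triangles of the triangulated boundary, i.e.\ the tetrahedra of the FRS triangulation. Here I rely on the fact that for $n=3$ a complete crepant (hence smooth and simplicial) resolution always exists, so that the hypotheses of Theorem \ref{betti} are met and every maximal cone is genuinely a simplicial $3$-cone.

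For the second link $\#(\text{tetrahedra}) = 6\,\mathrm{Vol}(\Delta_3)$ I would use regularity. Each maximal cone is generated by three primitive rays $\underline{u}_1,\underline{u}_2,\underline{u}_3$ whose determinant is $\pm 1$, so the tetrahedron $\mathrm{conv}\{0,\underline{u}_1,\underline{u}_2,\underline{u}_3\}$ has Euclidean volume $\tfrac{1}{3!}\lvert\det(\underline{u}_1,\underline{u}_2,\underline{u}_3)\rvert = \tfrac16$. Since these tetrahedra partition $\Delta_3$ and their total volume is the leading Ehrhart coefficient $\mathrm{Vol}(\Delta_3)$, summing gives $\mathrm{Vol}(\Delta_3) = \tfrac16\,\#(\text{tetrahedra})$.

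The crux is the passage to lattice-point counts, and here I would sidestep any attempt to generalize Pick's formula and instead count on the boundary. The boundary $\partial\Delta_3$ is a topological $2$-sphere, and the \emph{Fine} condition guarantees that its induced triangulation uses every boundary lattice point as a vertex; since the only lattice points on $\partial\Delta_3$ are the $p$ perimeter points, the triangulated sphere has $V=p$ vertices. Euler's relation $V-E+F=2$ together with $2E=3F$ (every edge bounds exactly two triangles on a closed surface) then gives $F = 2V-4 = 2p-4$. As each boundary triangle cones to precisely one tetrahedron under the \emph{Star} condition, $\#(\text{tetrahedra}) = F = 2p-4$. Finally, reflexivity supplies the single interior lattice point, so $\#(\Delta_3\cap\IZ^3) = p+1$, whence $2p-4 = 2\#(\Delta_3\cap\IZ^3)-6$, closing the chain.

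The step carrying the real content is the lattice-point relation, since no linear identity between volume and number of lattice points holds for a general $3$-polytope; what makes it work is the global topology of the boundary sphere (through Euler's formula) combined with reflexivity (the lone interior point). As an independent cross-check one can rederive it from Ehrhart theory alone: the constant term of $L(\Delta_3;t)$ is $1$, and reflexivity forces the functional equation $L(\Delta_3;t) = -\,L(\Delta_3;-t-1)$, which pins the coefficients to $b=\tfrac32\,\mathrm{Vol}(\Delta_3)$ and $c = 2+\tfrac12\,\mathrm{Vol}(\Delta_3)$, so that $\#(\Delta_3\cap\IZ^3) = L(\Delta_3;1) = 3\,\mathrm{Vol}(\Delta_3)+3$, which is precisely the asserted $6\,\mathrm{Vol}(\Delta_3)=2\#(\Delta_3\cap\IZ^3)-6$.
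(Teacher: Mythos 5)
Your chain of equalities is correct, and on the first two links ($\chi=\#(\text{tetrahedra})$ via Theorem \ref{betti}, and $\#(\text{tetrahedra})=6\,\mathrm{Vol}(\Delta_3)$ via unimodularity of the regular simplicial cones) you follow the paper exactly. Where you genuinely diverge is on the step the paper treats most lightly: the passage from $6\,\mathrm{Vol}(\Delta_3)$ to $2\#(\Delta_3\cap\IZ^3)-6=2p-4$. The paper introduces the Ehrhart polynomial as ``the standard method'' and then simply asserts the linear relation ``as a result,'' without invoking the reflexivity-specific input that makes it true. You supply two independent completions of this step. The first — triangulating the boundary $2$-sphere with all $p$ perimeter points as vertices (guaranteed by the Fine condition), applying $V-E+F=2$ with $2E=3F$ to get $F=2p-4$, and coning each boundary triangle to the origin (Star condition) — is more elementary than anything in the paper and has the virtue of not needing regularity at all for the count $\#(\text{tetrahedra})=2p-4$; unimodularity is only needed to convert that count into $6\,\mathrm{Vol}$. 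The second, via the functional equation $L(\Delta_3;t)=-L(\Delta_3;-t-1)$ forced by reflexivity (Hibi/Ehrhart--Macdonald reciprocity), which pins down $\#(\Delta_3\cap\IZ^3)=3\,\mathrm{Vol}(\Delta_3)+3$, is precisely the argument the paper's Ehrhart discussion gestures at but never carries out, and it correctly isolates reflexivity as the hypothesis doing the work (as you note, no such linear volume--lattice-point identity holds for general $3$-polytopes). Both arguments check out — e.g.\ for the $\IC^4/\IZ_4$ tetrahedron one has $\mathrm{Vol}=2/3$, $5$ lattice points, $p=4$, and all four expressions equal $4$ — so your write-up is not only correct but more complete than the paper's own justification.
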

In the last step, analogous to the dimension 2 case, we define $p$ as the number of ``perimeter'' points which includes the vertices as well as any lattice points living on any edge or any face of $\Delta_3$, since there is only a single interior point.

We will make use of these topological results for toric $2$, $3$ and $4$-folds in the following sections.

\subsection{Chern Numbers}\label{s:chern}

Having studied the Euler characteristic, we now turn to the Chern classes.

\subsubsection{Dimension 2: Reflexive Polygons}

In dimension 2, there are two numbers, customarily called the {\bf Chern numbers}, which can be obtained naturally by integration over the toric 2-fold.
Again, to ensure smoothness and compactness so that the Chern classes are well-defined, we work over the complete resolution $M_2 := \widetilde{X(\Delta_2)}$:
\begin{equation}
  M_2 := \widetilde{X(\Delta_2)} \ ;
  \qquad
  C(M_2) = \int_{M_2} c_1(M_2)^2 \ , \qquad
  \chi(M_2) = \int_{M_2} c_2(M_2) \ .
\end{equation}
The integral over the surface of the top Chern class is, of course, none other than the Euler characteristic.
By direct computation, we readily find that $C$ is simply the number of perimeter lattice points of the polar dual of $\Delta$.
This can be thought of as a version of Batyrev's mirror symmetry for Fano 2-folds:
\begin{proposition}\label{n=2chern}
  For the 16 reflexive polygons in dimension 2, the two independent Chern numbers of the complete resolution $M_2 := \widetilde{X(\Delta_2)}$ are
  \[
  \Big(C(M_2), \chi(M_2)\Big) = (p^\circ, p)
  \]
  where $p$ and $p^\circ$ are  the number of lattice points on the perimeter of respectively $\Delta$ and the polar dual $\Delta^\circ$. 
\end{proposition}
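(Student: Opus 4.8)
The plan is to handle the two entries separately, since the second is already in hand. The number $\chi(M_2) = \int_{M_2} c_2(M_2)$ is the topological Euler number, and the preceding Corollary gives $\chi(\widetilde{X(\Delta_2)}) = p$; so the whole task is to show $C(M_2) = \int_{M_2} c_1(M_2)^2 = p^\circ$. Because $M_2$ is a smooth complete toric surface, $c_1(M_2) = -K_{M_2} = \sum_\rho D_\rho$, the sum over the torus-invariant prime divisors attached to the rays of the fan. Thus $C(M_2) = (-K_{M_2})^2$, and the strategy is to read this self-intersection off the dual polytope.

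The key observation is that the anticanonical polytope of $M_2$ is exactly $\Delta^\circ$. Indeed $X(\Delta_2)$ is the toric variety of the face fan of $\Delta_2$, whose rays pass through the perimeter lattice points of $\Delta_2$; these are all primitive (a non-primitive boundary point $v = kw$ with $k \ge 2$ would force the lattice point $w = v/k$ into the interior, contradicting that the origin is the unique interior point). Hence the polytope of $-K_{M_2} = \sum_\rho D_\rho$ is $\{\underline m : \underline m \cdot v_\rho \ge -1 \ \forall \rho\}$, and since the vertex constraints already cut out all of $\Delta^\circ$ by the definition \eqref{dualp}, this polytope is precisely $\Delta^\circ$. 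Moreover, by the earlier theorem every Gorenstein toric Fano surface admits a coherent crepant resolution, so the FRS resolution $\pi : M_2 \to X(\Delta_2)$ satisfies $-K_{M_2} = \pi^*(-K_{X(\Delta_2)})$; in particular $-K_{M_2}$ is nef and $(-K_{M_2})^2 = (-K_{X(\Delta_2)})^2$.

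I would then finish with the standard toric volume formula together with Pick's theorem. For a nef torus-invariant divisor $D$ on a smooth complete toric surface, $D^2 = 2\,\mathrm{Area}(P_D)$, the leading Ehrhart coefficient of $\chi(\cO_{M_2}(tD))$. Applying this to $D = -K_{M_2}$ with $P_D = \Delta^\circ$ gives $C(M_2) = 2\,\mathrm{Area}(\Delta^\circ)$. Now $\Delta^\circ$ is itself reflexive, so it has a single interior lattice point $i^\circ = 1$, and Pick's formula $\mathrm{Area}(\Delta^\circ) = i^\circ + \tfrac{p^\circ}{2} - 1 = \tfrac{p^\circ}{2}$ yields $C(M_2) = p^\circ$. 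Combined with $\chi(M_2) = p$ this is the claimed pair $(p^\circ, p)$.

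The main obstacle is precisely the identification in the middle paragraph: that the resolution may be taken crepant (so the anticanonical polytope is unchanged and $-K_{M_2}$ is nef) and that the self-intersection of this nef class is read faithfully as twice the lattice area of $\Delta^\circ$; once these are secured the computation is immediate. As a consistency bonus, Noether's formula on the rational surface $M_2$ gives $c_1^2 + c_2 = 12$, so $p + p^\circ = 12$ — this is the classical ``number twelve'' theorem for reflexive polygons, and it exhibits the symmetry of the result: passing from $\Delta$ to $\Delta^\circ$ exchanges $(C,\chi) = (p^\circ, p)$ with $(p, p^\circ)$, which is the advertised shadow of Batyrev mirror symmetry for Fano surfaces. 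Since the classification yields only 16 polygons, the whole statement can alternatively be confirmed by direct case-by-case inspection of Figure \ref{f:n=2}.
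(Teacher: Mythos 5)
Your argument is correct, but it is genuinely different from what the paper does: the paper offers no structural proof at all, simply asserting that ``by direct computation'' over the 16 reflexive polygons one finds $C(M_2)=p^\circ$ (i.e.\ it relies on the case-by-case check you mention only as a fallback in your last sentence). Your route is the conceptual one: since the FRS resolution of a Gorenstein toric Fano surface is crepant (all added rays pass through boundary lattice points of $\Delta$, which are primitive and lie on facets at lattice distance $1$), one has $-K_{M_2}=\pi^*(-K_{X(\Delta)})$ with anticanonical polytope exactly $\Delta^\circ$ (the perimeter-point inequalities being implied by the vertex inequalities by convexity), whence $C(M_2)=(-K_{M_2})^2=2\,\mathrm{Area}(\Delta^\circ)=p^\circ$ by the nef degree--volume formula and Pick's theorem applied to the reflexive polygon $\Delta^\circ$. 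This buys a genuine proof valid for every reflexive polygon rather than an empirical observation, explains \emph{why} polar duality swaps the two Chern numbers, and via Noether's formula $c_1^2+c_2=12$ recovers the classical $p+p^\circ=12$ identity as a structural consequence rather than a coincidence of the data; the paper's enumeration, by contrast, requires no toric intersection theory but gives no insight and does not generalize (e.g.\ to the analogous Proposition for $\Delta_3$, which the paper likewise only verifies computationally). All the individual steps you invoke (primitivity of boundary points, crepancy, $D^2=2\,\mathrm{Area}(P_D)$ for nef $D$) are standard and correctly justified.
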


\subsubsection{Dimension 3: Reflexive Polyhedra}

\begin{figure}[h!!!]
\begin{center}
\resizebox{1\hsize}{!}{
  \includegraphics[trim=0mm 0mm 0mm 0mm, width=8in]{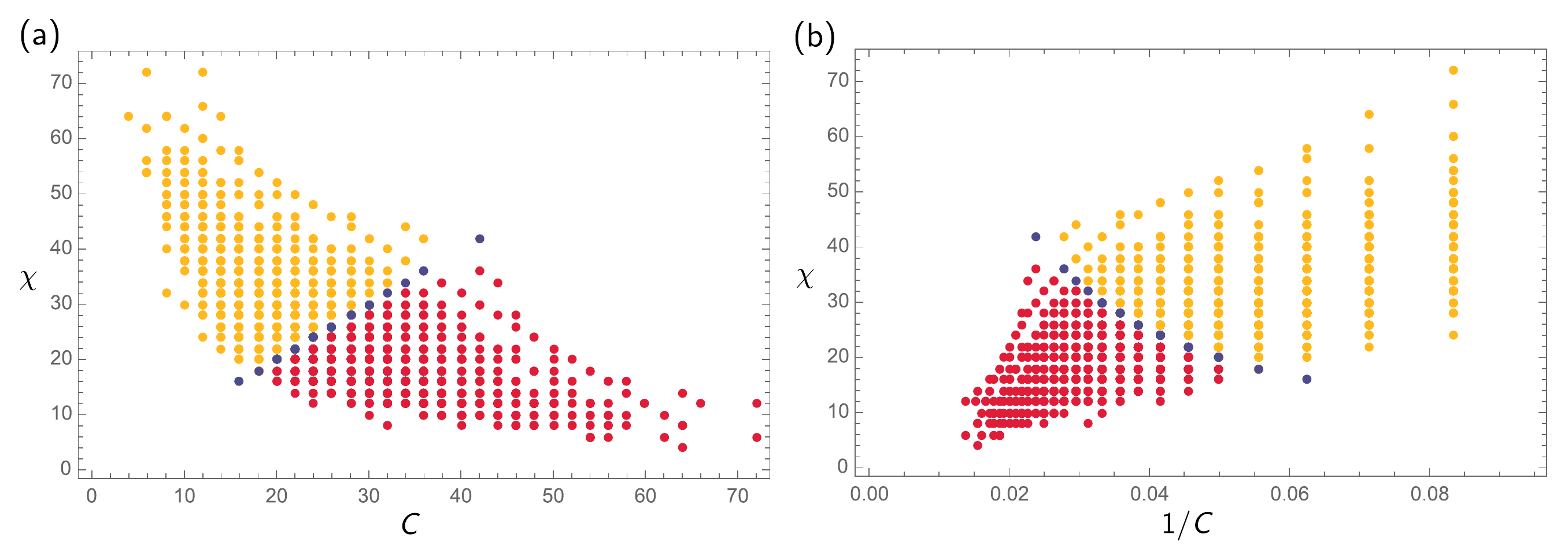}
  }
\caption{{\sf {\small
      (a) The plot of the two non-trivial Chern numbers $C = \int_{M_3} c_1(M_3)^3$ versus the Euler number $\chi =   \int_{M_3} c_3(M_3)$ of the complete desingularization of the toric variety $M_3 = \widetilde{X(\Delta_3)}$ for the 4319 reflexive polyhedra $\Delta_3$. The red points correspond to $\Delta_3$ which are dual polytopes of the ones in yellow. Self-dual $\Delta_3$ are in blue.
      (b) For reference, we include the plot also for the reciprocal of $C$ against $\chi$.
\label{f:chern3d}}}}
\end{center}
\end{figure}

In this subsection, we move onto the reflexive polyhedra in dimension 3.
We have the smooth Fano variety $M_3 := \widetilde{X(\Delta_3)}$ from the FRS triangulation of $\Delta_3$.
We can now form three natural integrals from the Chern classes $c_i(M_3) \in H^{2i}(M_3;\IZ)$, viz., 
$\int_{M_3} c_1(M_3)^3$,
$\int_{M_3} c_1(M_3) c_2(M_3)$ and
$\int_{M_3} c_3(M_3) \in \IZ$.
We can easily see that these are independent and are the only three non-trivially-vanishing classes which can be integrated over the Fano threefold to produce an integer.
Again the third integral is the Euler characteristic, as discussed in Corollary \ref{chi3} and it remains to study the first two integrals.

We can quite efficiently compute these quantities. First off, the second integral is equal to 24 for all 4319 reflexive polytopes in $n=3$, by direct computation \cite{sage}.
The remaining two Chern numbers $C := \int_{M_3} c_1(M_3)^3$ and $\chi$, on the other hand, are intricately related and we show the plot between them in \fref{f:chern3d}.
We see many bands of vertical lines and to ensure these are not due to any round-off errors, we also plot against the reciprocal of the Euler characteristic.
Inspired by Proposition \ref{n=2chern}, we compute this pair of Chern numbers for the dual pairs of reflexive polytopes explicitly and find that they are indeed reversed under ``mirror symmetry''.
Recalling further that the top (3rd) Todd class $\td_3(M_3) = \frac{1}{24}c_1(M_3) c_2(M_3)$, we summarize these results in
\begin{proposition}\label{n=3chern}
  Defining the 3 independent Chern numbers
  \[
  C := \int_{M_3} c_1(M_3)^3 \ ,
  \quad
  T = \int_{M_3} \td_3(M_3) = \frac{1}{24}c_1(M_3) c_2(M_3) \ ,
  \quad
  \chi = \int_{M_3} c_3(M_3)
  \]
  for the complete resolution $M_3 :=\widetilde{X(\Delta_3)}$ for the reflexive polyhedron $\Delta_3$ in dimension 3, we have that
  \begin{itemize}
  \item $T = 1$ for all 4319 cases;
  \item $(C,\chi) =
    \Big(2 \#(\Delta_3^\circ \cap \IZ^3) - 6, \quad
    2 \#(\Delta_3 \cap \IZ^3) - 6 \Big)
    = (2p^\circ-4, \ 2p-4)$
  \end{itemize}
  where $\Delta^\circ$ is the polar dual polyhedron, and $p$ and $p^\circ$ are the number of perimeters lattice points of $\Delta_3$ and $\Delta_3^\circ$ respectively.
\end{proposition}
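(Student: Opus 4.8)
The plan is to treat the three assertions separately, reducing each to a quantity that is either combinatorially computable or already established. The Euler-number statement $\chi = \int_{M_3} c_3(M_3) = 2p-4$ is nothing new: it is precisely Corollary \ref{chi3}, since $\int_{M_3} c_3(M_3)$ is the topological Euler characteristic of the smooth, complete threefold $M_3 = \widetilde{X(\Delta_3)}$. So the real work lies in the identity $T=1$ and in the identity $C = 2p^\circ-4$; the ``mirror'' swap of the two Chern numbers will then be visible directly from the final formulae, since $C$ will depend only on $\Delta_3^\circ$ while $\chi$ depends only on $\Delta_3$.

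For $T$ I would invoke Hirzebruch--Riemann--Roch. Since $M_3$ is a smooth, complete (projective) toric threefold, HRR gives $\chi(M_3,\mathcal{O}_{M_3}) = \int_{M_3}\td(M_3)$, and in complex dimension $3$ only the top component $\td_3(M_3) = \tfrac{1}{24}c_1(M_3)c_2(M_3)$ survives the integration. The standard vanishing $H^i(M_3,\mathcal{O}_{M_3})=0$ for $i>0$ on a smooth complete toric variety then leaves $\chi(M_3,\mathcal{O}_{M_3}) = h^0 = 1$, whence $T = \int_{M_3}\tfrac{1}{24}c_1(M_3)c_2(M_3) = 1$, equivalently $\int_{M_3} c_1(M_3)c_2(M_3) = 24$.

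For $C$ the key is to recognise $C = \int_{M_3} c_1(M_3)^3 = (-K_{M_3})^3$ as an anticanonical self-intersection and to read it off the polytope. On any smooth toric variety $-K_{M_3} = \sum_\rho D_\rho$, and I claim the polytope of global sections of this divisor is exactly $\Delta_3^\circ$: it is $\{m : \langle m, v_\rho\rangle \ge -1\}$, and although the FRS triangulation introduces extra rays $v_\rho$ at all boundary lattice points of $\Delta_3$, each such point is a convex combination of vertices, so its inequality is implied by the vertex inequalities and the section polytope is unchanged from $\Delta_3^\circ$ (cf.~\eqref{dualp}). Because the star triangulation uses only boundary lattice points of the reflexive $\Delta_3$, the resolution is crepant (by the coherent-crepant-resolution theorem for $n\le 3$ quoted above) and $-K_{M_3}$ is nef as the pullback of the ample $-K_{X(\Delta_3)}$, so the volume formula $(-K_{M_3})^3 = 3!\,\mathrm{Vol}(\Delta_3^\circ) = 6\,\mathrm{Vol}(\Delta_3^\circ)$ applies. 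Finally, since $\Delta_3^\circ$ is again reflexive, the purely combinatorial identity established in Corollary \ref{chi3}, applied now to $\Delta_3^\circ$, gives $6\,\mathrm{Vol}(\Delta_3^\circ) = 2\#(\Delta_3^\circ\cap\IZ^3) - 6 = 2p^\circ-4$, which is the asserted value of $C$.

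The main obstacle is careful bookkeeping around $C$ rather than any single deep step. One must (i) fix the normalisation, namely the factor $n! = 6$ in $(-K_{M_3})^3 = 6\,\mathrm{Vol}(\Delta_3^\circ)$; (ii) justify that the anticanonical class pulls back without discrepancy, i.e.~that FRS is crepant and hence $-K_{M_3}$ is nef, so that the lattice-point/volume dictionary is legitimate on the resolution even when it is only big-and-nef rather than ample; and (iii) confirm that the extra boundary rays leave the section polytope equal to $\Delta_3^\circ$. The underlying combinatorial engine---that $6\,\mathrm{Vol} = 2\#(\,\cdot\,\cap\IZ^3)-6$ for reflexive $3$-polytopes---is the Ehrhart-reciprocity computation already packaged in Corollary \ref{chi3}, so no new polytope combinatorics is needed beyond applying it to the dual. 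The restriction to $n=3$ is exactly what makes smoothness and crepancy automatic; the analogous claims in higher dimension would break down precisely where a coherent crepant resolution can fail to exist.
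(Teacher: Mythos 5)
Your proof is correct, but it takes a genuinely different route from the paper, because the paper does not actually prove this proposition analytically: it verifies $\int_{M_3} c_1(M_3)c_2(M_3) = 24$ ``by direct computation'' in SAGE over all 4319 polytopes, and likewise computes the pair $(C,\chi)$ explicitly for the polar-dual pairs and \emph{observes} that the two numbers are exchanged. The proposition is thus presented as the summary of an exhaustive computer check, whereas you derive it: $T=1$ from Hirzebruch--Riemann--Roch together with $H^{i>0}(M_3,\mathcal{O}_{M_3})=0$ on a smooth complete toric variety, and $C=(-K_{M_3})^3 = 6\,\mathrm{Vol}(\Delta_3^\circ) = 2p^\circ-4$ from crepancy of the resolution, the identification of the anticanonical section polytope with $\Delta_3^\circ$ (which is indeed the right identification here, since the paper's fan is the face fan of $\Delta_3$, i.e.\ the normal fan of $\Delta_3^\circ$), nefness of the pulled-back anticanonical class, and the Ehrhart identity of Corollary \ref{chi3} applied to the (again reflexive) dual polytope. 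The one step stated a little loosely is crepancy: you appeal to the existence theorem for coherent crepant resolutions in dimension $\le 3$, but what you need is that the \emph{specific} FRS resolution is crepant, which follows directly because every ray added by a fine star triangulation is a boundary lattice point of the reflexive $\Delta_3$ and therefore lies at height $1$ for the support function of $-K$, giving discrepancy zero; no appeal to the existence theorem is required. With that small repair, your argument upgrades the paper's empirical observation to a theorem and exhibits the ``mirror'' swap $(C,\chi)\leftrightarrow(\chi,C)$ under polar duality with no case checking, which is a real gain over the paper's approach; what the paper's brute-force computation buys instead is independent confirmation of the smoothness and bookkeeping assumptions across all 4319 cases.
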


As before, the perimeter points include the vertices as well as any lattice points living on any edge or any face of $\Delta_3$.
Therefore, we see that once again, while the middle Chern number remains fixed here at 1, the bottom and top Chern numbers are reversed for polar dual pairs.

\section{Hilbert Series and Volume Functions \label{shsvol}}

One of the most important quantities which characterize an algebraic variety $X$ is the Hilbert series.\footnote{Note, however, that the Hilbert series is not a topological invariant and does depend on embedding and choice of grading for the coordinate ring for $X$.} It has been used in numerous ways to enumerate operators in a chiral ring and to study the geometric structure they form. As part of the plethystic program \cite{Benvenuti:2006qr,Feng:2007ur}, Hilbert series has been used extensively to address the problem of counting gauge invariant operators and of characterizing moduli spaces of various different supersymmetric gauge theories.  

We will be concentrating on a specific application \cite{Martelli:2005tp,Martelli:2006yb,Bergman:2001qi} of the Hilbert series in relation to the volume of the Sasaki-Einstein base $Y$ of the toric Calabi-Yau cone $\cX = \cC(X(\Delta))$.
Let us first refresh our memory on the Hilbert series and thence explore the connection between Hilbert series and the volume function of a Sasaki-Einstein manifold.

\subsection{Hilbert series \label{shs}}

For a projective variety $X$ over which $\cX$ is a cone, realized as an affine variety in $\IC^k$, the Hilbert series is the generating function for the dimension of the graded pieces of the coordinate ring $\mathbb{C}[x_1,...,x_k] / \left< f_i \right>$ where $f_i$ are the defining polynomials of $X$:
\begin{equation}
g(t; \cX) = \sum\limits_{i=0}^{\infty} ~\dim_{\mathbb{C}}(X_i) ~t^i \ ,
\end{equation}
which is always a rational function.
Here, the $i$-th graded piece $X_i$ can be thought of as the number of algebraically independent degree $i$ polynomials on the variety $X$. $t$ keeps track of the degree $i$. For multi-graded rings with pieces $X_{\vec{i}}$ and grading $\vec{i}=(i_1,\dots,i_k)$, the Hilbert series takes the form $g(t_1,\dots,t_k; \cX) = \sum\limits_{\vec{i}=0}^{\infty} \dim_{\mathbb{C}}(X_{\vec{i}} ) t_1^{i_1}\dots t_k^{i_k}$.
Note that the Hilbert series can appear in two kinds, with as many $t$ variables as the ambient space as given above, or being fewer, as the dimension of $\cX$; we will use the latter below.

If the variety $\cX$ corresponds to the chiral ring of a supersymmetric gauge theory, the Hilbert series can be thought of as the generating function of gauge invariant operators on the chiral ring. The grading can be chosen as such that fugacities $t_1^{i_1},\dots, t_k^{i_k}$ count charges of the operators under the full global symmetry of the supersymmetric gauge theory. 
Finally, if the variety $X$ is toric, the Hilbert series can be conveniently obtained from the toric diagram itself as explained in the ensuing subsection.

\subsubsection{Hilbert Series from Toric Geometry}

\begin{figure}[h!!!]
\begin{center}
\resizebox{0.7\hsize}{!}{
  \includegraphics[trim=0mm 0mm 0mm 0mm, width=8in]{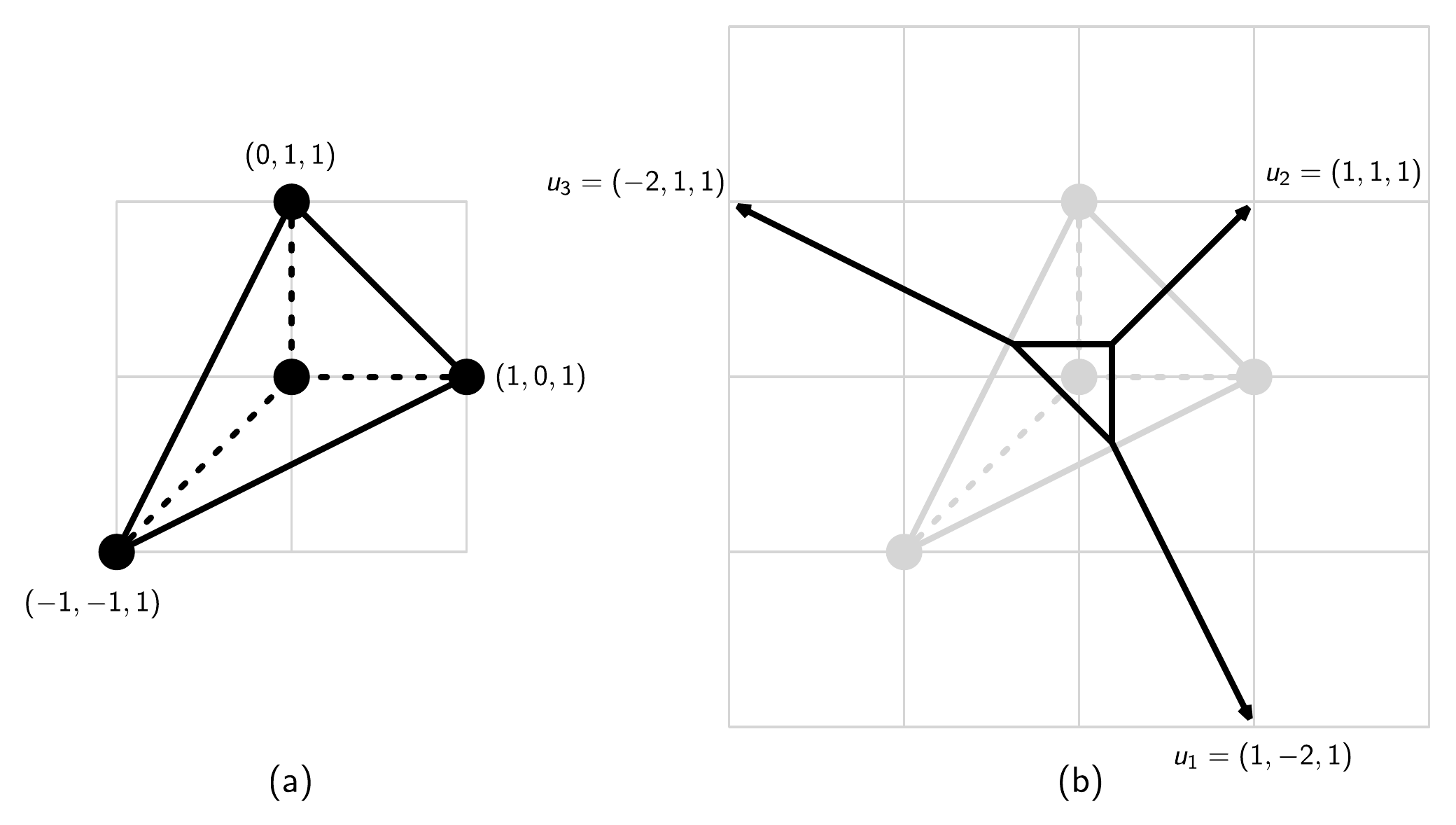}
}
\caption{{\sf {\small
      (a) Toric diagram and (b) dual $(p,q)$-web diagram for $\mathbb{C}^3/\mathbb{Z}_3$.
      The vectors $u_{1,2,3}$ are the outer normals to the fan and furnish the $(p,q)$-charges.
\label{fdp3pq}
}}}
\end{center}
\end{figure}

When the affine variety is toric, the corresponding Hilbert series can be computed directly from the toric diagram \cite{Martelli:2005tp,Benvenuti:2006qr}.
As discussed in section \sref{scy}, the affine toric Calabi-Yau $n$-fold $\cX$ can be defined as a cone over a compact Sasaki-Einstein manifold of real dimension $2n-1$.
The toric diagram of $\cX$ is a convex reflexive lattice polytope of (real) dimension $n-1$, being effectively dimension one less because of the Calabi-Yau condition which ensures that the endpoints of the  generators of the cone are co-hyperplanar.

For $n=3,4$, our toric diagrams $\Delta_{2,3}$ admit at least one FRS triangulation in terms of $(n-1)$-simplices and for $n=5$, we will only consider those which do have such a triangulation.
From an FRS triangulation, one can construct a graph-dual, which for Calabi-Yau 3-folds is known as a $(p,q)$-web diagram.
\fref{fdp3pq} illustrates dual web-diagrams with the corresponding toric diagrams for $\mathbb{C}^3/\mathbb{Z}_3$.

In general, we have that \cite{Martelli:2005tp,Martelli:2006yb}
\begin{theorem}
The Hilbert series of the toric Calabi-Yau $n$-fold cone $\cX$ (and that of a toric variety $\widetilde{X(\Delta_{n-1})}$) is succinctly obtained from the triangulation of $\Delta_{n-1}$ as follows 
\beal{es50a1}
\fbox{$
  g(t_1,\dots,t_n ; \cX) =
  \sum\limits_{i=1}^{r} \prod\limits_{j=1}^{n} (1-\vec{t}^{~\vec{u}_{i,j}})^{-1}
$}
 ~.~
\eea
Here, the index $i=1,\dots,r$ runs over the $n-1$-dimensional simplices in the FRS triangulation and $j=1,\dots,n$ runs over the faces of each such simplex.
Each $\vec{u}_{i,j}$ is an integer $n$-vector, being the outer normal to the $j$-th face of the fan associated to $i$-th simplex. 
$\vec{t}^{~\vec{u}_{i,j}} := \prod\limits_{a=1}^{n} t_a^{u_{i,j}(a)}$ multiplied over the $a$-th component of $\vec{u}$.
\end{theorem}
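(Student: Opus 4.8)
The plan is to prove the formula by identifying the Hilbert series with the character of the algebraic torus acting on the holomorphic functions of $\cX$, and then evaluating that character by equivariant localization on a smooth resolution, in the spirit of \cite{Martelli:2006yb}. First I would record that, since $\cX \simeq \mbox{Spec}_{\mathrm{Max}}(\IC[\sigma^\vee\cap M])$, the monomials $\chi^{\underline m}$ indexed by $\underline m\in\sigma^\vee\cap M$ form a $\IC$-basis of the coordinate ring, each sitting in multidegree $\underline m$ under the torus grading. Hence the multi-graded Hilbert series is nothing but the integer-point transform of the dual cone,
\[
g(t_1,\dots,t_n;\cX)=\sum_{\underline m\in\sigma^\vee\cap M}\vec t^{\,\underline m},
\]
and the whole statement reduces to computing this generating function from the triangulation.

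Next I would feed in the FRS triangulation. By construction it subdivides the cone $\sigma$ over $\Delta_{n-1}$ into $r$ full-dimensional simplicial subcones $\sigma_i$, one for each $(n-1)$-simplex, and regularity (Theorem \ref{thm:delta1}) guarantees that the $n$ primitive generators of each $\sigma_i$ form a $\IZ$-basis, so every $\sigma_i$ is smooth. For a smooth cone the dual $\sigma_i^\vee$ is again simplicial, generated by the dual basis—precisely the primitive outer normals $\vec u_{i,1},\dots,\vec u_{i,n}$ to the $n$ facets of $\sigma_i$—and the semigroup $\sigma_i^\vee\cap M$ is free on these generators. Therefore its integer-point transform factorizes as
\[
\sum_{\underline m\in\sigma_i^\vee\cap M}\vec t^{\,\underline m}=\prod_{j=1}^{n}\bigl(1-\vec t^{\,\vec u_{i,j}}\bigr)^{-1},
\]
which is exactly the $i$-th summand on the right-hand side. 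Geometrically, $\sigma_i$ is the cone of the affine chart $U_{\sigma_i}\cong\IC^n$ of the smooth resolution $\widetilde\cX$ obtained from the triangulation, and this factor is the local equivariant character of $\cO$ at the corresponding torus-fixed point, with the $\vec u_{i,j}$ as the tangent weights.

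The remaining step—the only substantial one—is to show that summing these local contributions reproduces the global transform, i.e. that $\sum_{i}\prod_{j}\bigl(1-\vec t^{\,\vec u_{i,j}}\bigr)^{-1}=\sum_{\underline m\in\sigma^\vee\cap M}\vec t^{\,\underline m}$. This is where the difficulty concentrates: since $\sigma_i\subseteq\sigma$ we have $\sigma_i^\vee\supseteq\sigma^\vee$, so all $r$ dual cones contain $\sigma^\vee=\bigcap_i\sigma_i^\vee$ and the naive sum over-counts lattice points, both those inside $\sigma^\vee$ and the spurious ones in each $\sigma_i^\vee\setminus\sigma^\vee$. I would handle this by reading the right-hand side, via the equivariant Atiyah--Bott/Lefschetz fixed-point formula on $\widetilde\cX$, as the localized form of the equivariant Euler characteristic $\sum_q(-1)^q\,\mathrm{ch}\,H^q(\widetilde\cX,\cO)$, whose fixed points are exactly those indexed by the maximal cones $\sigma_i$. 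The over-counting is then precisely the contribution of the higher cohomology, which vanishes: toric varieties have rational singularities, so $R^{>0}\pi_*\cO=0$, and since $\cX$ is affine this forces $H^{>0}(\widetilde\cX,\cO)=0$ while $H^0(\widetilde\cX,\cO)=\IC[\sigma^\vee\cap M]$, leaving only $\mathrm{ch}\,H^0=g$. Equivalently, in purely combinatorial terms the cancellation is an instance of Brion's theorem, the over-counted pieces being integer-point transforms of cones containing a lattice line, which vanish identically as rational functions. I expect this vanishing/cancellation—together with the bookkeeping that the tangent weights at each fixed point are indeed the stated outer normals $\vec u_{i,j}$—to be the crux of the argument; as sanity checks I would verify the formula on $\cX=\IC^n$ (one simplex, one term) and on the conifold and the $\IC^3/\IZ_3$ example of \fref{fdp3pq}, where the multi-term sums are known to collapse to the correct rational function.
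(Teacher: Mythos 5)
The paper does not actually prove this theorem: it is quoted as a known result with a citation to \cite{Martelli:2005tp,Martelli:2006yb}, so there is no in-paper argument to compare against. Your proposal is, in substance, the argument of those references, and it is sound: the identification of $g$ with the integer-point transform of $\sigma^\vee\cap M$, the factorization of each smooth simplicial subcone's dual transform into $\prod_j(1-\vec t^{\,\vec u_{i,j}})^{-1}$, and the recognition that the whole content of the theorem is the identity $\sum_i\sigma(\sigma_i^\vee)=\sigma(\sigma^\vee)$ are all correct, and you have located the crux in the right place. Two small points of precision. First, the over-counting is most cleanly disposed of by the \v{C}ech/Mayer--Vietoris (equivalently Brianchon--Gram) expansion of $\chi_T(\widetilde\cX,\cO)$ over \emph{all} cones of the subdivision: the lower-dimensional cones have duals containing a lattice line, whose integer-point transforms vanish identically as rational functions (Brion), leaving only the maximal cells. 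The vanishing of higher cohomology ($\cX$ affine with rational singularities, so $H^{>0}(\widetilde\cX,\cO)=0$) is a logically separate ingredient, needed to identify $\chi_T(\widetilde\cX,\cO)$ with $\mathrm{ch}\,H^0=g$; your phrasing slightly conflates the two, and if you instead invoke Atiyah--Bott on the non-compact $\widetilde\cX$ you should say a word about why the fixed-point formula applies there. Second, the weights appearing at the fixed point of $U_{\sigma_i}$ are the basis of $M$ dual to the generators of $\sigma_i$, i.e.\ the \emph{inward} facet normals of $\sigma_i$; the paper's ``outer normal'' wording (and the sign pattern visible in \eref{es51a3}) reflects a different orientation convention, so the bookkeeping you flag is genuinely needed but is only a global sign choice. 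Your proposed sanity checks on $\IC^n$ and $\IC^3/\IZ_3$ would catch exactly this.
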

Note that we introduced an $n$-dimensional origin $(0,0,\ldots,0)$ which is distance 1 away from the hyperplane of the $(n-1)$-dimensional toric diagram.
This sets the $n$-th coordinate of each vector $\vec{u}_{i,j}$ to be 1.
The grading of the Hilbert series in terms of $\vec{t}$ is such that each fugacity $t_a$ corresponds to the $a$-th coordinate of the $\vec{u}_{i,j}$.
We emphasize the fact that all subsequent discussions on volume functions are independent of (1) the $GL(n;\IZ)$ equivalence in the definition of the toric diagram and (2) the different FRS triangulations.

\subsubsection{Hilbert Series for $\mathbb{C}^n/\mathbb{Z}_n$}

Let us consider a few examples of Hilbert series obtained from toric diagrams of toric Calabi-Yau $n$-folds. We consider here Abelian orbifolds of the form $\mathbb{C}^n/\mathbb{Z}_n$ with orbifold action $(1,\dots,1)$ for $n=3,4,5$. The coordinates for vertices in the toric diagram are
\beal{es51a1}
\mathbb{C}^{3}/\mathbb{Z}_3 &~:~&
{\small
\left\{
\begin{array}{c}
  (1,0,1), \\ (0,1,1),\\ (-1,-1,1),\\(0,0,1)
\end{array}
\right.
}
\qquad
\mathbb{C}^{4}/\mathbb{Z}_4 ~:~
{\small
\left\{         
\begin{array}{c}
  (1,0,0,1),\\(0,1,0,1),\\(0,0,1,1),\\(-1,-1,-1,1),\\(0,0,0,1)
\end{array}
\right.
}
\nn\\
\mathbb{C}^{5}/\mathbb{Z}_5 &~:~&
{\small
  \left\{
\begin{array}{c}
  (1,0,0,0,1),\\(0,1,0,0,1),\\(0,0,1,0,1),\\(0,0,0,1,1),\\
  (-1,-1,-1,-1,1),\\(0,0,0,0,1)
\end{array}
\right.
}
\eea
Above, the coordinates of the vertices include the $n$-th coordinate, which will become important in the discussion that follows.
We label the points in the toric diagrams from $1,\dots,n$ in the order they are listed in \eref{es51a1}.
Following this labelling, the unique triangulations of these diagrams can be summarized in terms of the points in the toric diagram as follows (called {\bf incidence data}):
\beal{es51a2}
\mathbb{C}^{3}/\mathbb{Z}_3 &~:~&
\{\{1,2,4\},\{1,3,4\}\,\{2,3,4\}\} 
\nn\\
\mathbb{C}^{4}/\mathbb{Z}_4 &~:~&
\{\{1, 2, 3, 5\}, \{1, 2, 4, 5\}, \{1, 3, 4, 5\}, \{2, 3, 4, 5\}\}
\nn\\
\mathbb{C}^{5}/\mathbb{Z}_5 &~:~&
\{\{1, 2, 3, 4, 6\}, \{1, 2, 3, 5, 6\}, \{1, 2, 4, 5, 6\}, \{1, 3, 4, 5, 6\}, \{2, 3, 4, 5, 6\}\}
\nn\\
\eea
Following \eref{es50a1}, we find the Hilbert series for $\mathbb{C}^3/\mathbb{Z}_3$, $\mathbb{C}^4/\mathbb{Z}_4$ and $\mathbb{C}^5/\mathbb{Z}_5$,
\beal{es51a3}
g(t_i;\mathbb{C}^3/\mathbb{Z}_3) 
&=&
\frac{1}{
(1 - t_2) (1 - t_1^{-1} t_2) (1 - t_1 t_2^{-2} t_3^{-1})
} 
+ 
\frac{1}{
(1 - t_1) (1 - t_1 t_2^{-1}) (1 - t_1^{-2} t_2 t_3)
} 
\nn\\
&&
+ 
\frac{1}{
(1 - t_1^{-1}) (1 - t_2^{-1}) (1 - t_1 t_2 t_3^{-1})
}
\nn\\
g(t_i;\mathbb{C}^4/\mathbb{Z}_4)
&=&
\frac{1}{
(1 - t_1) (1 - t_2) (1 - t_3) (1 - t_1^{-1} t_2^{-1} t_3^{-1} t_4)
} 
\nn\\
&&
+ 
\frac{1}{
(1 - t_1^{-1}) (1 - t_1^{-1} t_2) (1 - t_1^{-1} t_3) (1 - t_1^3 t_2^{-1} t_{3}^{-1} t_4)
}
\nn\\
&&
+ 
\frac{1}{
(1 - t_2^{-1}) (1 - t_1 t_2^{-1}) (1 - t_2^{-1} t_3) (1 - t_1^{-1} t_2^3 t_3^{-1} t_4)
}
\nn\\
&&
+ 
\frac{1}{
(1 - t_3^{-1}) (1 - t_1 t_3^{-1}) (1 - t_2 t_3^{-1}) (1 - t_1^{-1} t_2^{-1} t_3^3 t_4)
}
\nn\\
g(t_i;\mathbb{C}^5/\mathbb{Z}_5)
&=&
\frac{1}{
(1 - t_4) (1 - t_1^{-1} t_4) (1 - t_2^{-1} t_4) (1 - t_3^{-1} t_4) (1 - t_1 t_2 t_3 t_4^{-4} t_5^{-1})
}
\nn\\
&&
+ 
\frac{1}{
(1 - t_3) (1 - t_1^{-1} t_3) (1 - t_2^{-1} t_3) (1 - t_3 t_4^{-1}) (1 - t_1 t_2 t_3^{-4} t_4 t_5^{-1})
} 
\nn\\
&&
+ 
\frac{1}{
(1 - t_2) (1 - t_1^{-1} t_2) (1 - t_2 t_3^{-1}) (1 - t_2 t_4^{-1}) (1 - t_1 t_2^{-4} t_3 t_4 t_5^{-1}) 
}
\nn\\
&&
+ 
\frac{1}{
(1 - t_1) (1 - t_1 t_2^{-1}) (1 - t_1 t_3^{-1}) (1 - t_1 t_4^{-1}) (1 - t_1^{-4} t_2 t_3 t_4 t_5^{-1})
}
\nn\\
&&
+ \frac{1}{
(1 - t_1^{-1}) (1 - t_2^{-1}) (1 - t_3^{-1}) (1 - t_4^{-1}) (1 - t_1 t_2 t_3 t_4 t_5^{-1})
}
~.~
\eea
By setting $t_1,\dots,t_{d-1}=1$ and $t_d=t$, the above Hilbert series can be unrefined to give
\beal{es51a3b}
g(t;\mathbb{C}^3/\mathbb{Z}_3) &=&
\frac{1 + 7 t + t^2}{(1 - t)^3}  ~,~
\nn\\
g(t;\mathbb{C}^4/\mathbb{Z}_4) &=&
\frac{1 + 31 t + 31 t^2 + t^3}{(1 - t)^4} ~,~
\nn\\
g(t;\mathbb{C}^5/\mathbb{Z}_5) &=&
\frac{1 + 121 t + 381 t^2 + 121 t^3 + t^4}{(1 - t)^5} ~.~
\eea
Notice that the numerators of the rational functions in \eref{es51a3b} are all palindromic. According to Stanley \cite{stanley}, this indicates that the corresponding variety of the Hilbert series is Calabi-Yau (Gorenstein).

\subsection{The Calabi-Yau Cone $\cX$ and Sasaki-Einstein Base $Y$ \label{sse}} 
The Hilbert series as computed above, exploiting the relation between the affine variety $\cX$ and the compact real manifold $Y$, has an extraordinary manifestation as a normalized volume. In this section, we briefly summarize this property following the exposition and notation of \cite{Martelli:2005tp,Martelli:2006yb}. 

Consider a K\"ahler cone $\cX$ which has complex dimension $n$ with K\"ahler form $\omega$.
The metric on the cone takes the form 
\beal{es55a1}
\ud s^2(\cX) = \ud r^2 + r^2 \ud s^2(Y) ~,~
\eea
where $Y= \cX|_{r=1}$ is the Sasakian base manifold over which $\cX$ is a real cone.
The K\"ahler form $\omega$ can be written, for $\eta$ a global one-form on $Y$, as
\beal{es55a2}
\omega = - \frac{1}{2} \ud(r^2 \eta) = \frac{1}{2} i \partial \overline{\partial} r^2 ~.~
\eea
Now, $Y$ has a Killing vector field $K$ called the {\bf Reeb vector}, defined, for the complex structure $\mathcal{I}$ on $\cX$, as
\beal{es55a3b}
K := \mathcal{I} \left( 
r \frac{\partial}{\partial r}
\right) ~.~
\eea

As discussed in section \sref{scy}, we require the Calabi-Yau cone $\cX$ to be toric, meaning that we have a torus action $\mathbb{T}^n$ that leaves the K\"ahler form $\omega$ invariant.
We take $\partial /\partial \phi_i$ to be the generators of the torus action and $\phi$ to be the angular coordinates of the action with $\phi \sim \phi + 2\pi$.
The torus action is integrable and allows for the introduction of symplectic coordinates $y_i$ defined as 
\beal{es55a4}
y_i := -  \frac{1}{2} \langle r^2 \eta , \frac{\partial}{\partial \phi_i } \rangle \ ,
\eea
with $\vev{~,~}$ the usual bilinear pairing between forms and vector fields.
Using these symplectic coordinates, the K\"ahler cone can be expressed as a torus fibration over a convex rational polyhedral cone of the form
\beal{es55a5}
\sigma = \{
y \in \mathbb{R}^n ~|~
l_a (y) := (y,v_a) \leq 0 ~,~
a=1,\dots,d
\} ~,~
\eea
where the linear function $l_a (y)$ is defined in terms of the inner product $(~,~)$ between $y$ and $v_a$ which are the inward pointing normal vectors to the $d$ facets of the polyhedral cone.
This cone $\sigma$ is precisely the rational polyhedral cone discussed in \eref{es40a1}.

In terms of the symplectic and angular coordinates, the K\"ahler form on $\cX$ is
\beal{es55a10}
\omega = \ud y_i \wedge \ud \phi_i ~,~
\eea
and the K\"ahler metric on $\cX$ becomes
\beal{es55a11}
\ud s^2 = G_{ij} \ud y_i \ud y_j + G^{ij} \ud \phi_i \ud \phi_j ~,~
\eea
where $G^{ij}$ is the inverse of $G_{ij} := \partial_i \partial_j G$ for some symplectic potential $G$.
This potential $G$ not only determines the metric but also relates to the complex structure on $\cX$ as
$\mathcal{I} = \left[
\ba{cc}
0 & -G^{ij} \\
G_{ij} & 0 
\ea
\right]
~.~$ 
Combining the expression of the complex structure and the original expression of the Killing vector field in \eref{es55a3b}, we can write
\beal{es44a15}
K = b_i \frac{\partial}{\partial \phi_i} ~,~
\eea
where $b\in \mathbb{R}^n$ and $b_i$ are the components of the Reeb vector.
They relate to the symplectic potential as follows
\beal{es44a16}
b_i = 2 G_{ij} y_j ~,~
\eea
and under the canonical form of the metric can be directly related to the vectors $v^a$ as
$b^{\text{can}} = \sum_a v^a$.

The Reeb vector has the following norm
\beal{es44a17}
1 = b_i b_j G^{ij} = 2 b_i G_{jk} y_k G^{ij} = 2 \left<b,y\right> ~,~
\eea
which can be used to show that the base of the cone $\cX$ at $r=1$ defines a hyperplane
\beal{es44a18}
\left\{
y \in \mathbb{R}^n ~|~
\left<b,y\right> = \frac{1}{2}
\right\} ~.~
\eea
This hyperplane intersects $\sigma$ in \eref{es55a5} to form a polytope $\Delta$, which is precisely what we called the toric diagram.
Note that the Reeb vector is always in the interior of the polytope $\Delta$ and is chosen such that along the hyperplane in \eref{es44a18}, one of its components is set to 
\beal{es44a19}
\fbox{$b_n=n$} ~.~
\eea

So far, everything described applies to general K\"ahler cones in the case of $\cX$ being Calabi-Yau, i.e. a toric Gorenstein singularity. Because of this, the normal vectors $v_a$ after some $GL(n,\mathbb{Z})$ transformation can be taken to be
\beal{es44a20}
v_a = (w_a,1) ~,~
\eea
where $w_a \in \mathbb{Z}^{n-1}$.
These are precisely the normal vectors that are used in the computation of the Hilbert series in \eref{es50a1}, when each simplex in the triangulation of the toric diagram is taken to be a polyhedral cone on its own right.
Note also that the vectors $w_a \in \mathbb{Z}^{n-1}$ are precisely pointing to the vertices of the toric diagram of the Calabi-Yau $n$-fold.\footnote{We can also construct $\cX$ symplectically in terms of a GLSM quotient of $\mathbb{C}^d$.
  The quotient is taken in terms of integer-valued charges $Q_a^i$, where here $i=1,\dots,d-n$ and $a = 1,\ldots,d$, as
  \beal{es44a25}
  \cX = 
  \left\{
  (Z_1,\dots, Z_d) \in \mathbb{C}^d ~|~
  \sum_a Q_i^{a} |Z_a|^2 = 0
  \right\} ~,~
  \eea
  with $Z_{a=1,\dots,d}$ the coordinates of $\mathbb{C}^d$.
  Note that the integer-valued charges $Q_i^a$ of the quotient form a map
  $Q_i^a ~:~ \mathbb{Z}^d \rightarrow \mathbb{Z}^{d-n}$.
  The normal vectors form the defining kernel of the map with
  $\sum_a Q_i^{a} v_a = 0$.
  Given that $\cX$ is Gorenstein and the normal vectors $v_a=(w_a,1)$, the charge vectors satisfy
  $\sum_a Q_i^a = 0$,
  which is the Calabi-Yau condition of the quotient.
  Note that since $\cX$ is an affine algebraic variety, the space of holomorphic functions on $\cX$ are elements of a coordinate ring.
  These elements are counted by a generating function which we have studied in section \sref{shs} as the Hilbert series of $\cX$.
}
As discussed in section \sref{scy}, we will study Calabi-Yau $n$-folds whose toric diagrams are in addition {\it reflexive}.

\subsection{The Volume Function \label{svolfct}} 
The volume of the Sasaki-Einstein base $Y$ is a strictly convex function that arises from the Einstein-Hilbert action on $Y$ \cite{Martelli:2005tp,Martelli:2006yb}.
For a Sasakian metric on $Y$, the volume takes the general form
\beal{es48a1}
\text{vol}[Y] = 
\int_Y \ud \mu = 2n \int_{r\leq 1} \frac{\omega^n}{n!} ~,~
\eea
where $\ud\mu$ is the Riemannian measure on the cone $\cX$.
It has been shown in \cite{Martelli:2006yb} that the volume of the base $Y$ is a function of Reeb vector components $b_i$ and in relation to the volume of $S^{2n-1}$, 
\beal{es48a2}
\text{vol}[S^{2n-1}] = \frac{2\pi}{(n-1)!} ~,~
\eea
the ratio
\beal{es48a3}
V(b; Y) := \frac{\text{vol}[Y]}{\text{vol}[S^{2n-1}]} ~,~
\eea
is always an algebraic number, which we define as the {\bf volume function}.
We remark that sometimes we will also denote this by $V(b; \cX)$ without ambiguity, though the volume is, of course, always that of the compact base $Y$.

The volume function can be derived directly from the Hilbert series of the coordinate ring of $\cX$ that was discussed in section \sref{shs},
\beal{es48a5}
\fbox{
  $
  V(b_i; Y) = \lim\limits_{\mu\rightarrow 0}  \mu^n g(t_i = \exp[-\mu b_i]; \cX)
  $
}
~.~
\eea
The above limit picks the leading order in $\mu$ in the expansion of the Hilbert series $g(t_i = \exp[-\mu b_i]; \cX)$, which was shown to be related to volume of the Sasaki-Einstein base $Y$ in \cite{Martelli:2005tp,Martelli:2006yb,Futaki:2008vr}.

\begin{figure}[h!!!]
\begin{center}
\resizebox{0.6\hsize}{!}{
  \includegraphics[trim=0mm 0mm 0mm 0mm, width=8in]{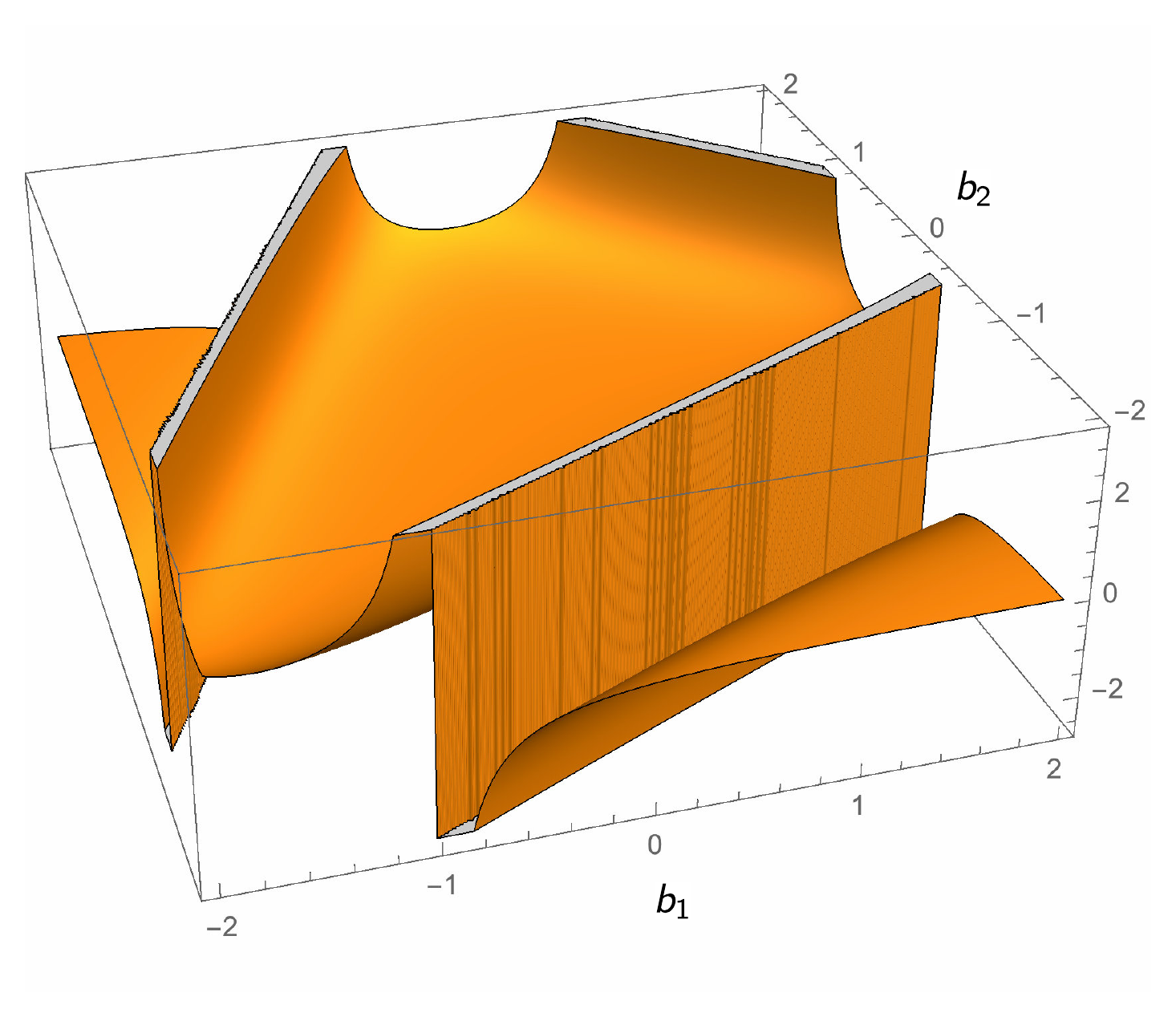}
}
\caption{{\sf {\small
      Volume function plot for $\mathbb{C}^3/\mathbb{Z}_3$.
      We set $b_3 = 3$ and $V$ is plotted against the components $b_{2,3}$ of the Reeb vector.
\label{fc3z3volfct}
}}}
\end{center}
\end{figure}

\subsubsection{Volume Functions for $\mathbb{C}^n/\mathbb{Z}_n$}

Let us take our running example of the Hilbert series that were computed in \eref{es51a3} for $\mathbb{C}^n/\mathbb{Z}_n$ with $n=3,4,5$ and calculate the corresponding volume functions.
Under the limit in \eref{es48a5}, the volume function are
\beal{es48a10}
V(b_i; \mathbb{C}^3/\mathbb{Z}_3) &=& 
\frac{-9}{
(b_1 - 2 b_2 - 3) (- 2 b_1 + b_2 - 3) (b_1 + b_2 - 3) 
} ~,~
\nn\\
V(b_i; \mathbb{C}^4/\mathbb{Z}_4) &=& 
\frac{64}{
(b_1 + b_2 - 3 b_3 - 4) (b_1 - 3 b_2 + b_3 - 4) (- 3 b_1 + b_2 + b_3 - 4)
} 
\nn\\
&&
\times
\frac{1}{
 (b_1 + b_2 + b_3 -  4) 
}~,~
\nn\\
V(b_i; \mathbb{C}^5/\mathbb{Z}_5) &=& 
\frac{
-625
}{
(b_1 + b_2 + b_3 - 4 b_4 - 5) (b_1 + b_2 - 4 b_3 + b_4 - 5) (b_1 - 4 b_2 + b_3 + b_4 - 5)  } 
\nn\\
&&
\times 
\frac{1}{ (-4 b_1 + b_2 + b_3 + b_4 - 5) (b_1 + b_2 + b_3 + b_4 - 5)}~.~
\eea
As discussed in \eqref{es44a19},  we take $b_n=n$.
\fref{fc3z3volfct} shows a $b_i$-plot of the volume function for $\IC^3/\IZ_3$.
Here, $\dim_{\IC} \cX = 3$ so there are three components of the Reeb vector.
We set $b_3 = 3$, and plot $V(b1,b2)$.

\subsection{Volume Minimization \label{svolmin}}

The volume function plays an important role in physics.
In this subsection, we briefly review the story of how the AdS/CFT correspondence gives the remarkable statement that minimizing the volume function is the same as maximizing the central charge in the dual conformal field theory for Calabi-Yau 3-folds.

\subsubsection{AdS/CFT and $a$-maximization \label{samax}}
As discussed in section \sref{ss2}, D3-branes probing Calabi-Yau cones $\cX$ of dimension 3 give rise to $4d$ $\mathcal{N}=1$ supersymmetric gauge theories.
It is expected that theories of this class flow at low energies to a superconformal fixed point. The Reeb vector discussed in \S\ref{sse} generates the $U(1)$ R-symmetry of the theory and the global R-symmetry is part of the superconformal algebra.

The superconformal R-charges of the theory are determined by a procedure known as {\bf $a$-maximization} \cite{Intriligator:2003jj,Butti:2005vn,Butti:2005ps}. It determines the superconformal R-charges uniquely by maximizing the combination of 't Hooft anomalies which completely determine the central charge of the superconformal field theory in 4 dimensions,
\beal{es53a1}
a(R) = \frac{1}{32} (9 \Tr R^3 - 3 \Tr R) ~,~
\eea
where for anomaly free theories $\Tr R = 0$.
The idea is to write a trial function $a_{\text{trial}}(R_t)$ in terms of trial R-charges,
\beal{es53a2}
R_t = R_0 + \sum_i c_i F_i
~,~
\eea
where $F_i$ can be thought as charges coming from the global symmetries which are not the R-symmetry in the theory. When the trial function $a_{\text{trial}}(R_t)$ is maximized, only the superconformal R-charges $R_0$ contribute as shown in \cite{Intriligator:2003jj}. 

\comment{
\begin{figure}[h!!!]
\begin{center}
\resizebox{0.4\hsize}{!}{
  \includegraphics[trim=0mm 0mm 0mm 0mm, width=8in]{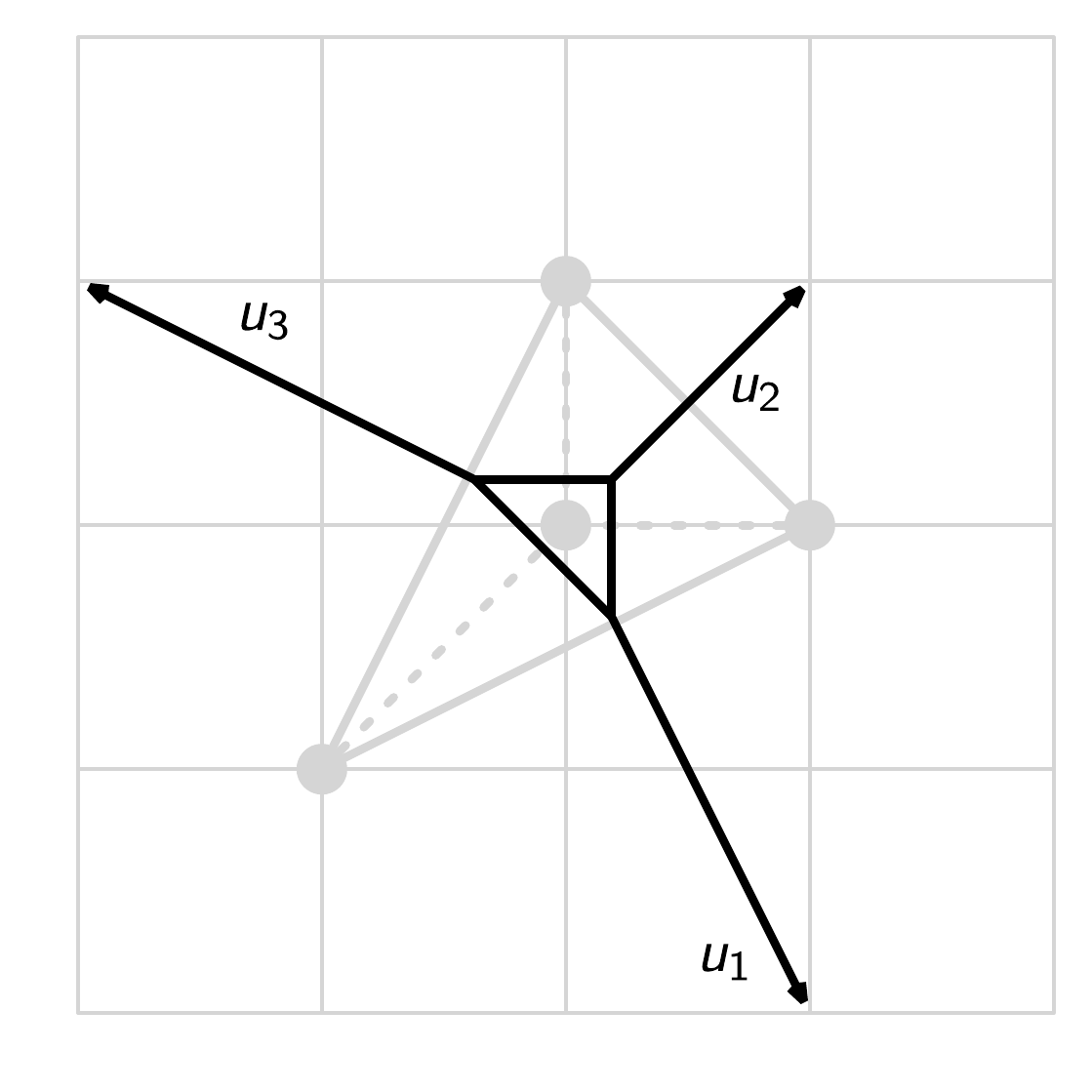}
}
\caption{{\sf {\small
The $(p,q)$-web diagram originating from the toric diagram of $\mathbb{C}^3/\mathbb{Z}_3$. The external legs of the $(p,q)$-web are vectors $u_i$ which contribute to the trial cubic $a$-function.
\label{frchpq}
}}}
\end{center}
\end{figure}
}

When the probed Calabi-Yau 3-fold singularity $\cX$ is toric, it was shown in \cite{Butti:2005vn,Butti:2005ps} that the trial $a$-function takes the form of a cubic function 
\beal{es53a3}
a(R) = \frac{9}{32} \left(2 A + \sum_{(i,j)\in C}  |\left[ u_i, u_j \right]| (R_{i+1} + R_{i+2} + \dots + R_{j} - 1)^3 \right) ~,~
\eea
where $R_i$ are trial R-charges with $\sum_i R_i = 0$ and $\left[ u_i, u_j \right] := \det(u_i,u_j)$. Note that here $u_i$ are the vectors associated to the $(p,q)$-web diagram corresponding to the Calabi-Yau 3-fold toric diagram as illustrated in part (b) of \fref{fdp3pq}.
Moreover, $A$ is the area of the toric diagram scaled so that the unit triangle has area $1$ and $C$ is the set of all the unordered pairs of vectors in the $(p,q)$-web (including non-adjacent ones).

\begin{figure}[h!!!]
\begin{center}
\resizebox{0.8\hsize}{!}{
  \includegraphics[trim=30mm 10mm 0mm 30mm, width=8in]{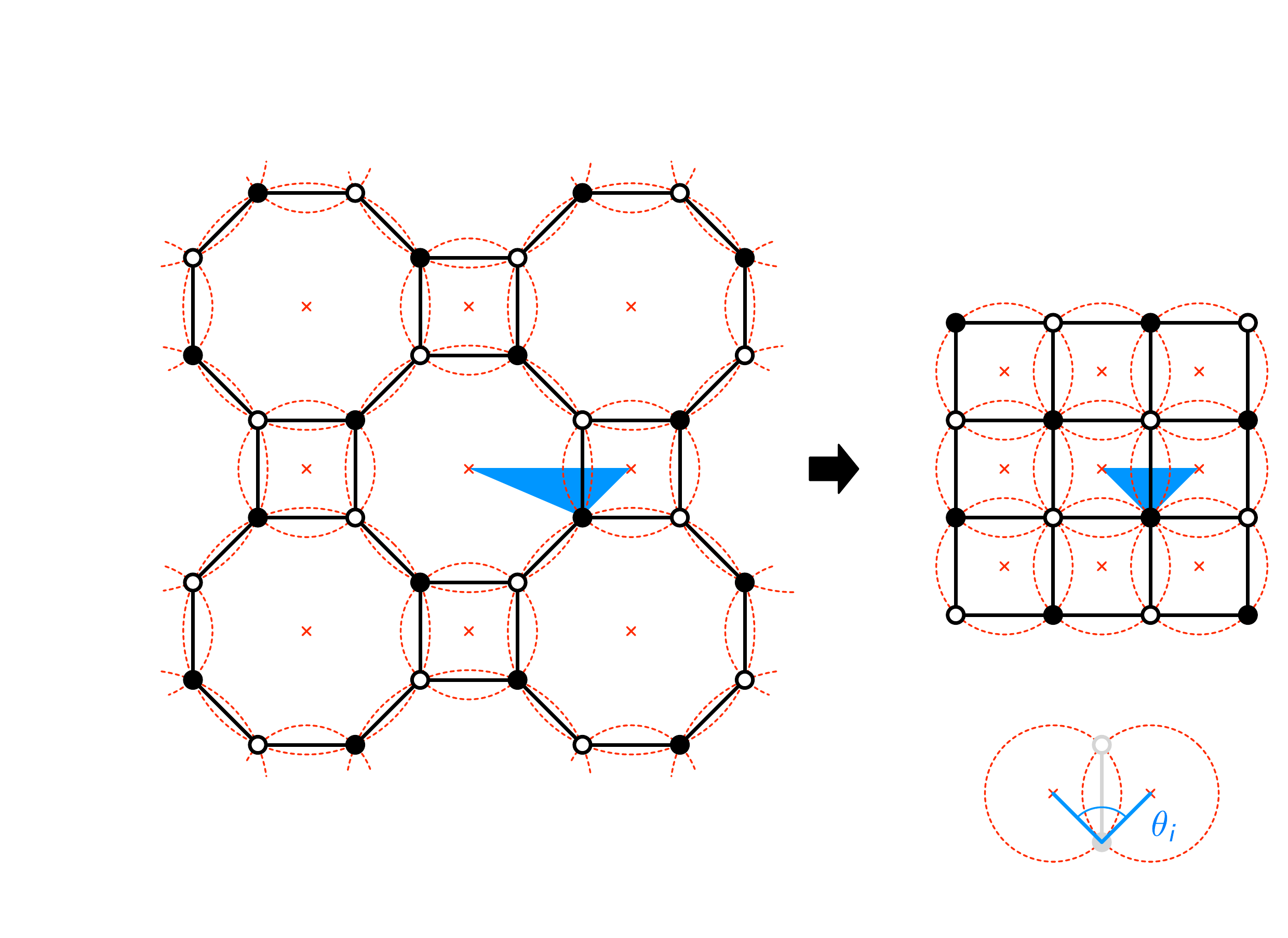}
}
\caption{{\sf {\small
A brane tiling under isoradial embedding links to distinct brane tilings related by Seiberg duality. This figure shows the two brane tilings corresponding to the Hirzebruch zero surface $F_0$. The blue triangle indicates the angle $\theta_i$ corresponding to the R-charge $R_i$.
\label{fisoradial}
}}}
\end{center}
\end{figure}

In the context of brane tilings, $a$-maximization was shown to relate to isoradial embeddings of the bipartite periodic graph on the $2$-torus \cite{Hanany:2005ss}. R-charges of bifundamental chiral matter multiplets $R_i$ were associated to angles in the bipartite graph,
\beal{es53a5}
\theta_i = \pi R_i
\eea
as illustrated in \fref{fisoradial}. The isoradial embedding of the graph determines the correct angles $\theta_i$ which in turn relate to the superconformal R-charges $R_i$. 
 
\subsubsection{Volume Minimization is $a$-maximization \label{svolminamax}}

The AdS/CFT correspondence relates the central charge $a$-function with the volume of the Sasaki-Einstein 5-manifold base $Y$ for $4d$ $\mathcal{N}=1$ superconformal field theories as follows \cite{Gubser:1998vd,Henningson:1998gx},
\beal{es54a1}
a(R;Y) = \frac{\pi^3 N^2}{4 V(R;Y)} ~.~
\eea
Whence we can define the normalized $a$-function
\beal{es54a2}
A(R;Y) \equiv
\frac{a(R; Y)}{a(R; S^5)} 
= \frac{\text{vol}[S^5]}{\text{vol}[Y]} 
= \frac{1}{V(b; Y)} ~.~
\eea
This reciprocal relationship implies that when the $a$-function is maximized along the RG-flow, the corresponding value of the volume function is minimized. This is in line with the fact that the Reeb vector generates the $U(1)_R$ symmetry and that the volume function $V(b_i;Y)$ is dependent only on its components $b_i$. 

Following our review on volume functions in section \sref{svolfct}, one can write a trial volume function $V(b_i;Y)$ dependent on trial Reeb vector components $b_i$ such that the Reeb vector $b$ is always in the interior of the cone.
As shown in \cite{Martelli:2006yb}, the minimized volume $V(b_i^{*};Y)$ identifies the critical Reeb vector $b^{*}$, which under the volume function in \eref{es48a3} results in the volume of the Sasaki-Einstein manifold $Y$. The process of identifying the critical Reeb vector is also known as {\bf $Z$-minimization} \cite{Butti:2005vn,Tachikawa:2005tq}.

\subsubsection{Towards a Classification of Calabi-Yau Volumes \label{sfree}}
We have discussed $a$-maximization and volume minimization in the context of type IIB string theory on $AdS_5 \times Y$ corresponding to $4d$ $\mathcal{N}=1$ superconformal field theories.
As discussed in section \sref{ss2}, toric Calabi-Yau 4-folds with Sasaki-Einstein 7-manifolds as bases were studied in the context of worldvolume theories of M2-branes probing the Calabi-Yau 4-fold singularity. These were some of the first examples where the AdS$_4$ /CFT$_3$ correspondence was studied systematically.
Here, the worldvolume theory on the M2-branes is a $3d$ $\mathcal{N}=2$ superconformal Chern-Simons theory.
For these theories, the role of the $a$-function is played by the supersymmetric free energy on the 3-sphere \cite{Kapustin:2009kz,Jafferis:2010un}
\beal{es55a1F}
F = - \log|Z_{S^3}| ~.~
\eea
The free energy when extremized was shown to determine the superconformal R-charges of the theory in \cite{Closset:2012vg}. Moreover, it was shown in \cite{Herzog:2010hf,Martelli:2011qj} that the free energy can be related in the large $N$ limit to the volume of the Sasaki-Einstein 7-manifold $Y$ as follows
\beal{es55a2F}
F= N^{3/2} \sqrt{\frac{2\pi^6}{27 \text{vol}[Y]}} ~.~
\eea

However, there are several obstacles that do not allow us to make the same conclusions as in the AdS$_5$/CFT$_4$ case. First of all, it is not known what the $3d$ $\mathcal{N}=2$ theory is for instance for a general toric Calabi-Yau 4-fold. Brane tilings provide a unified description of $4d$ $\mathcal{N}=1$ theories and the corresponding toric Calabi-Yau 3-folds \cite{Hanany:2005ve,Franco:2005rj}. 
Such a bridge, following many proposals \cite{Hanany:2008cd,Lee:2006hw}, does not exist for worldvolume theories of M2-branes and arbitrary toric Calabi-Yau 4-folds. 
For some toric Calabi-Yau 4-folds, the corresponding $3d$ $\mathcal{N}=2$ theory are not known.
This is a major obstacle in verifying that the minimum of the volume of the Sasaki-Einstein 7-manifold systematically corresponds to the free energy $F$. 

Moreover, the free energy $F$ does not necessarily take a simple polynomial form as it is the case for the $a$-function as discussed in section \sref{samax}. In \eref{es53a3}, we referred to the result in \cite{Butti:2005vn,Butti:2005ps} that the $a$-function for $4d$ $\mathcal{N}=1$ theories corresponding to toric Calabi-Yau 3-folds is a cubic function in terms of trial R-charges. Naively, one would hope that the free energy $F$ for $3d$ $\mathcal{N}=2$ theories is a quartic function of trial charges. This is not generally the case  and several works \cite{Amariti:2012tj,Lee:2014rca} attempted to correct the conjectured quartic function so as to match its extremal value to the volume of a certain class of Sasaki-Einstein 7-manifolds.

As reviewed in \sref{ss2}, toric Calabi-Yau 4-folds also appeared in the context of worldvolume theories living on probe D1-branes. These theories are $2d$ $(0,2)$ quiver gauge theories realized in terms of a type IIA brane configuration known as brane brick models \cite{Franco:2015tna,Franco:2015tya,Franco:2016qxh}. 
More recently, toric Calabi-Yau 5-folds were studied in the context of probe Euclidean D(-1)-branes \cite{Franco:2016tcm}. The worldvolume theory of the probe D(-1)-branes is a $0d$ $\mathcal{N}=1$ supersymmetric matrix model. 
Both probed toric Calabi-Yau 4-folds and 5-folds have base manifolds whose volumes can be computed via volume minimization. It is an interesting question to ask what role the volume minimum plays in the context of the corresponding $2d$ and $0d$ theories.
 
Computing the volume minima for large sets of toric Calabi-Yau $n$-folds may shed some light to these questions. In the following section, we will precisely do this for toric Calabi-Yau $n$-folds whose toric diagrams are reflexive. By plotting these volume minima against topological quantities of the underlying toric varieties allows us to derive general statements about the overall behavior of volume functions for Calabi-Yaus in various dimensions.


\section{Minimum Volumes for CY${}_{n}$}\label{s:res}\setall

We can now combine the theoretical results obtained thus far for the topological quantities, Hilbert series and volume functions of the reflexive polytopes in various dimensions, in conjunction with the data we have collected.
In this section, we will study the connections between these quantities, derive some bounds and raise intriguing conjectures.

\subsection{Volume Minimum versus Topological Invariants}

In the previous sections, we have discussed various characteristic quantities that can be derived from the compact toric variety $X(\Delta)$ corresponding to a reflexive polytope $\Delta$ and the non-compact Calabi-Yau cone $\cX$ over $X(\Delta)$ and its Sasaki-Einstein base manifold $Y$.
Throughout this section, $\dim_{\IC}\cX = n$, $\dim_{\IC} X(\Delta_{n-1}) = n-1$, and $\dim_{\IR} Y =  2n-1$.

The first characteristic that can be computed is the Euler number for $\widetilde{X(\Delta)}$, which is simply the number of $n$-cones of the fan $\Sigma(\Delta_{n-1})$ by Theorem \ref{betti}.
On the other hand, as was discussed in section \sref{svolmin}, the volume of the Sasaki-Einstein base $Y$ is another important quantity for $\cX$.
The volume function $V(b_i; Y)$ is expressed as a function of Reeb vector components $b_i$ and can be obtained directly from the Hilbert series of the coordinate ring for $\cX$. The Hilbert series in turn is easily obtained from the polytope data by \eqref{es50a1}.
For threefolds, from the point of view of the dual gauge theory, following $a$-maximization and the reciprocal relationship to the volume function $V(b_i;Y)$ discussed in \sref{samax}, the volume of $Y$ minimizes to a critical value when the theory flows to a IR fixed point. As such, the minimized volume matched for theories that flow to the same fixed point -- theories that are Seiberg dual in 4 dimensions.

In order to better understand the role played by the minimum volume for Calabi-Yau varieties in higher dimensions, we propose
to first study the relationship between
\beal{es111a1}
  \fbox{ 
  \begin{minipage}[c][3\height][c]{\dimexpr \textwidth-90 \fboxsep-2\fboxrule\relax}
  $
  V(b_i^*; Y) :=
  \min\limits_{b_i | b_{n} = n} V(b_i; Y)
  $
  \end{minipage}}
\longleftrightarrow
  \fbox{
  \begin{minipage}[c][2\height][c]{\dimexpr \textwidth-120 \fboxsep-6\fboxrule\relax}
  $
  \chi(\widetilde{X(\Delta_{n-1})})
  $
  \end{minipage}}
~.~
\eea
The Euler number $\chi(\widetilde{X(\Delta)})$ is the number of top-dimensional cones as stated in \eqref{es80a15}.

Because the numbers of reflexive polytopes is finite in each dimension, we have:
\begin{lemma}
For all non-compact toric Calabi-Yau $n$-folds $\cX$ with Sasaki-Einstein base $Y$, originating from toric varieties $X(\Delta_{n-1})$ and reflexive polytopes $\Delta_{n-1}$, there is a finite number thereof where
\beal{es500a2}
1/ V(b_i^{*}; Y) = \chi(\widetilde{X(\Delta_{n-1})}) ~.~
\eea
\end{lemma}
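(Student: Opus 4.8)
The plan is to deduce the statement directly from the finiteness of the reflexive classification, so that the argument is a counting observation rather than an analytic one. First I would recall from \tref{tcyno} that in each fixed dimension the number of $GL(n-1;\IZ)$-inequivalent reflexive polytopes $\Delta_{n-1}$ is finite. Summing these finite counts over the dimensions $n-1 \le 4$ treated in this paper yields a single finite index set $\mathcal{S}$ parametrizing all of the Calabi-Yau cones $\cX = \cC(X(\Delta_{n-1}))$ under consideration. Once $\mathcal{S}$ is finite, \emph{any} predicate whatsoever selects a subset of $\mathcal{S}$, which is automatically finite; it therefore remains only to verify that the two quantities appearing in \eref{es500a2} are each a single well-defined number attached to every element of $\mathcal{S}$.

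For the right-hand side this is immediate: by Theorem \ref{betti} the Euler number $\chi(\widetilde{X(\Delta_{n-1})})$ equals $d_n$, the number of top-dimensional cones in any FRS triangulation of $\Delta_{n-1}$, a fixed non-negative integer (and for $n-1 \le 3$ given explicitly by Corollary \ref{chi3} and its dimension-$2$ analogue). For the left-hand side I would invoke the volume-minimization set-up of \sref{svolfct}: the volume function $V(b_i;Y)$ is a strictly convex function of the Reeb components $b_i$ on the interior of the Reeb cone and diverges as $b_i$ approaches the boundary of that cone, so it attains a unique interior minimum at $b_i = b_i^*$. Hence $V(b_i^*;Y)$, and therefore its reciprocal, is a single well-defined algebraic number, as established in \cite{Martelli:2005tp,Martelli:2006yb}. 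This pins down both sides of \eref{es500a2} as honest functions on the finite set $\mathcal{S}$.

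With both quantities well-defined, the equality $1/V(b_i^*;Y) = \chi(\widetilde{X(\Delta_{n-1})})$ is merely a predicate on $\mathcal{S}$, and the collection of $\cX$ for which it holds is a subset of a finite set, hence finite. I do not anticipate any serious obstacle here; the single point requiring genuine justification is the uniqueness of the minimizer $b_i^*$, which is supplied by the strict convexity recalled above (and by the attainment of the minimum in the interior), not by any case-by-case computation. Determining \emph{precisely which} cones saturate the equality is a separate, purely computational task carried out subsequently, but it plays no role in the finiteness assertion itself.
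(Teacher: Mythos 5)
Your argument is correct and is essentially the paper's own: the text derives the lemma in one line from the finiteness of the reflexive classification in each dimension (Table \ref{tcyno}), so the set of cones satisfying \eref{es500a2} is a subset of a finite set. Your additional care in checking that both sides of the equality are well-defined (via Theorem \ref{betti} and the strict convexity of the volume function) is sound but does not change the route.
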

What we will attempt to do now is to see whether any reflexive polytopes $\Delta$ satisfies this relation, isolate them and test if the equality furnishes some bound.

\subsubsection{Calabi-Yau 3-fold Cones}
\begin{figure}[h!!!]
\begin{center}
\resizebox{0.7\hsize}{!}{
  \includegraphics[trim=0mm 0mm 0mm 0mm, width=8in]{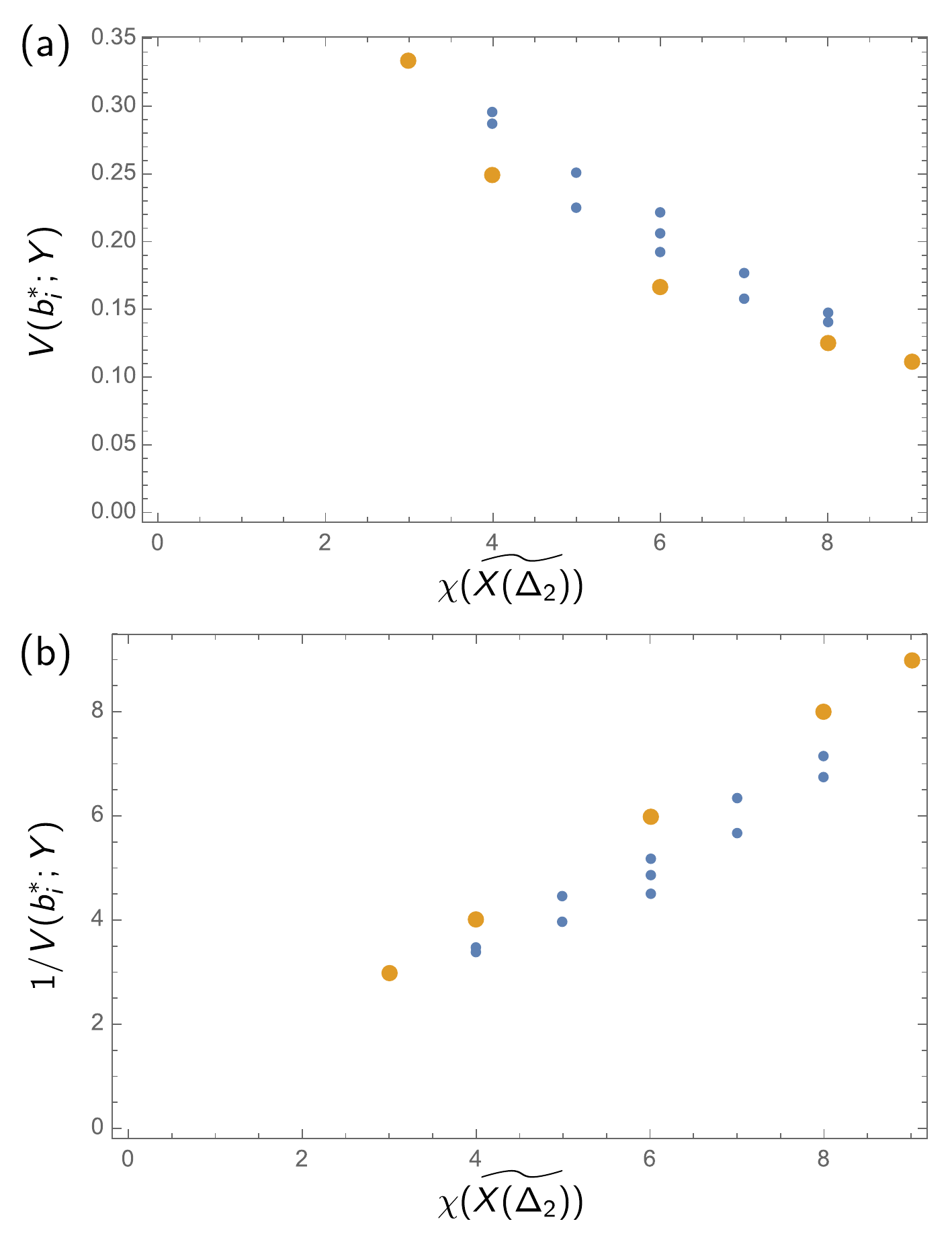}
}
\caption{{\sf {\small
(a) The Euler number of the 16 reflexive toric Calabi-Yau 3-folds $\mathcal{X}$ against the minimum volume $V(b_i^{*};Y)$, and (b) the Euler number against the inverse of the minimum volume. Note that the orange points correspond to the 5 Calabi-Yau 3-folds  which are Abelian orbifolds of $\mathbb{C}^3$ with $1/V(b_i^{*};Y)=\chi(\widetilde{X(\Delta_2)})$.
\label{f:n=2min}}}}
\end{center}
\end{figure}
We begin with the 16 reflexive polygons $\Delta_2$; for these $\cX = \cC_{\IC}(X(\Delta_2))$ are Calabi-Yau 3-folds.
The plot of the volume minimum of $Y$ against the Euler number is shown in \fref{f:n=2min}(a).
We can clearly see from the plot that for some toric Calabi-Yau 3-folds, the minimum volume is inversely related to the corresponding Euler number of $\widetilde{X(\Delta_2)}$.
To illustrate this, in \fref{f:n=2min}(b), we plot the inverse of the minimum volume $1/V(b_i^{*}; Y)$ against the Euler number. By doing so, we note that for 5 toric Calabi-Yau 3-folds, the inverse of the minimum volume is exactly the corresponding Euler number.

\begin{figure}[h!!!]
\begin{center}
\resizebox{0.7\hsize}{!}{
  \includegraphics[trim=0mm 0mm 0mm 0mm, width=8in]{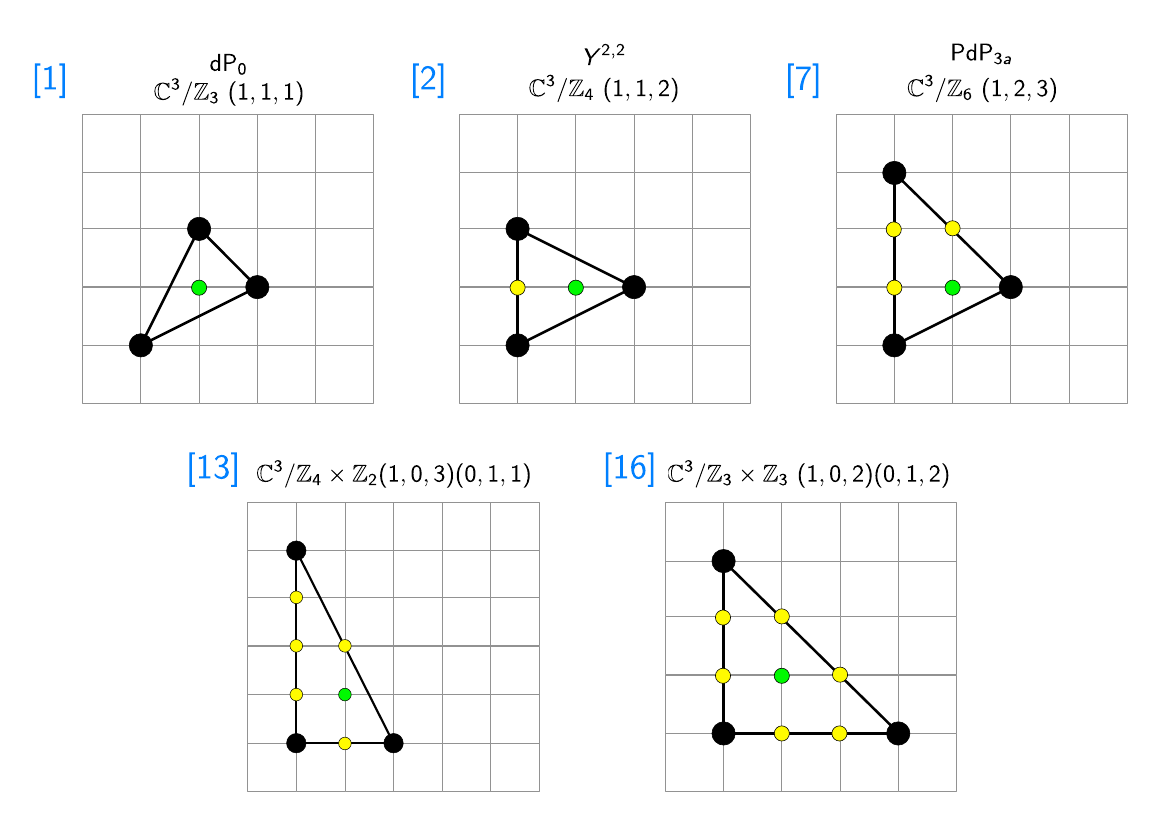}
}
\caption{{\sf {\small
      The 5 toric Calabi-Yau 3-folds whose inverse minimum volume $1/V(b_i^{*};Y)$ is the same as the Euler number $\chi(\widetilde{X(\Delta)})$ of the complete resolution of $X(\Delta)$.
      They are all lattice triangles. We retain the numbering from Figure \ref{f:n=2}.
\label{f:n=2min5}}}}
\end{center}
\end{figure}

A closer look reveals that these 5 toric Calabi-Yau $3$-folds are all Abelian orbifolds of $\mathbb{C}^3$. The corresponding toric diagrams are lattice triangles and are shown in \fref{f:n=2min5}.
We note, from Figure \ref{f:n=2}, that the number of (extremal) vertices of the 16 reflexives range from 3 to 6 and the ones with 3 vertices all correspond to $\cX$ being Abelian orbifolds of $\IC^3$.
For reference, we include the 16 volume functions as well as their critical points and values in Appendix \ref{ap:refcy3}, \tref{tcy3volminA}.

Indeed, this is true in general.
\begin{theorem}\label{orb}
  For $\cX$ being an Abelian orbifold of $\mathbb{C}^{n}$ whose toric diagram is a reflexive polytope, the minimum volume and the Euler number are related by
  \[
  1/ V(b_i^{*}; Y) = \chi(\widetilde{X(\Delta_{n-1})}) \ .
  \]
\end{theorem}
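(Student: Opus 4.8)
The plan is to show that both sides of the claimed identity independently equal the order $|\Gamma|$ of the orbifold group. Write $\cX = \IC^n/\Gamma$ with $\Gamma \subset SL(n,\IC)$ a finite abelian group sitting inside the algebraic torus, and suppose its toric diagram $\Delta_{n-1}$ is reflexive. The starting observation is that orbifolding does not change the cone: the rays of $\IC^n$ generate a lattice basis of $N_0 = \IZ^n$ lying at height $1$ under the Gorenstein grading, and passing to $\Gamma$ replaces $N_0$ by an overlattice $N \supset N_0$ of index $[N:N_0] = |\Gamma|$ while keeping the same rational polyhedral cone $\sigma$. Hence $\Delta_{n-1}$ is the standard $(n-1)$-simplex, but now of normalized volume $|\Gamma|$ measured in the finer lattice $N$.

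First I would dispatch the topological side. By Theorem~\ref{betti}, $\chi(\widetilde{X(\Delta_{n-1})}) = d_{n-1}$ is the number of top-dimensional cones of the FRS triangulation of $\Delta_{n-1}$. Since every simplex of a fine (unimodular) triangulation has normalized volume $1$, this count equals the normalized volume of $\Delta_{n-1}$, which by the previous paragraph is $|\Gamma|$. As a cross-check, this matches the McKay-type statement that a crepant resolution of an abelian $\IC^n/\Gamma$ has Euler number equal to the number of conjugacy classes of $\Gamma$, i.e.\ $|\Gamma|$.

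Next I would compute the minimized volume directly from \eqref{es48a5}. The Hilbert series of $\cX$ is the Molien series $g(t_i;\cX) = \frac{1}{|\Gamma|}\sum_{\gamma\in\Gamma}\prod_{j=1}^n (1-\omega_j(\gamma)\,t_j)^{-1}$, where $\omega_j(\gamma)$ are the torus weights of $\gamma$. Substituting $t_j = \exp[-\mu b_j]$ and letting $\mu \to 0$, only the identity element produces the full order-$n$ pole: for $\gamma \neq 1$ at least one weight satisfies $\omega_j(\gamma)\neq 1$, so the factor $1-\omega_j(\gamma)t_j$ stays nonzero and lowers the pole order. Hence the leading term obeys $V(b_i;Y) = \tfrac{1}{|\Gamma|}V(b_i;\IC^n)$, and in the standard weight coordinates $V(b_i;\IC^n) = \prod_{j=1}^n b_j^{-1}$. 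Because $\Gamma \subset SL(n,\IC)$, the Calabi-Yau normalization \eqref{es44a19} is inherited unchanged and reads $\sum_j b_j = n$; minimizing $\prod_j b_j^{-1}$ under this constraint via AM-GM gives the maximum $\prod_j b_j = 1$ at the symmetric Reeb vector $b^* = (1,\dots,1)$, so $V(b_i^*;Y) = 1/|\Gamma|$. Combining the two computations yields $1/V(b_i^*;Y) = |\Gamma| = \chi(\widetilde{X(\Delta_{n-1})})$.

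The main obstacle is bookkeeping rather than any deep idea: one must verify that the Reeb vector entering the orbifold volume function and the one for $\IC^n$ live in the same real space with the same normalization \eqref{es44a19}, so that the pointwise identity $V(\,\cdot\,;Y) = |\Gamma|^{-1}V(\,\cdot\,;\IC^n)$ genuinely transports both the minimizer and its value. Equivalently, the subtle point is justifying that the overlattice refinement is purely ``horizontal'' — of index $|\Gamma|$ at each height slice, which is precisely where $\Gamma\subset SL(n,\IC)$ enters — so that the normalized-volume count and the Molien reduction see the \emph{same} factor $|\Gamma|$. A more geometric alternative that sidesteps the Molien manipulation is to note that the Sasaki-Einstein base of $\IC^n/\Gamma$ is the round quotient $S^{2n-1}/\Gamma$, of volume $\mathrm{vol}[S^{2n-1}]/|\Gamma|$, giving $V(b_i^*;Y)=1/|\Gamma|$ at once; I would retain this as a sanity check on the algebraic argument.
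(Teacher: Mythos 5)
Your proposal is correct, and the topological half (number of unimodular simplices in an FRS triangulation equals the normalized volume of $\Delta_{n-1}$, which equals $|\Gamma|$ because orbifolding refines the lattice by index $|\Gamma|$) is essentially identical to the paper's. Where you genuinely diverge is the analytic half. The paper simply asserts that $\mathrm{vol}[S^{2n-1}/\Gamma]=\mathrm{vol}[S^{2n-1}]/|\Gamma|$ ``by construction'' and concludes $V(b_i^*;\IC^n/\Gamma)=1/|\Gamma|$ from $V(b_i^*;\IC^n)=1$ --- i.e.\ it takes for granted that the minimizing Reeb vector of the quotient is still the round one; this geometric shortcut is what you keep only as a sanity check. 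Your main route instead derives the \emph{pointwise} identity $V(b_i;Y)=|\Gamma|^{-1}\prod_j b_j^{-1}$ from the Molien form of the Hilbert series via \eqref{es48a5} (only the identity element of $\Gamma$ produces the full order-$n$ pole), and then locates the minimizer explicitly by AM--GM under the constraint $\sum_j b_j=n$, which is the correct $GL(n,\IZ)$-transform of the normalization $b_n=n$ of \eqref{es44a19} into weight coordinates. This buys something the paper's proof glosses over: it shows that minimization commutes with the overall $1/|\Gamma|$ rescaling because the domain of admissible Reeb vectors (the interior of the same rational cone) is unchanged by passing to the overlattice, so both the critical point $b^*=(1,\dots,1)$ and the critical value transport correctly. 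You also correctly isolate the one place where $\Gamma\subset SL(n,\IC)$ is needed --- the horizontality of the lattice refinement, which guarantees that the polytope-volume count and the Molien reduction see the \emph{same} factor $|\Gamma|$. The only loose end is that you should say explicitly that the constraint $\sum_j b_j = n$ is \eqref{es44a19} \emph{after} the change of basis taking the cone generators to the standard basis; as literally written \eqref{es44a19} fixes the last component, not the sum.
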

\begin{proof}
  We recall from \eref{es48a2} that the volume of the sphere $S^{2n-1}$ is $\text{vol}[S^{2n-1}] = \frac{2\pi}{(n-1)!}$ and that the minimal volume of the Sasaki-Einstein base $Y$ of $\cX$ is defined as
$V(b_i^*; \cX) \equiv \frac{\text{vol}[Y]}{\text{vol}[S^{2n-1}]}$,
  normalized by $\text{vol}[S^{2n+1}]$.
  In particular, if $Y$ is the sphere $S^{2n-1}$ itself, the normalized volume
  \begin{equation}\label{vCn}
  V(b_i^*; \IC^n) = V(b_i^*; S^{2n-1}) = 1 \ ,
  \end{equation}
  since $\IC^n$ is the real cone over $S^{2n-1}$ (recall that in our convention, we write $V(b;\cX)$ and $V(b;Y)$ inter-changeably).
  
  Moreover, when one takes an Abelian orbifold of the form $\mathbb{C}^{n}/\Gamma$, where the order is $|\Gamma|$, the volume for $S^{2n-1}/\Gamma$ becomes, by construction,
  \beal{es500a16}
  \frac{\text{vol}[S^{2n-1}]}{\text{vol}[S^{2n-1}/\Gamma]} =
  \frac{\cancelto{1}{V(b_i^*; \mathbb{C}^{n})}}{V(b_i^*; \mathbb{C}^{n}/\Gamma)} = |\Gamma| ,
  \eea
  where in the last step, we have used \eqref{vCn}.
  
  Now, under orbifolding, while the volume of the sphere {\em decreases} by a factor of $|\Gamma|$ as above, it is a standard fact of toric geometry \cite{fulton} that the $n$-dimensional volume of the convex lattice polytope $\Delta_n$ that represent the toric diagram of $\cX$ {\em increases} by a factor of $|\Gamma|$.
One can see this, for example, from the toric diagram of $\IC^3$ versus that of $\IC^3/\IZ_3$.
On the other hand, because our FRS triangulation ensures each simplex to be unit volume for the sake of smoothness, the scaling factor $\Gamma$ of the toric diagram is precisely the number $d_{n-1}$ of top-dimensional cones, which we recall from \eqref{es80a15}, is the Euler number:
\begin{equation}\label{dnOrb}
d_{n-1} = |\Gamma| = \chi(\widetilde{X(\Delta_{n-1})})
\end{equation}
Combing \eqref{es500a16} and \eqref{dnOrb} thus gives $1 / V(b_i^*; \mathbb{C}^{n}/\Gamma) = \chi(\widetilde{X(\Delta_{n-1})})$, as required.
\comment{
  \beal{es500a17}
  \frac{
    V_{toric}(\mathbb{C}^{n+1})}{
    V_{toric}(\mathbb{C}^{n+1}/\Gamma)
  } = \frac{1}{|\Gamma|} .
  \eea
On the other hand, from \eqref{es80a15}, the Euler number is given precisely by number $d_n$ of $n$-cones in the fan for the smoothing $\widetilde{X(\Delta)}$ due to FRS triangulation.
  Since $\Delta$ of $\mathbb{C}$ is made of a single simplicial cone, with the toric diagram being a lattice $n$-simplex of unit volume, and the fact that the volume of the toric diagram $V_{toric}(X)=d_n$ under FRS triangulation of $\Delta$, we have following \eref{es500a16}, \eref{es500a17} and \eref{es500a17} that
  \beal{es500a19}
  d_n =
  \frac{
    V_{toric}(\mathbb{C}^{n+1}/\Gamma)}{
    V_{toric}(\mathbb{C}^{n+1})}
  = |\Gamma| = 
  \frac{\text{vol}[S^{2n+1}]}{\text{vol}[S^{2n+1}/\Gamma]}
  = \frac{1}{V(b;\mathbb{C}^{n+1}/\Gamma)}~,~
  \eea
  which implies \eref{es500a2} for Abelian orbifolds of $\mathbb{C}^{n+1}$ that are reflexive toric Calabi-Yau $n$-folds.
}
\qed
\end{proof}

\subsubsection{Calabi-Yau 4-fold Cones}

\begin{figure}[h!!!]
\begin{center}
\resizebox{0.7\hsize}{!}{
  \includegraphics[trim=0mm 0mm 0mm 0mm, width=8in]{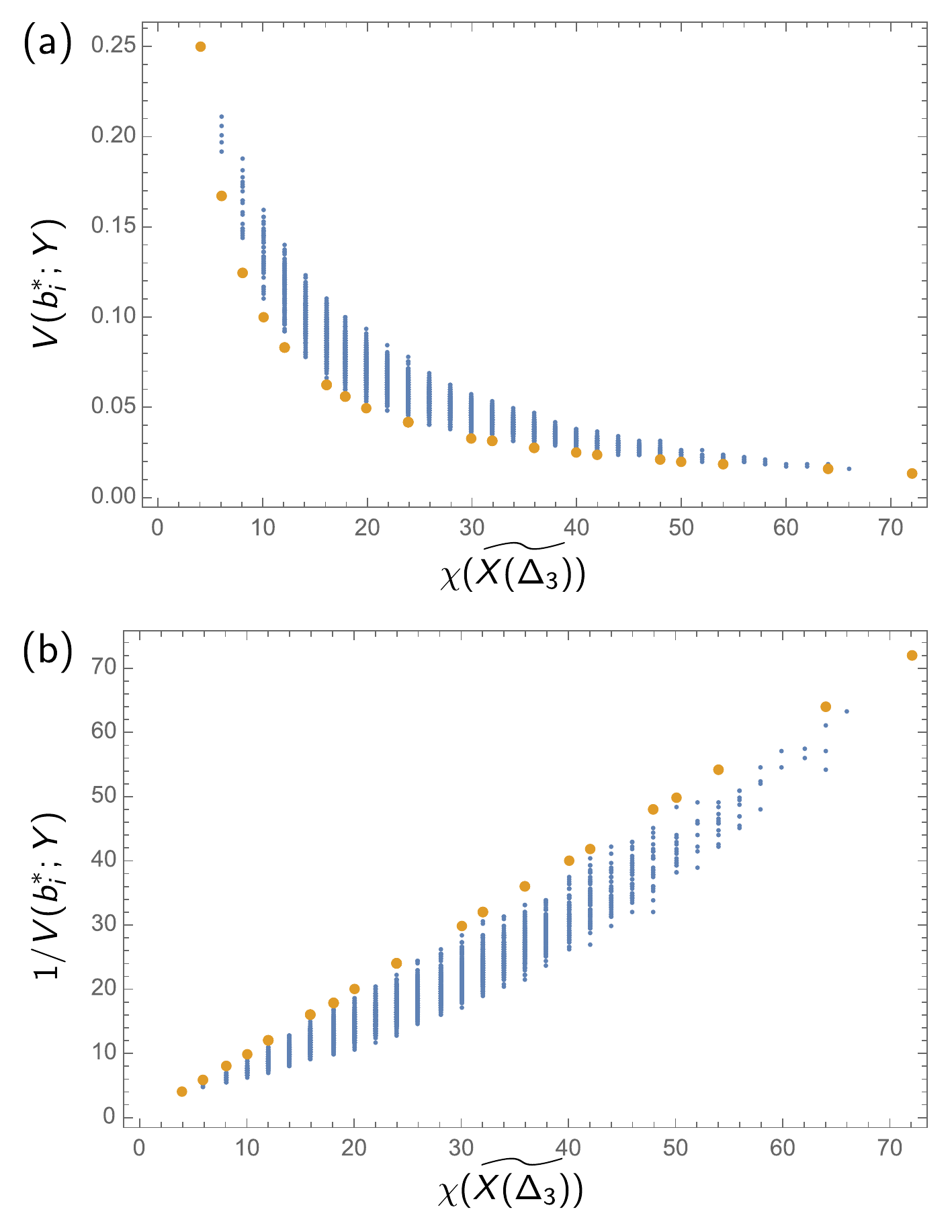}
}
\caption{{\sf {\small
(a) The Euler number of the 4319 reflexive toric Calabi-Yau 4-folds $X$ against the minimum volume $V(b_i^{*};Y)$, and (b) the Euler number against the inverse of the minimum volume. Note that the orange points correspond to the 48 Calabi-Yau 4-folds which are Abelian orbifolds of $\mathbb{C}^4$ with $1/V(b_i^{*};Y)=\chi(\widetilde{X(\Delta_3)})$.
\label{f:n=3min}}}}
\end{center}
\end{figure}

The observations made for reflexive toric Calabi-Yau 3-folds can be extended for toric Calabi-Yau 4-folds $\cX$ whose toric diagrams are reflexive polyhedra.
As stated in Table \ref{tcyno}, there are 4319 reflexive polytopes $\Delta_3$, with number of vertices in the range of 4 to 14.
We computed all minimized volumes and topological quantities for these Calabi-Yau 4-folds.
The volume minima plotted against the Euler number of the corresponding toric varieties is shown in \fref{f:n=3min}(a). In comparison, \fref{f:n=3min}(b) shows the inverse volume against the Euler number.
We can clearly see that for a subset of reflexive toric Calabi-Yau 4-folds, the reciprocal minimum volume is exactly the corresponding integer Euler number.

\begin{figure}[h!!!]
\begin{center}
\resizebox{0.7\hsize}{!}{
  \includegraphics[trim=0mm 0mm 0mm 0mm, width=8in]{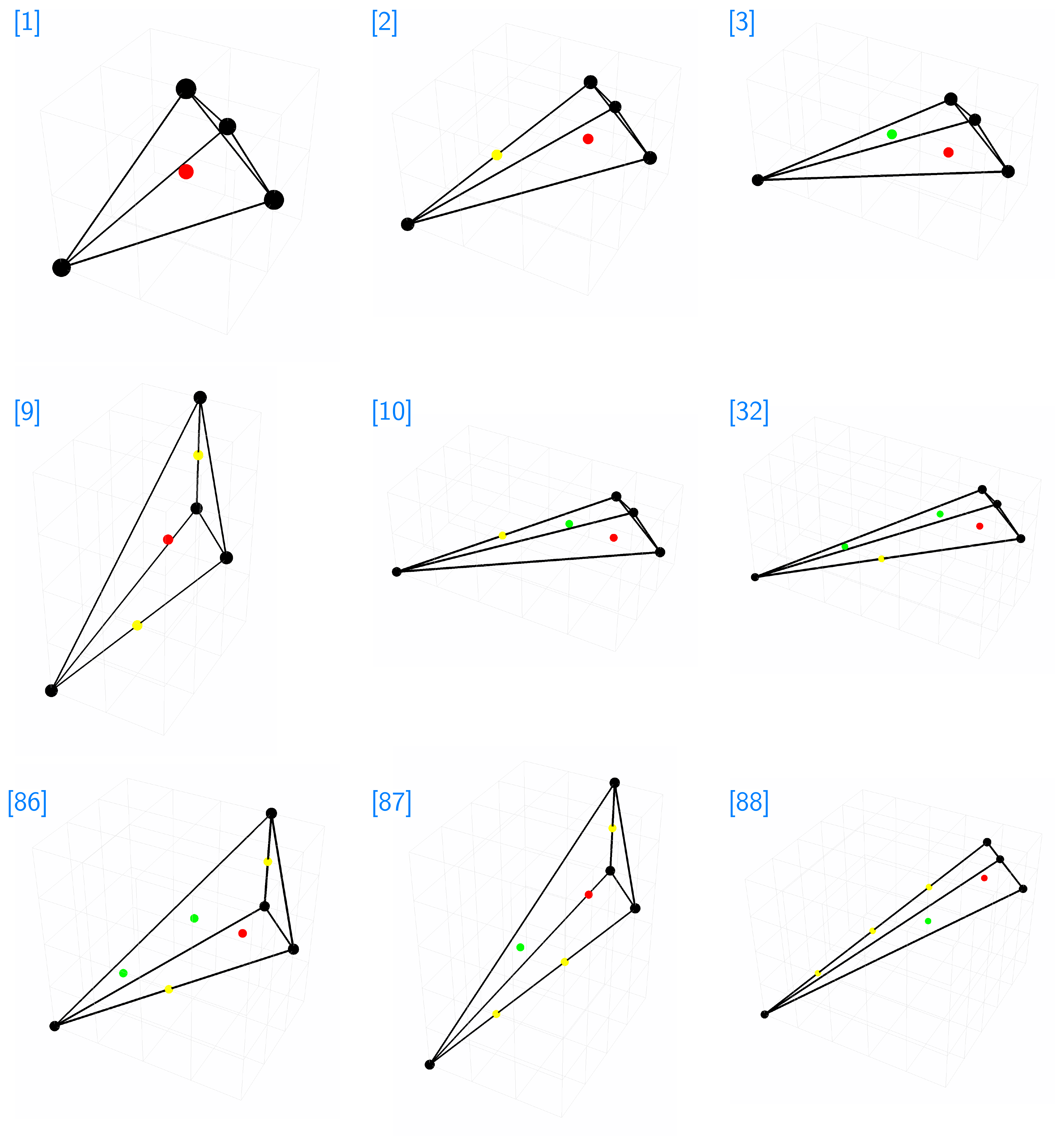}
}
\caption{{\sf {\small
A selection of the 48 toric Calabi-Yau 4-folds which are Abelian orbifolds of $\mathbb{C}^4$. Their inverse minimum volume $1/V(b_i^{*}; Y)$ is the same as the Euler number $\chi(\widetilde{X(\Delta)}$. Their toric diagrams are all lattice tetrahedra and the numbering is in the order of SAGE's database.
\label{f:n=3minset}}}}
\end{center}
\end{figure}

As discussed in Theorem \ref{orb}, Abelian orbifolds of $\mathbb{C}^4$, of which there are 48 out of the 4319, have toric diagrams which are lattice tetrahedra with 4 external vertices.
For these, the reciprocal of the minimum volume of the Sasaki-Einstein 7-manifold is exactly the same as the corresponding integer Euler number of the toric variety.
For reference, we include the 48 volume functions as well as their critical points and values in Appendix \ref{ap:orb}, Tables \ref{tcy4volmin} and \ref{tcy4volmin2}.
\fref{f:n=3minset} shows a selection of 9 of the 48 toric diagrams of these toric Calabi-Yau 4-folds.

\subsubsection{Calabi-Yau 5-fold Cones and a General Bound}

A similar conclusion can be given for toric Calabi-Yau 5-folds. As outlined in \sref{s:toric}, there are 473,800,776 reflexive polytopes $\Delta_4$ each corresponding to a toric Calabi-Yau 5-fold.
It is clearly impossible to compute the minimum volume of each of them at this stage, but, led by the results for toric Calabi-Yau 3-fold and 4-folds, we were able to make selective computations.
From the enormous list we take a sample of about 1000 reflexive $\Delta_4$, making sure that we cover examples for all cases of the number of vertices, which range from 5 to 33, to avoid statistical bias.

For example, a selection (there is a total of 1561 which have 5 vertices, corresponding to our orbifolds) of Abelian orbifolds of $\mathbb{C}^5$ indeed confirms that the relationship $1/ V(b_i^{*}; Y) = \chi(\widetilde{X(\Delta)})$ holds for reflexive Abelian orbifolds.
Furthermore, for a selection of reflexive toric Calabi-Yau 5-folds that are not Abelian orbifolds of $\mathbb{C}^5$, it was shown that $1/V(b_i^{*}; Y) \neq \chi(\widetilde{X(\Delta)})$. The data is shown in \fref{f:n=4min}.

\begin{figure}[h!!!]
\begin{center}
\resizebox{0.7\hsize}{!}{
  \includegraphics[trim=0mm 0mm 0mm 0mm, width=8in]{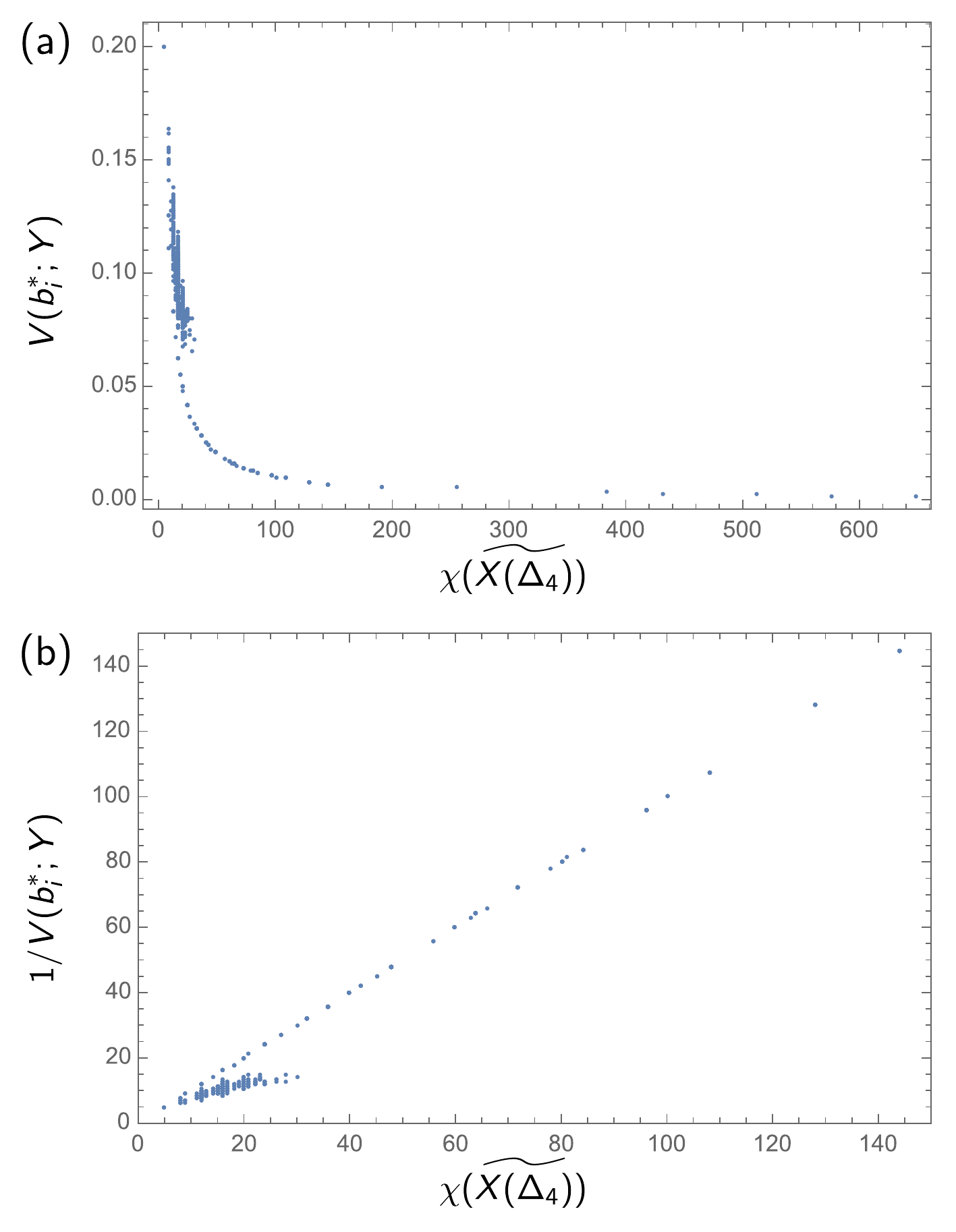}
}
\caption{{\sf {\small
      (a) The Euler number for a selection of reflexive toric Calabi-Yau 5-folds $\cX$ against the minimum volume $V(b_i^{*}; Y)$, and
      (b) the Euler number against the inverse of the minimum volume. Note that for Abelian orbifolds of $\mathbb{C}^5$ which are reflexive toric Calabi-Yau 5-folds, $1/V(b_i^{*};Y)=\chi(\widetilde{X(\Delta_4)})$.
\label{f:n=4min}}}}
\end{center}
\end{figure}

\comment{
  In fact, we conjecture that $1/ V(b_i^{*}; Y) \neq \chi(\widetilde{X(\Delta)})$ represents a bound on the minimum volume of all reflexive toric Calabi-Yau $n$-folds.
  This can be in particular observed for the cases of reflexive toric Calabi-Yau 3-folds and 4-folds and the respective plots in \fref{f:n=2min} and \fref{f:n=3min}. Moreover, a closer look in dimensions $3$ and $4$ reveals that the minimum volume is also bounded above.
We will attempt to analyze the origin of these bounds in the following section. 
}

\begin{figure}[h!!!]
\begin{center}
\resizebox{0.8\hsize}{!}{
  \includegraphics[trim=0mm 0mm 0mm 0mm, width=8in]{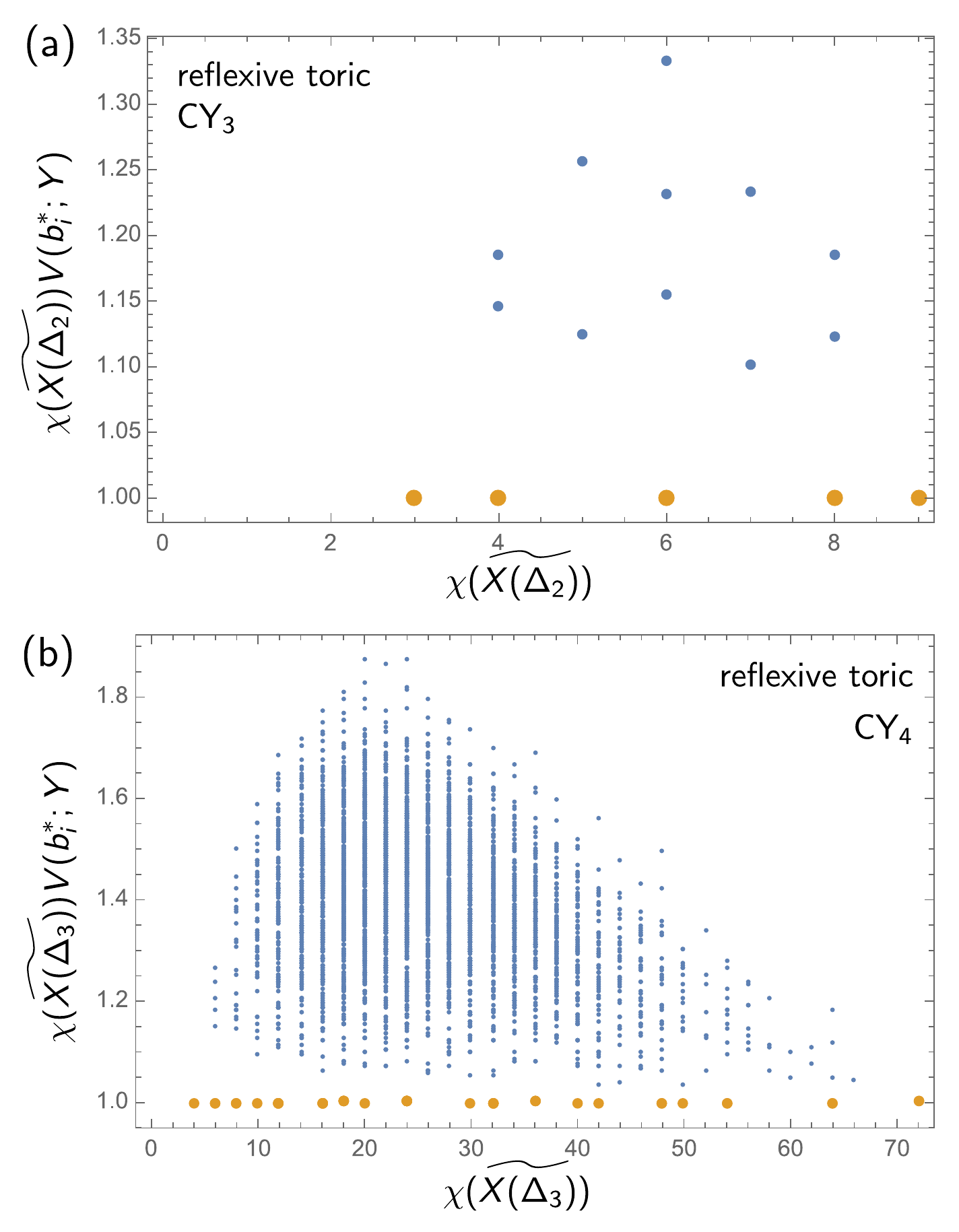}
}
\caption{{\sf {\small
The Euler number $\chi(\widetilde{X(\Delta_n)})$ against $\chi(\widetilde{X(\Delta_n)}) V(b_i^{*};Y)$ for (a) reflexive toric Calabi-Yau 3-folds and (b) 4-folds. Note that for reflexive toric Calabi-Yau, there is a unique maximum for $\chi(\widetilde{X(\Delta_2)}) V(b_i^{*};Y)$ at $\chi(\widetilde{X(\Delta_2)})=6$ for the cone over $\text{dP}_3$, whereas for reflexive toric Calabi-Yau 4-folds the maximum for $\chi(\widetilde{X(\Delta_3)}) V(b_i^{*};Y)$ occurs at $\cX_{1530}$ and $\cX_{2356}$ at $\chi(\widetilde{X(\Delta_3)})=20$ and $\chi(\widetilde{X(\Delta_3)})=24$, respectively.
\label{fCY34peakplots}}}}
\end{center}
\end{figure}

\subsection{Bounds on Minimum Volume \label{slowbound}}

As observed in the previous section, the minimum volume of the Sasaki-Einstein base of reflexive toric Calabi-Yau $3$, $4$ and $5$-folds seems to form a bound when the Calabi-Yau is an Abelian orbifold of $\mathbb{C}^n$. We therefore conjecture that, 
\begin{conjecture}
For toric Calabi-Yau $n$-folds $\cX$ with a toric diagram $\Delta_{n-1}$, the minimum volume $V(b_i^{*}; Y)$ of the Sasaki-Einstein base $Y$ has a lower bound:
\beal{es501a1}
V(b_i^{*}; Y)  \geq \frac{1}{\chi(\widetilde{X(\Delta_{n-1})})} ~,~
\eea
where $\chi(\widetilde{X(\Delta_{n-1})})$ is the Euler number of the completely resolved toric variety $\widetilde{X(\Delta_{n-1})}$. 
The bound is saturated when $\cX$ is an Abelian orbifold of $\mathbb{C}^{n}$.
\end{conjecture}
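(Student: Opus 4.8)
The plan is to reduce the conjecture to a sharp volume-product inequality for the toric diagram and its polar dual. First I would make the volume function explicit: feeding the Hilbert series \eqref{es50a1} into the limit \eqref{es48a5} and using $(1-e^{-\mu x})^{-1}\sim(\mu x)^{-1}$, one obtains
\[ V(b_i;Y)=\sum_{i=1}^{\chi}\prod_{j=1}^{n}\frac{1}{\langle b,u_{i,j}\rangle}, \]
a sum of exactly $\chi=\chi(\widetilde{X(\Delta_{n-1})})$ terms, one per top-dimensional simplex of the FRS triangulation by \eqref{es80a15}. Because every ray generator of a simplicial cone has last coordinate $1$, the (suitably oriented) facet normals form a dual basis with $\sum_{j}u_{i,j}=(0,\dots,0,1)$, so that $\sum_{j}\langle b,u_{i,j}\rangle=b_n=n$ for each simplex. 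A term-by-term AM-GM would then force each factor $\geq 1$ and hence $V\geq\chi$, which is \emph{false} — already $V(b_i^*;Y)=1/3$ for $\mathbb{C}^3/\mathbb{Z}_3$. The resolution is instructive: at the minimizing $b^*$ the Reeb vector lies on the interior walls of the individual simplicial cones, so the separate summands are singular and only their combination is finite. Hence no per-simplex estimate is available and a genuinely global argument is forced.

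The global route I would pursue rests on the Euclidean-volume reading of the localization formula. Writing $\sigma$ for the cone over the toric diagram and $\sigma^\vee$ for its dual, the identity above is
\[ V(b_i;Y)=n!\,\mathrm{vol}\big(\{y\in\sigma^\vee:\langle b,y\rangle\leq1\}\big), \]
while the unit-simplex property of the FRS triangulation gives $\chi=(n-1)!\,\mathrm{vol}(\Delta_{n-1})$ (this underlies Corollary~\ref{chi3} and its two-dimensional analogue). The crucial step is to show that minimizing $V$ over the Reeb cone is the toric shadow of choosing the best centre for the polar dual: by Legendre duality the optimal $b^*$ should single out the Santal\'o point of $\Delta_{n-1}$, so that $V(b_i^*;Y)=c_n\min_z\mathrm{vol}(\Delta_{n-1}^{\circ,z})$ for an explicit dimensional constant $c_n$, where $\Delta_{n-1}^{\circ}$ is the polar dual of \eqref{dualp}. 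Granting this dictionary, the desired bound $V(b_i^*;Y)\geq1/\chi$ becomes
\[ \mathrm{vol}(\Delta_{n-1})\,\min_z\mathrm{vol}(\Delta_{n-1}^{\circ,z})\geq\frac{1}{n!\,(n-1)!\,c_n}, \]
a Mahler-type lower bound on the product of the volumes of a convex body and its polar. Theorem~\ref{orb} fixes the constant and certifies consistency: for an orbifold $\Delta_{n-1}$ is a lattice simplex, the optimal centre is its barycentre, and the bound is saturated — precisely the equality case of the Mahler inequality.

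The hard part is this last inequality. The Mahler lower bound for the volume product, conjecturally minimized by the simplex, is a celebrated open problem in convex geometry: it is a theorem for $\dim=n-1\leq2$ (Mahler), which would establish the Calabi-Yau $3$-fold case unconditionally, but it is open in high dimension — matching exactly the status of our data, which verify the bound for CY$_3$ and CY$_4$ and only sample it for CY$_5$. This is the structural reason the statement is offered as a conjecture rather than a theorem: the orbifold saturation (Theorem~\ref{orb}) and the reduction above are within reach, whereas the sharp constant in arbitrary dimension is equivalent to a hard volume-product estimate. A secondary point that would also need to be nailed down is the dictionary itself — that Reeb (volume) minimization and Santal\'o-point minimization genuinely coincide, with the claimed constant $c_n$ — but this is a convexity property of $V(b_i;Y)$ and its Legendre transform rather than an open problem.
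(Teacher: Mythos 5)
First, a point of comparison that matters here: the paper does not prove this statement. It is offered as a conjecture, and the only support given is (i) Theorem~\ref{orb}, which establishes the saturation case for Abelian orbifolds via the scaling $\mathrm{vol}[S^{2n-1}/\Gamma]=\mathrm{vol}[S^{2n-1}]/|\Gamma|$ together with $|\Gamma|=d_{n-1}=\chi(\widetilde{X(\Delta_{n-1})})$, and (ii) exhaustive computation of all $16$ and $4319$ cases in dimensions $2$ and $3$ plus a sample in dimension $4$. So there is no paper proof for your argument to match; what you have written is a genuinely different kind of object, namely an attempted reduction of the conjecture to a known hard problem in convex geometry.

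The reduction itself is substantive and, as far as it can be checked against the paper's own data, correct in outline. Your dictionary, made explicit, reads $V(b_i^*;Y)=\frac{(n-1)!}{n^n}\min_z\mathrm{vol}(\Delta_{n-1}^{\circ,z})$; this reproduces $V=1/3,\ 8/27,\ 2/9$ for $dP_0$, $\IF_0$, $dP_3$ (entries 1, 4, 10 of Table~\ref{t:voln=2}, with dual polytopes of Euclidean area $9/2$, $4$, $3$) and $V=1/4$ for $\IC^4/\IZ_4$. Combined with $\chi=(n-1)!\,\mathrm{vol}(\Delta_{n-1})$, the bound \eqref{es501a1} becomes exactly the non-symmetric Mahler inequality $\mathrm{vol}(K)\min_z\mathrm{vol}(K^{\circ,z})\geq (d+1)^{d+1}/(d!)^2$ with $d=n-1$, whose conjectural equality case (the simplex) matches Theorem~\ref{orb}. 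This buys something the paper does not have: a structural explanation of why orbifolds saturate the bound, and a formulation valid for arbitrary toric diagrams, consistent with the paper's remark that reflexivity should not be needed for the lower bound. Your observation that a per-simplex AM--GM estimate fails because the individual localization summands are singular at the minimizing Reeb vector is also correct and worth recording.

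Two genuine gaps remain, however, and they are why this is a research program rather than a proof. First, the Legendre-duality dictionary identifying the critical Reeb vector with the Santal\'o point is asserted, not established; it is in fact a theorem in the literature relating K\"ahler--Einstein Fano volumes to convex bodies, but nothing in your argument proves it, and the constant $c_n$ is left unevaluated. Second, and more seriously, the non-symmetric Mahler lower bound is proven only for $d=2$; for $d=3$ it is open, so your route would leave even the Calabi--Yau $4$-fold case conditional, whereas the paper settles all $4319$ of those cases by direct computation. Your claim that the status of Mahler ``matches exactly'' the status of the data is therefore not quite right: the computations outstrip the reduction already in dimension $3$. What the reduction does correctly capture is why the statement must remain a conjecture in general dimension.
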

We remark that this conjecture is expected to hold for any toric Calabi-Yau $n$-fold, not just for the ones which have a reflexive polytope as their toric diagram.

\subsubsection{Identifying the Maximum}
The question arises whether there are any further bounds that can be identified for the minimum volume $V(b_i^{*}; Y)$, especially for the class of toric Calabi-Yau $n$-folds whose toric diagram is a reflexive polytope. 
We draw $V(b_i^{*}; Y) \chi(\widetilde{X(\Delta)})$ versus $\chi(\widetilde{X(\Delta)})$ in parts (a) and (b) in \fref{fCY34peakplots} for all our reflexive toric Calabi-Yau 3-folds and 4-folds, respectively.

These plots respectively show $V(b_i^{*}; Y)$ and $1/V(b_i^{*}; Y)$ against $V(b_i^{*}; Y) \chi(\widetilde{X(\Delta)})$.
An interesting observation from these plots is that specific values of the product $V(b_i^{*}; Y) \chi(\widetilde{X(\Delta)})$ reach a maximum value.

\begin{figure}[h!!]
\begin{center}
\resizebox{0.3\hsize}{!}{
  \includegraphics[trim=0mm 0mm 0mm 0mm, width=8in]{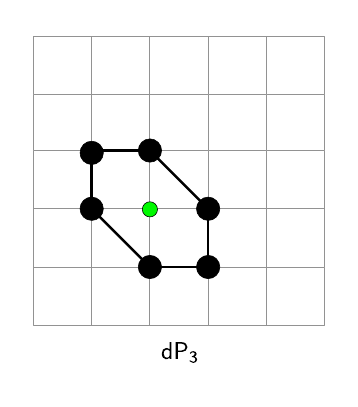}
}
\caption{{\sf {\small
The toric diagram for the cone over $\text{dP}_3$.
\label{fdp3}}}}
\end{center}
\end{figure}

For reflexive toric Calabi-Yau 3-folds, the maximum for $V(b_i^{*}; X) \chi(\widetilde{X(\Delta)})$ is uniquely attained, as can be seen from part (a) of \fref{fCY34peakplots}. The minimum horizontal straight-line at the bottom of the plot corresponds to the 5 Abelian orbifolds of $\mathbb{C}^3$ with reflexive toric diagrams.
We have been able to identify the unique maximum for $V(b_i^{*}; X) \chi(\widetilde{X(\Delta)})$ and the associated $\cX$ corresponds to the cone over $\text{dP}_3$, the del Pezzo surface of degree 6 which is $\IP^2$ blown up at 3 generic points.
The corresponding toric diagram is shown in Figure \ref{fdp3} and the minimum volume and Euler number are
\begin{equation}
  V(b_i^{*}; dP_3) = \frac{2}{9} \ ,
  \chi(\text{dP}_3) = 6
~,~
\end{equation}
giving us
\begin{equation}
\max\limits_{\Delta_2} V(b_i^{*}; Y) \chi(\widetilde{X(\Delta_2))})
=\frac{4}{3}~.~
\end{equation}

For reflexive toric Calabi-Yau 4-folds, the maximum value for $V(b_i^{*}; Y) \chi(\widetilde{X(\Delta_3)})$ is not unique.
There are two out of the 4319 reflexive toric Calabi-Yau 4-folds, which we will call $\cX_{1530}$ and $\cX_{2356}$,\footnote{As always, the subscripts in $\cX_{1530}$ and $\cX_{2356}$ refer to the ordering in the SAGE \cite{sage} database for reflexive toric Calabi-Yau 4-folds.} that have the largest value for $V(b_i^{*}; X) \chi(\widetilde{X(\Delta)})$.
We show the corresponding toric diagrams in Figure \ref{fCY4toricpeaks}.

\begin{figure}[h!!]
\begin{center}
\resizebox{0.8\hsize}{!}{
  \includegraphics[trim=0mm 0mm 0mm 0mm, width=8in]{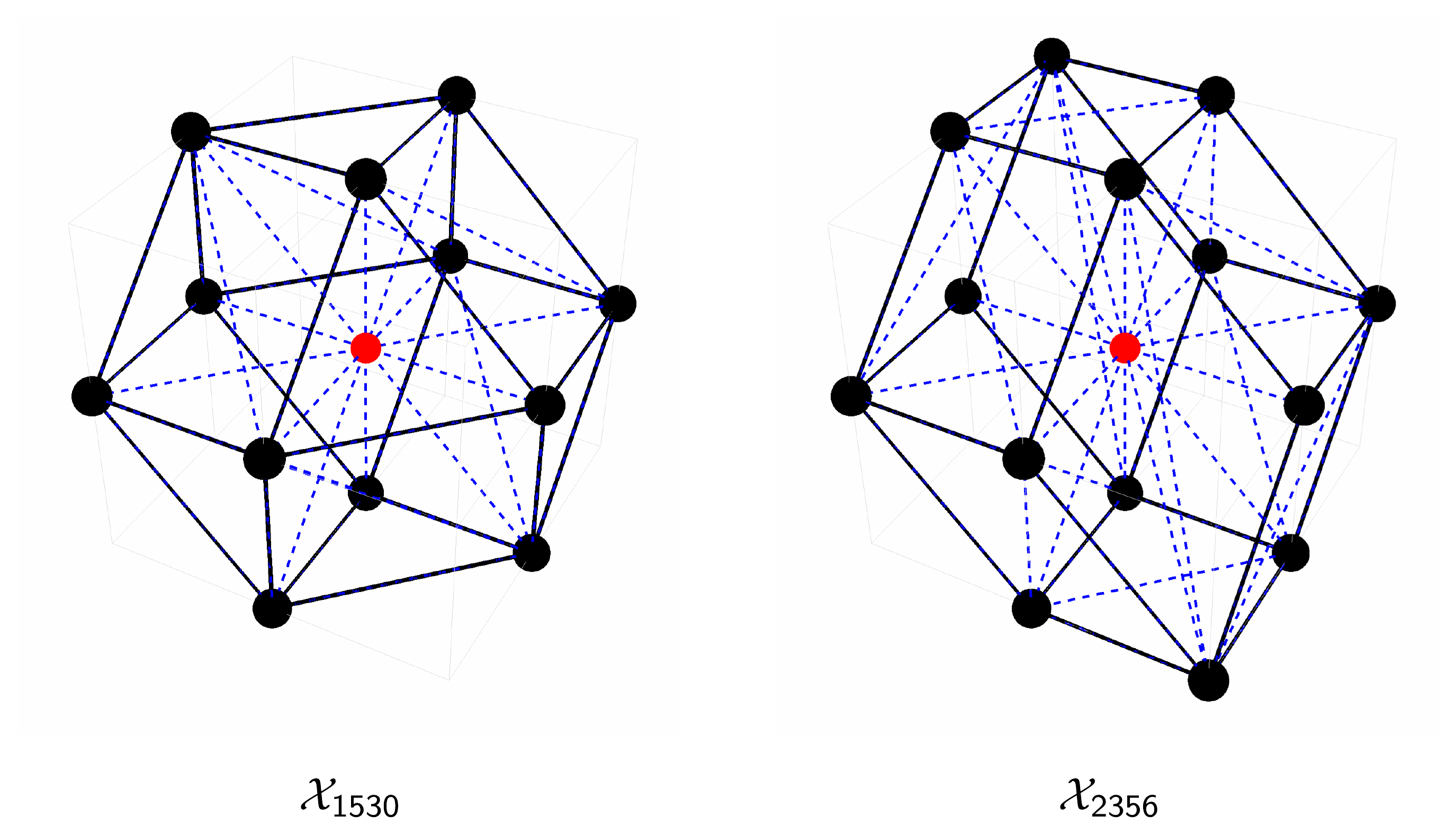}
}
\caption{{\sf {\small
Toric diagrams for reflexive toric Calabi-Yau 4-folds $\cX_{1530}$ and $\cX_{2356}$. The value for $V(b_i^{*}; Y) \chi(\widetilde{X(\Delta)})$ for these is the largest amongst all 4319 reflexive toric Calabi-Yau 4-folds.
\label{fCY4toricpeaks}}}}
\end{center}
\end{figure}

The minimum volumes and Euler numbers are
\beal{es520a2}
&
V(b_i^{*}; Y_{1530}) =\frac{3}{32} 
~,~
\chi(\widetilde{X_{1530}(\Delta)}) = 20
&
\nn\\
&
V(b_i^{*}; Y_{2356}) = \frac{5}{64}
~,~
\chi(\widetilde{X_{2356}(\Delta)}) = 24
&
~,~
\eea
both giving us that
\begin{equation}
  \max\limits_{\Delta_3} V(b_i^{*}; Y) \chi(\widetilde{X(\Delta_3)})
  = \frac{15}{8} = 1.875 \ .
\end{equation}

We currently do not have the computational power to address the maximal situation for $\Delta_4$, but at least we see that the toric diagrams for $\mathcal{C}(\text{dP}_3)$, $\cX_{1530}$ and $\cX_{2356}$, which are shown in Figures \ref{fdp3} and \ref{fCY4toricpeaks}, have hexagons in all their hyperplanes slices. This signifies that they are dP$_3$ fibrations.
Moreover, the maximum value for $V(b_i^*; Y) \chi(\widetilde{X(\Delta)})$ seems to increase in dimension.
We speculate that while the minimum value of $V(b_i^*; Y) \chi(\widetilde{X(\Delta_n)})$ is 1 and is attained by Abelian orbifolds of $\IC^{n+1}$,
\begin{conjecture}
  The maximum value for $V(b_i^*; Y) \chi(\widetilde{X(\Delta_{n-1})})$ for reflexive toric Calabi-Yau $n$-folds is attained by various (not necessarily uniquely) dP$_3$ fibrations.
\end{conjecture}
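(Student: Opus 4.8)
The plan is to recast the quantity to be maximized as a single scale- and $GL(n;\IZ)$-invariant shape functional on the reflexive polytope, and then to characterize its maximizers by a symmetrization/slicing argument. Writing
\[
\Phi(\Delta_{n-1}) := V(b_i^*;Y)\,\chi(\widetilde{X(\Delta_{n-1})}),
\]
Theorem~\ref{betti} together with the Ehrhart discussion identifies $\chi(\widetilde{X(\Delta_{n-1})})$ with the number of top-dimensional simplices in any FRS triangulation, i.e. with $(n-1)!\,\mathrm{Vol}(\Delta_{n-1})$; and passing to the limit \eqref{es48a5} in the sum \eqref{es50a1} reduces the minimum volume to $V(b_i^*;Y)=\min_{b}\sum_{i=1}^{r}\prod_{j=1}^{n}\langle b,\vec u_{i,j}\rangle^{-1}$, a $GL(n;\IZ)$-invariant function of the normal-fan data alone. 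Thus $\Phi$ depends only on the combinatorial shape of $\Delta_{n-1}$. The first task is to show that $\Phi$ plays the role of a lattice Mahler volume: the minimization over $b$ produces, up to normalization, the volume of a dual body assembled from the facet normals $v_a=(w_a,1)$, so that $\Phi$ is comparable to $\mathrm{Vol}(\Delta_{n-1})\,\mathrm{Vol}(\Delta_{n-1}^\circ)$. This pins down the two extremes at once: by Theorem~\ref{orb} the Abelian orbifolds $\IC^{n}/\Gamma$ are exactly the simplices and give the degenerate minimum $\Phi=1$, while the maximizers should be the ``roundest,'' most symmetric reflexive polytopes.

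Next I would settle the base case $n=3$ analytically. Here there are only $16$ reflexive polygons and $\Phi$ is computable in closed form from the volume functions tabulated in Appendix~\ref{ap:refcy3}; one checks directly that $\Phi=1$ on the triangles (the $\IC^3$ orbifolds) and attains its maximum on the hexagon $\text{dP}_3$, with $\Phi(\text{dP}_3)=\tfrac{2}{9}\cdot 6=\tfrac{4}{3}$. To turn this into a structural statement I would relax the lattice constraint, treating the vertices $w_a$ as continuous variables and $V(b)$ as a smooth function of them; by the affine invariance and symmetry of this Mahler-type functional, the continuous maximizer is the most symmetric polygon, the regular hexagon, whose unique reflexive lattice representative is precisely $\text{dP}_3$. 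This identifies the optimal two-dimensional fiber intrinsically, without appeal to the case list.

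The decisive step is the passage to higher dimensions through the hexagonal-slice structure observed for $\mathcal{C}(\text{dP}_3)$, $\cX_{1530}$ and $\cX_{2356}$. I emphasize that this is a genuine fibration of slices and not a fiber bundle: since $\chi(\cX_{1530})=20$ is not divisible by $6$, the naive multiplicativity $\chi=6\,\chi(B)$ fails, and one must instead argue at the level of individual hyperplane slices. The proposal is an exchange/symmetrization argument: given any reflexive $\Delta_{n-1}$ with a non-hexagonal two-dimensional slice, replace that slice by a hexagonal one (its $2$-dimensional $\Phi$-optimizer from the base case) while keeping the transverse structure fixed, and show that $\Phi$ strictly increases. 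Convexity of $V(b)$ in $b$ should guarantee that the minimizing Reeb vector varies continuously under such a replacement, so that the base-case inequality for the slice propagates to the total functional; iterating until every slice is a hexagon exhibits the maximizer as a $\text{dP}_3$ fibration, and the several inequivalent ways of stacking hexagonal slices explain why the maximum need not be attained uniquely (as with $\cX_{1530}$ and $\cX_{2356}$).

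The hard part is precisely this exchange step, because the Reeb minimization in \eqref{es48a5} is global and does not factorize along the fibration: the sum-over-simplices formula couples the fiber normals to the transverse normals, and there is no reason \emph{a priori} that the optimal $b^*$ respects the slicing. Making the argument rigorous amounts to controlling these cross-terms --- equivalently, showing that the Hessian of $V$ at $b^*$ is approximately block-diagonal with respect to the fiber/transverse splitting --- so that slice-wise improvements do not degrade the transverse contribution. Two further difficulties compound this: the space of reflexive polytopes is discrete, so one cannot simply differentiate along a continuous family but must pair the continuous relaxation above with an integrality (nearest-lattice-polytope) argument; and since the maximal value manifestly increases with dimension ($\tfrac{4}{3}$ in dimension $3$, $\tfrac{15}{8}$ in dimension $4$), there is no single universal constant to target, so the result must be established as a dimension-by-dimension characterization of the maximizing shape rather than via one sharp inequality. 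I expect the control of the off-diagonal Reeb coupling to be the true obstacle, with everything else reducing to the two-dimensional base case once that decoupling is in hand.
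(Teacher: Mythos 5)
Be aware first that the paper offers no proof of this statement: it is presented purely as a conjecture, supported by exhaustive computation of $V(b_i^*;Y)\,\chi(\widetilde{X(\Delta_{n-1})})$ over all $16$ reflexive polygons and all $4319$ reflexive polyhedra (the authors explicitly say they lack the computational power to locate the maximizer among the $\Delta_4$), together with the observation that the maximizers found --- $\cC(\text{dP}_3)$ in dimension $2$ and $\cX_{1530}$, $\cX_{2356}$ in dimension $3$ --- all have hexagonal hyperplane slices. So there is no paper proof to match; what you have written is a strategy for proving something the authors only conjecture from data.

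As a strategy it contains a genuine gap that you yourself flag but do not close: the exchange step. Everything beyond the base case rests on the claim that replacing a non-hexagonal two-dimensional slice of $\Delta_{n-1}$ by a hexagonal one increases $\Phi$, and you offer no mechanism for this beyond the hope that the Hessian of $V$ at $b^*$ is approximately block-diagonal in a fiber/transverse splitting. Since the localization formula \eqref{es50a1} couples every simplex of the triangulation through the single Reeb vector, and since the set of reflexive polytopes is discrete --- an arbitrary slice replacement will generically destroy convexity or reflexivity of the ambient polytope, so the modified object need not even be in the class --- this step is not a technicality; it is the entire content of the conjecture. Two supporting assertions are also unsubstantiated: (i) that $\Phi$ is comparable to the Mahler-type product $\mathrm{Vol}(\Delta)\,\mathrm{Vol}(\Delta^\circ)$ --- the minimized volume is not known to reduce to a dual volume, and if it did, the Blaschke--Santal\'o inequality would point to the ellipse, not the hexagon, as the continuous maximizer, so the relaxation argument would then need a genuinely lattice-theoretic input you have not supplied; and (ii) that affine invariance plus symmetry of the functional forces the continuous maximizer to be the regular hexagon --- symmetry of a functional does not imply symmetry of its extremizers without a monotone symmetrization, which is exactly what is missing. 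The honest status of your argument is therefore: base case verified by the same finite enumeration the paper performs, inductive step open. That places you in the same evidential position as the authors, presented in the form of a proof outline rather than a conjecture.
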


The maximum value $V(b_i^*; Y) \chi(\widetilde{X(\Delta_{n-1})})$ is characteristic to the set of toric Calabi-Yau $n$-folds with reflexive polytopes as toric diagrams. Its existence for this set of Calabi-Yau $n$-folds is an incentive for us to identify the profile of the envelope in \fref{fCY34peakplots} as well as in plots of the Euler number against the inverse minimum volume in Figures \ref{f:n=2min}, \ref{f:n=3min} and \ref{f:n=4min}. We leave this to appendix \ref{ap:envelop}. 

In the hope to identify other bounds of the minimum volume, we move on to the next section where we identify the relationship between the minimum volume and other topological quantities of the toric variety -- the Chern Numbers.

\subsubsection{Minimum Volume and Chern Numbers}

Let us continue our study to see how $V(b_i^{*};Y)$ behaves relative to the other topological quantities.
For $n=3$, we recall from Proposition \ref{n=2chern} that in addition to $\chi$, there is the Chern number $C = \int c_1(\widetilde{X(\Delta_2)})^2$.
We plot $C$ against both $V(b_i^{*};Y)$ and its reciprocal in parts (a) and (b) of Figure \ref{f:n=3minC}.
Again, the 5 orange points in \fref{f:n=3minC} are the 5 Abelian orbifolds of $\IC^3$.
However, there appears to be a straight-line upper bound for the value of $V(b_i^{*};Y)$, which we numerically identify to be
\begin{equation}
V(b_i^{*};Y) \sim 3^{-3}  \int c_1(\widetilde{X(\Delta_2)})^2 \ .
\end{equation}

\begin{figure}[h!!!]
\begin{center}
\resizebox{0.7\hsize}{!}{
  \includegraphics[trim=0mm 0mm 0mm 0mm, width=8in]{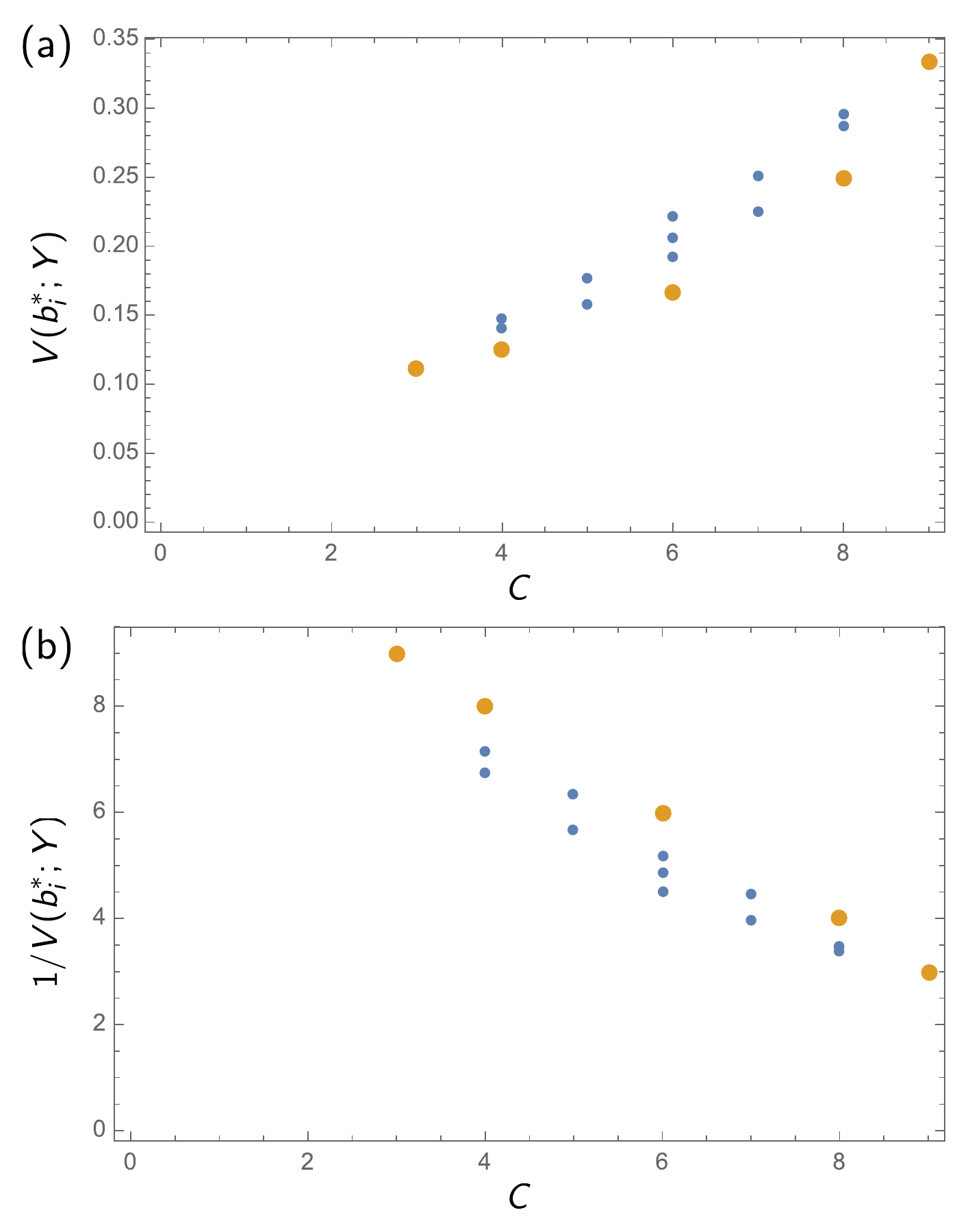}
}
\caption{{\sf {\small
      (a) The Chern number $C = \int c_1^2(\widetilde{X(\Delta_2)})$ of the 16 reflexive toric Calabi-Yau 3-folds $\cX$ against the minimum volume $V(b_i^{*}; Y)$, and (b) $C$ against the reciprocal of $V(b_i^{*}; Y)$.
      Note that the orange points correspond to the 5 Calabi-Yau 3-folds which are Abelian orbifolds of $\mathbb{C}^3$.
      We numerically find that $V(b_i^{*};Y)$ is bounded above by the line $V(b_i^{*}; Y)= 0.037 C \simeq 3^{-3} C $.
\label{f:n=3minC}}}}
\end{center}
\end{figure}

To gain more confidence in this observation, we move on to the 3-dimensional reflexive polytopes and the corresponding toric Calabi-Yau 4-folds.
Here, as discussed in Proposition \ref{n=3chern}, there is the non-trivial topological invariant $C = \int c_1(\widetilde{X(\Delta_2)})^3$.
We plot $C$ against both $V(b_i^{*};Y)$ and its reciprocal in parts (a) and (b) of Figure \ref{f:n=4minC}.
Again, the 48 orange points in \fref{f:n=4minC} are the Abelian orbifolds of $\IC^4$, which are sprinkled throughout the plot.
However, there again appears to be a straight-line upper bound on the value of $V(b_i^{*};Y)$, which we numerically find to be
\begin{equation}
V(b_i^{*};\cX) \sim 4^{-4}  \int c_1(\widetilde{X(\Delta_3)})^3 \ .
\end{equation}

\begin{figure}[h!!!]
\begin{center}
\resizebox{0.7\hsize}{!}{
  \includegraphics[trim=0mm 0mm 0mm 0mm, width=8in]{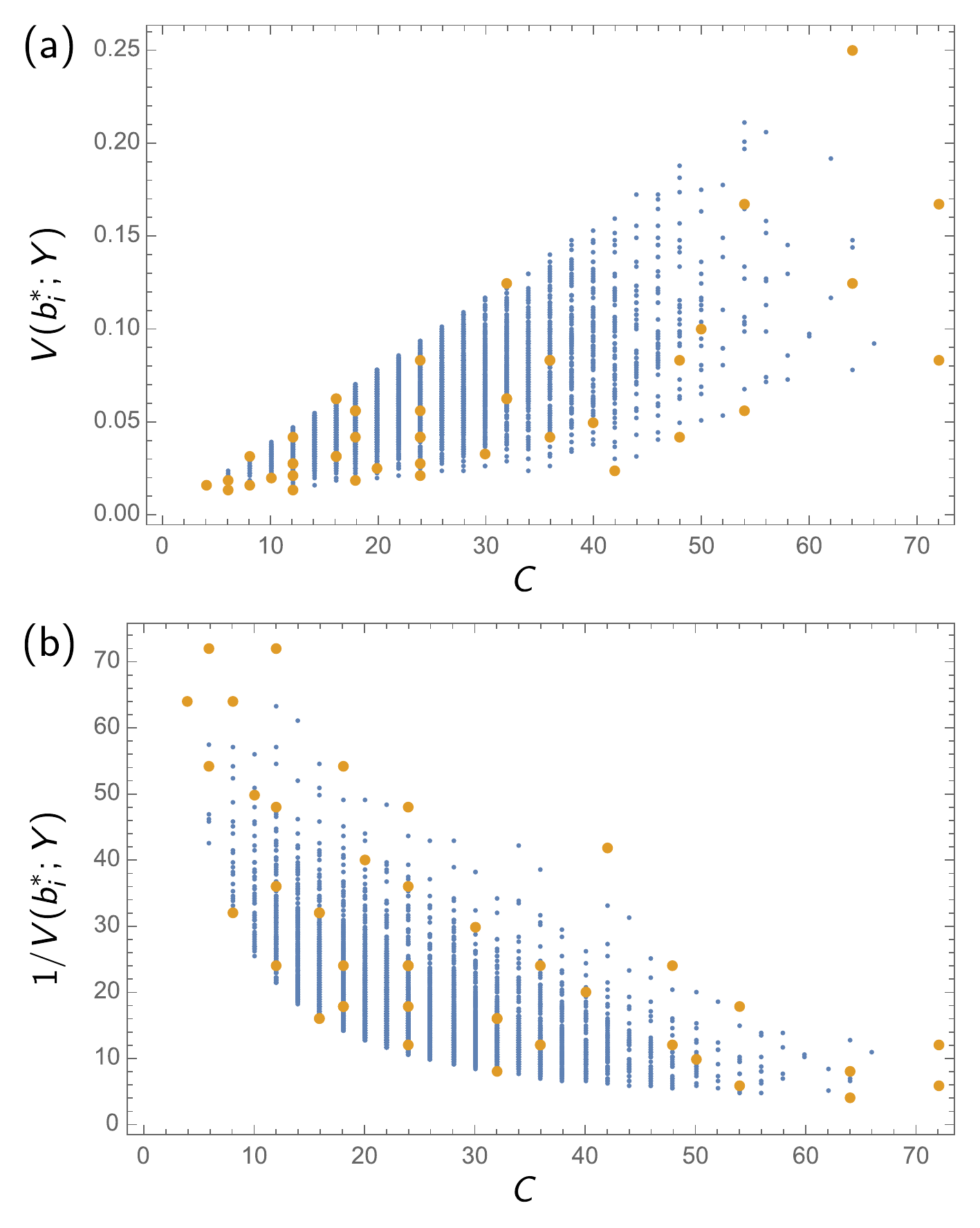}
}
\caption{{\sf {\small
      (a) The Chern number $C = \int c_1^3(\widetilde{X(\Delta_3)})$ of the 4319 reflexive toric Calabi-Yau 4-folds $\cX$ against the minimum volume $V(b_i^{*};\cX)$, and
      (b) $C$ against the reciprocal of $V(b_i^{*};\cX)$.
      Note that the orange points correspond to the 48 Calabi-Yau 4-folds which are Abelian orbifolds of $\mathbb{C}^4$.
      We numerically find that $V(b_i^{*};\cX)$ is bounded above by the line $V(b_i^{*};\cX)= 0.0039 C \simeq 4^{-4} C$.
\label{f:n=4minC}}}}
\end{center}
\end{figure}

We summarize the linear fits of the upper bound for $V(b_i^{*};Y)$ in appendix \ref{ap:envelop2}.
These curious observations, fortified by the abundance of our data, leads us to speculate that indeed while $\chi$ determines a lower bound to $V(b_i^{*};Y)$, the Chern number $\int c_1^n$ determines an upper bound to $V(b_i^{*};Y)$. We note that this observed upper bound of $V(b_i^{*};Y)$ is specific to the class of reflexive toric Calabi-Yau $n$-folds, which we are studying in this paper. As a result, we conjecture

\begin{conjecture}
For toric Calabi-Yau $n$-folds $\cX$ with the toric diagram being a reflexive polytope $\Delta_{n-1}$, the minimum volume $V(b_i^{*};Y)$ of the Sasaki-Einstein base Y lies between the following two bounds,
\beal{es521b1}
\frac{1}{\chi(\widetilde{X(\Delta_{n-1})})} 
\leq
V(b_i^{*};Y) 
\leq
m_n \int c_1(\widetilde{X(\Delta_{n-1})})^{n-1} 
~,~
\eea
where $m_3 \sim 3^{-3}$ and $m_4 \sim 4^{-4}$ (in particular $m_n>m_{n+1}$).
We note that the lower bound is universal for all toric varieties whereas the
upper bound with gradient $m_n\in \mathbb{R}$ is a feature of toric varieties originating from reflexive polytopes.
\end{conjecture}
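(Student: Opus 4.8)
The plan is to establish the two inequalities by completely different arguments: the upper bound is a single explicit evaluation of the volume function, whereas the lower bound is a global extremal problem which I expect to be the genuine obstacle (and the reason the statement is only conjectured).

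\emph{Upper bound.} I would evaluate $V(b_i;Y)$ at the \emph{canonical} Reeb vector $b_0=(0,\dots,0,n)$. Reflexivity is exactly what makes $b_0$ admissible: since $\Delta_{n-1}$ has the origin as its unique interior point, $X(\Delta_{n-1})$ is Gorenstein Fano and $\cX$ is the \emph{anticanonical} cone, so $b_0$ lies in the interior of the Reeb cone and generates the grading by anticanonical degree. Unrefining the Hilbert series to $g(t;\cX)=N(t)/(1-t)^n$ and applying \eqref{es48a5} with $t=e^{-n\mu}$ (so $(1-t)^n\sim(n\mu)^n$) gives
\[
V(b_0;Y)=\lim_{\mu\to0}\mu^n\,\frac{N(e^{-n\mu})}{(1-e^{-n\mu})^n}=\frac{N(1)}{n^n}.
\]
The numerator at $t=1$ is the degree of the anticanonically embedded $X(\Delta_{n-1})$, i.e. $N(1)=(-K)^{n-1}=\int_{\widetilde{X(\Delta_{n-1})}}c_1^{\,n-1}$ (the FRS resolution is crepant, so anticanonical self-intersection is preserved); this identity is precisely where the Gorenstein Fano hypothesis enters, and it is why the upper bound is a feature of reflexive polytopes rather than of general toric cones. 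Since $V(b_i^{*};Y)=\min_b V(b_i;Y)\le V(b_0;Y)$, the bound follows and in fact fixes $m_n=n^{-n}$ \emph{exactly}; the cases $\IC^n/\IZ_n$ (from $\mathbb{P}^{n-1}$) and $dP_3$ saturate it because there the minimizer coincides with $b_0$ (regular Reeb vector / Kähler--Einstein base).

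\emph{Lower bound.} Equality for Abelian orbifolds is Theorem~\ref{orb}, so the real content is that the orbifold, whose toric diagram is a simplex, is extremal at fixed $\chi$. I would recast the minimization in convex-geometric terms. Using the character integral over the moment cone $\sigma^\vee\subset M_\IR$ and the identity $\int_{\sigma^\vee}e^{-\langle b,y\rangle}\,d^n y=n!\,\mathrm{vol}_n\{y\in\sigma^\vee:\langle b,y\rangle\le1\}$, the volume function $V(b_i;Y)$ is (up to the fixed normalization of \eqref{es48a3}) proportional to the volume of the truncation of $\sigma^\vee$ by the Reeb hyperplane. Minimizing over $b$, and recalling that $\chi=(n-1)!\,\mathrm{Vol}(\Delta_{n-1})$ is the normalized volume of the toric diagram, the claim $V(b_i^{*};Y)\,\chi\ge1$ becomes the assertion that, among lattice cones of fixed dual volume, the simplicial cone minimizes the minimized truncation volume. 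This has precisely the shape of a Mahler-type inequality: a lower bound on a volume--polar-volume product with the simplex as the conjectured extremizer.

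The hard part is exactly this final step. A naive term-by-term estimate on the simplex expansion $V=\sum_i\prod_j\langle b,u_{i,j}\rangle^{-1}$ is hopeless, since the individual simplex contributions carry poles on the Reeb hyperplane that cancel only in the total — as one already sees for $\IC^3/\IZ_3$ at $b_0$, where each summand blows up but the sum is $1/3$ — so one cannot bound $V(b^{*};Y)$ simplex-by-simplex and must instead control the global minimizer $b^{*}$. Proving that the simplex minimizes the truncated-volume product is a symmetrization/Mahler-type statement, delicate and open in general dimension, which is why I expect no elementary proof and why the bound is stated as a conjecture. A realistic route would be to settle it directly by a convexity analysis in the low dimensions $n=3,4$ that the data covers, or to deduce it from the known partial cases of the Mahler conjecture.
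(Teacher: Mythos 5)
The statement you are addressing is presented in the paper only as a \emph{conjecture}: the authors give no proof of either inequality, offering instead (i) Theorem~\ref{orb}, which establishes equality in the lower bound for Abelian orbifolds of $\IC^n$, and (ii) numerical evidence from the complete data sets in dimensions $3$ and $4$ (and a sample in dimension $5$), from which the slopes $m_3\simeq 0.037$ and $m_4\simeq 0.0039$ are extracted by linear regression. Measured against that, your proposal is more than the paper provides on one side and honestly matches its status on the other. Your upper-bound argument is a correct, short \emph{proof}: $b_0=(0,\dots,0,n)$ is an admissible Reeb vector since $\langle b_0,(w_a,1)\rangle=n>0$ for every generator; the unrefined Hilbert series has numerator $N(t)$ with $N(1)=(n-1)!\,\mathrm{vol}(\Delta^\circ)=(-K_{X(\Delta)})^{n-1}$, which reflexivity (and crepancy of the FRS resolution) identifies with $\int_{\widetilde{X(\Delta_{n-1})}}c_1^{\,n-1}$; and \eqref{es48a5} at $b_0$ then yields $V(b_0;Y)=N(1)/n^n$, whence $V(b_i^*;Y)\le n^{-n}\int c_1^{\,n-1}$. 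The paper's own tables confirm both the formula and its sharpness: every entry of Table~\ref{tcy3volminA} with $b^*=0$ satisfies $V(b^*;Y)=C/27$ exactly ($\IC^3/\IZ_3$: $9/27$; $\IF_0$: $8/27$; $dP_3$: $6/27$), and $\IC^4/\IZ_4$ gives $1/4=64/4^4$, so your $m_n=n^{-n}$ pins down the constants the authors only fit numerically. The one caveat is that for $n=5$ a crepant FRS resolution need not exist, so the identification of $N(1)$ with a Chern number of a smooth model is available only in the restricted cases the paper considers, though the inequality $V(b_i^*;Y)\le N(1)/n^n$ holds regardless.

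For the lower bound you prove nothing beyond the paper's Theorem~\ref{orb}, but you do not claim to, and your reformulation is sound: using the localization of $V(b;Y)$ as (a normalization of) the Euclidean volume of $\{y\in\sigma^\vee:\langle b,y\rangle\le\tfrac12\}$ and $\chi=(n-1)!\,\mathrm{Vol}(\Delta_{n-1})$, the claim $V(b_i^*;Y)\,\chi\ge1$ becomes a Mahler-type volume--polar-volume inequality with the simplicial cone as extremizer, and your warning that the simplex-by-simplex expansion \eqref{es50a1} cannot be estimated termwise (the individual summands are singular at $b_0$) is exactly right. So: no gap relative to the paper, a genuine strengthening on the upper bound, and a correct identification of where the open content of the conjecture actually lies.
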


We emphasize that the lower bound on the minimum volume $V(b_i^{*};Y)$ is conjectured to be universal for any toric Calabi-Yau $n$-fold. The lower bound is saturated when the toric Calabi-Yau $n$-fold is an Abelian orbifold of $\mathbb{C}^n$ as discussed in \sref{slowbound}.

Following section \sref{svolminamax}, for $4d$ $\mathcal{N}=1$ worldvolume theories of probe D3-branes at Calabi-Yau 3-folds singularities, via the AdS/CFT correspondence the minimum volume of the base Sasaki-Einstein manifold $Y$ is inversely proportional to the maximum of the central charge $a$-function of the theory. 
Using \eref{es54a2}, we can therefore rewrite the bound on the normalized minimum volume $V(b_i^*;Y)$ in \eref{es521b1} as a bound on the normalized $a$-function $A(R;Y)$,
\beal{es521b10}
m_3 \int c_1(\widetilde{X(\Delta_2)})^2 
\leq
A(R_i^*;Y) 
\leq
\frac{1}{\chi(\widetilde{X(\Delta_2)})} 
~,~
\eea
where $m_3 \sim 3^{-3}$.
The upper bound on $A(R_i^*;Y)$ is universal for all $4d$ $\mathcal{N}=1$ theories that are worldvolume theories of D3-branes probing toric Calabi-Yau 3-folds, whereas the lower bound is specific to toric Calabi-Yau 3-folds whose toric diagram is a reflexive polygon.\footnote{We note that a similar bound on $a$ was introduced in \cite{Gulotta:2008ef} which agrees with our bounds in \eref{es521b10}.}

The volume bound for Sasaki-Einstein 7-manifolds can be interpreted as a bound on the free energy $F$ of $3d$ $\mathcal{N}=2$ superconformal Chern-Simons matter theories living on the worldvolume of a stack of $N$ M2-branes at Calabi-Yau 4-fold singularities.
As reviewed in section \sref{sfree}, the free energy of the $3d$ theory is related in the large $N$ limit to the volume of the Sasaki-Einstein 7-base of the toric Calabi-Yau 4-fold. 
We rewrite \eref{es55a2} to define a \textit{normalized free energy}
\beal{es521b11}
\mathcal{F}(Y)^2 \equiv \frac{F(Y)^2}{F(S^7)^2} = \frac{\text{vol}[S^7]}{\text{vol}[Y]} = \frac{1}{V(b_i^*;Y)} ~.~
\eea
Then the volume bounds in \eref{es521b1} can be written as 
\beal{es521b12}
m_4 \int c_1(\widetilde{X(\Delta_3)})^3
\leq
\mathcal{F}(Y)^2
\leq
\frac{1}{\chi(\widetilde{X(\Delta_3)})} 
~,~
\eea
where $m_4 \sim 4^{-4}$.

\section{Discussions}

Our work establishes the first comprehensive study of volume minima for large classes of toric Calabi-Yau cones and their Sasaki-Einstein bases in different dimensions.
We have computed the volume minima corresponding to the 16 toric Calabi-Yau 3-folds, 4319 toric Calabi-Yau 4-folds and a subset of the 473,800,776 toric Calabi-Yau 5-folds whose toric diagrams are given by reflexive lattice polytopes $\Delta_n$.
The volume minima were then plotted against topological quantities such as the Chern number $\int c_1(\widetilde{X(\Delta_n)})^{n-1}$ and Euler number $\chi(\widetilde{X(\Delta_{n-1})})$ of the corresponding toric variety $\widetilde{X(\Delta_{n-1})}$.
By doing so, we have been able to identify bounds on the minimum volume in terms of these topological quantities. 
In particular, we have shown that the lower bound of the volume is set by the Euler number $\chi(\widetilde{X(\Delta_{n-1})})$ and the upper bound is determined by the Chern number $\int c_1(\widetilde{X(\Delta_n)})^{n-1}$ of the associated toric Calabi-Yau $n$-fold.
We have noted that the upper bound on the volume is valid for toric varieties coming from reflexive polytopes in dimensions up to $n=4$ and expect it to continue to hold for higher dimensions.

Both volume bounds are expected to hold for toric Calabi-Yau $n$-folds derived from reflexive polytopes in $n-1$ dimensions. 
Via the AdS/CFT correspondence, these volume bounds can be interpreted as bounds on the central charge $a$ of $4d$ $\mathcal{N}=1$ superconformal gauge theories on D3-branes probing a toric Calabi-Yau 3-fold. 
Similarly, we believe that the volume bound in $n=3$ can be interpreted as a bound on the free energy $F$ of a $3d$ $\mathcal{N}=2$ superconformal Chern-Simons matter theory on M2-branes probing a toric Calabi-Yau 4-fold. 
We also expect that these volume bounds in $n=3$ and $n=4$ will give future insights into $2d$ $(0,2)$ theories on D1-branes probing toric Calabi-Yau 4-folds and $0d$ $\mathcal{N}=1$ supersymmetric matrix models on $D(-1)$-branes probing toric Calabi-Yau 5-folds. 

Although not all volume minima for the 473,800,776 toric Calabi-Yau 5-folds were found for this paper, 
our results represent the largest collection of volume minima
for toric Calabi-Yau $n$-folds to date. 
We plan to extend this collection of volume minima to all 473,800,776 toric Calabi-Yau 5-folds and
to higher dimensional Calabi-Yau manifolds. 
Furthermore, in future work, we hope to extend our analysis to non-reflexive (i.e., having more than 1 single internal point) and even non-toric Calabi-Yau cones and to compute the volume minima for the corresponding base manifolds. 

With this paper we intend to initiate a new program on studying volume minima for Calabi-Yau manifolds in different dimensions with the hope to reveal new insights on the dynamics of supersymmetric gauge theories as well as the topology and geometry of Calabi-Yau and Sasaki-Einstein manifolds

\section*{Acknowledgments}
We would like to thank Guido Festuccia, Sebastian Franco, Amihay Hanany, Sangmin Lee, Wenbin Yan and Cumrun Vafa for enjoyable discussions. 
We are grateful to Ross Altman for help on fast triangulations in SAGE. 
R.-K.~S. is supported by the ERC STG grant 639220.
Y.-H.~H. would like to thank the Science and Technology Facilities Council, UK, for grant ST/J00037X/1, the Chinese Ministry of Education, for a Chang-Jiang Chair Professorship at NanKai University as well as the City of Tian-Jin for a Qian-Ren Scholarship, and Merton College, Oxford, for her enduring support.
The work of S.-T.~Y. is supported by NSF grant DMS-1159412, NSF grant PHY-0937443, and NSF grant DMS-0804454.

\appendix
\section{Algorithmic Implementations} \label{appalg}

In this appendix, let us briefly discuss the actualities of the requisite computations.
Fortunately, reflexive polytopes in dimensions 2 and 3 are built into SAGE \cite{sage} (dimension 4 is much more extensive and is still an on-going project \cite{cy3online,Altman:2014bfa}).
There are two methods in SAGE in constructing toric varieties from reflexive polytopes: (1) {\sf CPRFanoToricVariety()} and (2) {\sf ToricVariety()}.
The first is tailored for crepant partial resolutions of Fano toric varieties, which precisely correspond to our reflexive polytopes  \cite{nill}.
In particular, the module performs simplicial resolution of the polytope (with the {\sf CPRFanoToricVariety(Delta\_polar = $\Delta$, make\_simplicial = True)} option) rather efficiently.
This, as we learnt from Theorem \ref{thm:delta1}, suffices to resolve the singularities of our Fano toric variety $X(\Delta)$ up to orbifold singularities and subsequently we can compute the orbifold Euler number in $\IQ$.

On the other hand, method (2) deals with the more general situation, but requires, when a complete smoothing is needed, an FRS triangulation \cite{Altman:2014bfa} of $\Delta$,
which is implemented in SAGE (using its `topcom' package) by
\\
  {\sf PointConfiguration($\Delta$).restrict\_to\_fine\_triangulations(). restrict\_to\_regular\_triangulations().\\
    \qquad restrict\_to\_star\_triangulations(origin).triangulate()}.
  \\
The result is a complex $n$-fold, endowed with all the advantages of being compact and smooth.
The downside, as imaginable, is that this FRS triangulation is very expensive algorithmically.

To circumvent this problem, we use the trick used by \cite{Braun:2011ik,Long:2014fba} which takes advantage of the fact that a reflexive polytope has only a single interior lattice point.
Thus, one removes this point, which is the origin by convention, and performs FRS triangulation to the boundary lattice points (essentially a problem of one dimension less). Then one manually adds lines from the various boundaries triangles to the origin, forming the necessary face cones.
This algorithm usually is many orders of magnitude faster than a naive FRS triangulation.

Finally, as discussed in the text, our concept of {\it regularity} is more stringent than the one in \cite{Altman:2014bfa} and in the SAGE implementation.
This is because what is needed in the latter is to ensure the smoothness of the Calabi-Yau hypersurface in $X(\Delta)$ in a so-called MPCP (maximal projective crepant partial) resolution. In addition, we consider only triangulations with simplices of unit volume (determinant = $\pm1$).


\section{Minimum Volumes for Reflexive CY$_3$}\label{ap:refcy3}

\begin{table}[ht!!]
\resizebox{\hsize}{!}{
\begin{tabular}{c|c|cc|c}
 $\#$ & $V(b_i; \cX)$ & $b_1^{*}$ & $b_2^{*}$ &$V(b_i^{*};Y)$ \\ 
 \hline
1 
& $\frac{9}{\left(b_1-2 b_2-3\right) \left(2 b_1-b_2+3\right) \left(b_1+b_2-3\right)}$ 
& $0$ 
& $0$ 
& $\frac{1}{3}$ 
\\
2 
& $\frac{8}{\left(b_1-3 b_2-3\right) \left(b_1-b_2+3\right) \left(b_1+b_2-3\right)}$ 
& $-1$
& $0$ 
& $\frac{1}{4}$ 
\\
3 
& $-\frac{2 \left(2 b_1+b_2-12\right)}{\left(b_1-3\right) \left(b_1-b_2+3\right)  \left(b_1+b_2-3\right) \left(b_1+2 b_2+3\right)}$ 
& $4-\sqrt{13}$ 
& $0$ 
& $\frac{1}{324} \left(46+13 \sqrt{13}\right)$ 
\\
4 
& $\frac{24}{\left(-b_1-b_2+3\right) \left(b_1-b_2+3\right) \left(-b_1+b_2+3\right) \left(b_1+b_2+3\right)}$ 
& $0$ 
& $0$ 
& $\frac{8}{27}$ 
\\
5 
& $\frac{b_1+4 b_2-21}{\left(b_1+3\right) \left(b_1-2 b_2-3\right) \left(b_2-3\right) \left(b_1+b_2-3\right)}$ 
& $-0.831239$ 
& $0.278298$
& $0.225143$
\\
6 
& $-\frac{b_1^2-2 b_2 b_1+6 b_1+b_2^2+6 b_2-63}{\left(b_1-3\right) \left(3-b_2\right) \left(b_1-b_2-3\right) \left(b_1-b_2+3\right) \left(b_1+b_2+3\right)}$ 
& $\frac{1}{16} \left(57-9 \sqrt{33}\right)$ 
& $\frac{1}{16} \left(57-9 \sqrt{33}\right)$ 
& $\frac{1}{486} \left(59+11 \sqrt{33}\right)$ 
\\
7 
& $\frac{6}{\left(b_1-2 b_2-3\right) \left(b_1-b_2+3\right) \left(b_1+b_2-3\right)}$ 
& $-2$
& $-1$ 
& $\frac{1}{6}$ 
\\
8 
& $\frac{2 \left(b_1+b_2-9\right)}{\left(b_1-b_2-3\right) \left(b_2-3\right) \left(b_1+b_2-3\right) \left(b_1+b_2+3\right)}$ 
& $-\frac{3}{2} \left(-1+\sqrt{3}\right)$
& $\frac{1}{2} \left(3-\sqrt{3}\right)$ 
& $\frac{1}{3 \sqrt{3}}$ 
\\
9 
& $\frac{2 \left(b_2^2+3 b_1-27\right)}{\left(b_1+3\right) \left(b_1-b_2-3\right) \left(b_2-3\right) \left(b_2+3\right) \left(b_1+b_2-3\right)}$ 
& $6-3 \sqrt{5}$ 
& $0$ 
& $\frac{1}{108} \left(11+5 \sqrt{5}\right)$ 
\\
10 
& $-\frac{6 \left(b_1^2+b_2 b_1+b_2^2-27\right)}{\left(b_1-3\right) \left(b_1+3\right) \left(3-b_2\right) \left(b_2+3\right) \left(b_1+b_2-3\right) \left(b_1+b_2+3\right)}$ 
& $0$ 
& $0$ 
& $\frac{2}{9}$ 
\\
11 
& $\frac{b_1+3 b_2-15}{\left(b_1+3\right) \left(b_1-b_2-3\right) \left(b_2-3\right) \left(b_1+b_2-3\right)}$ 
& $-0.902531$ 
& $-0.745444$ 
& $0.157348$
\\
12 
& $-\frac{-b_2 b_1+9 b_1+9 b_2-45}{\left(b_1-3\right) \left(b_1+3\right) \left(3-b_2\right) \left(b_2+3\right) \left(b_1+b_2-3\right)}$ 
& $-0.379079$ 
& $-0.379079$
& $0.176299$
\\
13 
& $\frac{4}{\left(b_1-3\right) \left(b_1-2 b_2-3\right) \left(b_1-b_2+3\right)}$ 
& $-1$ 
& $0$ 
& $\frac{1}{8}$ 
\\
14 
& $\frac{2 \left(b_1+b_2-6\right)}{\left(b_1-3\right) \left(b_2-3\right) \left(b_1+b_2-3\right) \left(b_1+b_2+3\right)}$ 
& $1-\frac{\sqrt{7}}{2}$ 
& $1-\frac{\sqrt{7}}{2}$ 
& $\frac{4}{243} \left(-10+7 \sqrt{7}\right)$ 
\\
15 
& $\frac{12}{\left(b_1-3\right) \left(b_1+3\right) \left(b_1-b_2-3\right) \left(b_1-b_2+3\right)}$ 
& $0$ 
& $0$ 
& $\frac{4}{27}$ 
\\
16 
& $\frac{3}{\left(b_1-3\right) \left(b_1-b_2-3\right) \left(2 b_1-b_2+3\right)}$ 
& $0$ 
& $0$
& $\frac{1}{9}$ 
\\
\end{tabular}
}
\caption{{\sf {\small
     Volume functions $V(b_i;Y)$ with their minima for all 16 reflexive toric Calabi-Yau 3-folds, in the order as indicated in Figure \ref{f:n=2}.
     As throughout the paper, $b_3$ is set to 3.
     Some values are given numerically because of their length, the exact algebraic expressions are given in the text.
 }}
\label{t:voln=2}
\label{tcy3volminA}
}
\end{table}

In this section, we give the volume functions of all 16 Calabi-Yau 3-folds corresponding to the reflexive polygons.
Numbers 1,2,7,13,16 are the 5 Abelian orbifolds of $\IC^3$, while numbers 1,3,4,6,10 correspond to the 5 smooth bases, viz., the del Pezzo surfaces.
In the case of number 5, $b_1^*$ and $b_2^*$ are the roots of
\begin{equation}
 48 + 59 b_1 - 5 b_1^2 - 7 b_1^3 + b_1^4 = 0 \ , \qquad
 18 - 79 b_2 + 55 b_2^2 - 13 b_2^3 + b_2^4 = 0
 \ ,
\end{equation}
and the final minimized volume $V(b_i^*; Y)$ is the unique positive root of
\begin{equation}
 -1 -235 x - 1740 x^2 + 7200 x^3 + 23328 x^4 = 0 \ .
\end{equation}
In the case of number 11, $b_1^*$ and $b_2^*$ are the roots of
\begin{equation}
 18 + 15 b_1 - 9 b_1^2 - 3 b_1^3 + b_1^4 = 0
 \ , \qquad
 -48 -26 b_22 + 42 b_22^2 - 12 b_22^3 + b_2^4 = 0
 \ ,
\end{equation}
and $V(b_i^*; Y)$ is the unique positive root of
\begin{equation}
 -1 -105 x - 84 x^2 + 3808 x^3 + 7776 x^4 = 0
 \ .
\end{equation}
In the case of number 12, $b_1^*=b_2^*$ are roots of
\begin{equation}
 18 + 39 b_1 - 22 b_1^2 + b_1^3 = 0
 \ ,
\end{equation}
and $V(b_i^*; Y)$ is the unique positive root of
\begin{equation}
 -1 - 66545 x + 210924 x^2 + 944784 x^3 = 0
 \ .
\end{equation}

\paragraph{Quadratic Irrationals and Quasi-Regulars.}
The minimum volumes $V(b_i^{*};Y)$ of Sasaki-Einstein manifolds $Y$ are known to be {\em algebraic}. When in particular, they are {\em quadratic irrationals} (i.e., of the form $a+b\sqrt{c}$ with $a,b\in\mathbb{Q}$ and $c\in\mathbb{N}$) \cite{Martelli:2004wu}, we call the Sasaki-Einstein manifolds {\bf quasi-regular} and when further $V(b_i^{*};Y) \in \mathbb{Q}$, $Y$ is called {\bf regular}.

In the case of Sasaki-Einstein 5-manifolds related to reflexive toric Calabi-Yau 3-folds, we can compute the volumes exactly as summarized above. Therefore, we can identify precisely which cases refer to quasi-regular Sasaki-Einstein manifolds as highlighted in \fref{fcy3regular}.

\begin{figure}[H]
\begin{center}
\resizebox{0.6\hsize}{!}{
  \includegraphics[trim=0mm 0mm 0mm 0mm, width=8in]{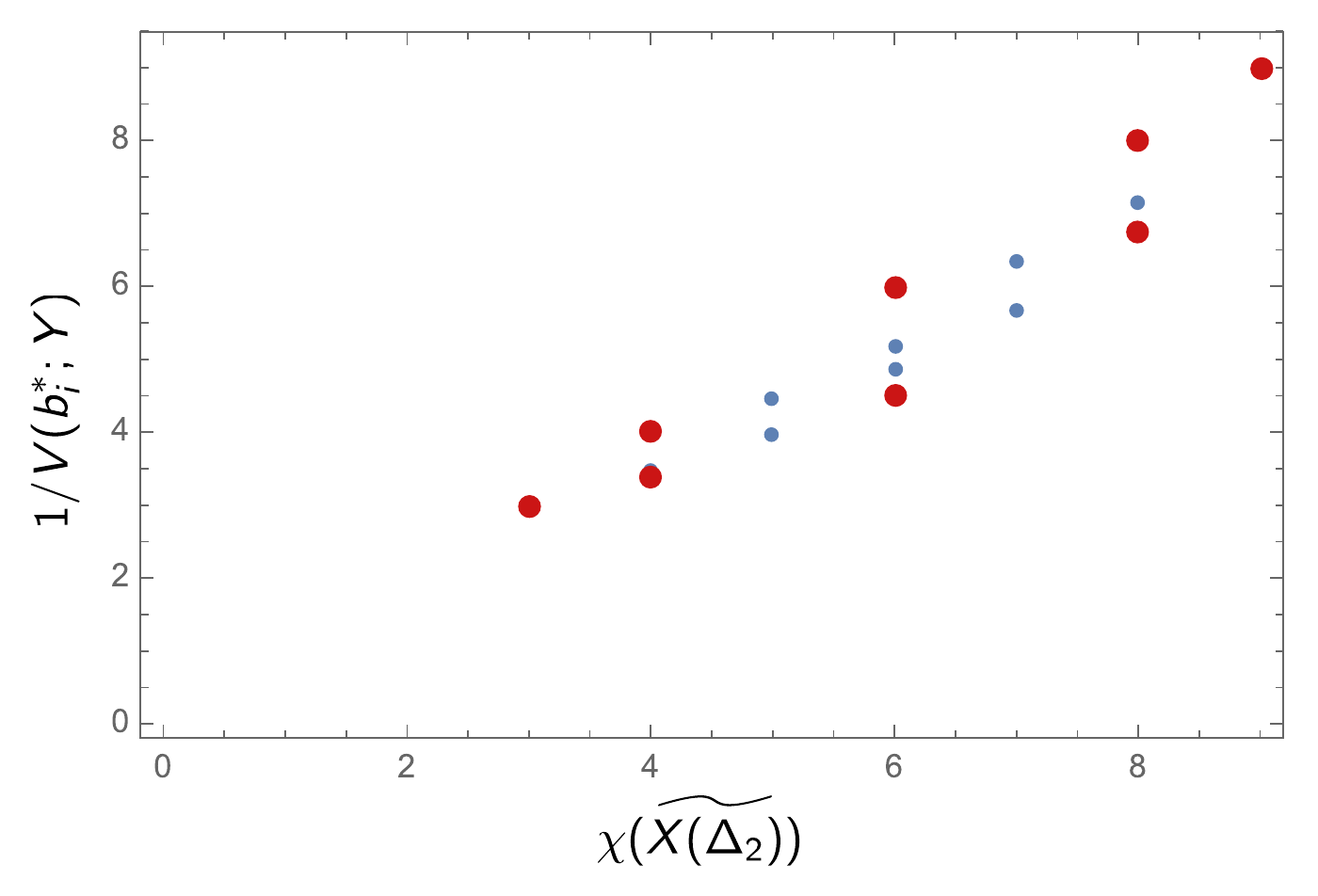}
}
\caption{{\sf {\small
The Euler number of the 16 reflexive toric Calabi-Yau 3-folds $\mathcal{X}$ against the inverse minimum volume $1/V(b_i^{*};Y)$. All quasi-regular cases with rational $V(b_i^{*};Y)$ are highlighted in red.
\label{fcy3regular}}}}
\end{center}
\end{figure}

\section{Minimum Volumes for Abelian Orbifolds of $\mathbb{C}^4$}\label{ap:orb}
In this section, we list the minimum for all the volume functions for the 48 Abelian orbifolds of $\IC^4$ corresponding to reflexive polyhedra $\Delta_3$.
The Reeb vector is here $(b_1,b_2,b_3,b_4)$ with $b_4$ set to 4.

\begin{table}[H]
\resizebox{\hsize}{!}{
\begin{tabular}{c|c|ccc|c}
\# & $V(b_i; \cX)$ & $b_1^{*}$ & $b_2^{*}$ & $b_3^{*}$ & $V(b_i^{*};Y)$
\\
\hline
1& $-\frac{64}{(b_{1}+b_{2}-3 b_{3}-b_{4}) (b_{1}-3b_{2}+b_{3}-b_{4}) (b_{1}+b_{2}+b_{3}-b_{4}) (3b_{1}-b_{2}-b_{3}+b_{4})}$
& $0$ & $0$& $0$
& $1/4$ 
\\
2 & $-\frac{54}{(b_{1}+b_{2}-5 b_{3}-b_{4}) (b_{1}-2 b_{2}+b_{3}-b_{4}) (b_{1}+b_{2}+b_{3}-b_{4}) (2 b_{1}-b_{2}-b_{3}+b_{4})}$
& $-1$ & $-1$& $0$
& $1/6$ 
\\
3 & $-\frac{72}{(b_{1}+b_{2}-5 b_{3}-b_{4}) (b_{1}-5 b_{2}+b_{3}-b_{4}) (b_{1}+b_{2}+b_{3}-b_{4}) (b_{1}-b_{2}-b_{3}+b_{4})}$
& $-2$ & $0$& $0$
& $1/6$ 
\\
9 & $-\frac{32}{(b_{1}+b_{2}-b_{4}) (3 b_{1}-b_{2}+b_{4})(b_{1}+b_{2}-2 b_{3}-b_{4}) (b_{1}-3 b_{2}+2b_{3}-b_{4})}$
& $0$ & $0$& $0$
& $1/8$ 
\\
10 & $-\frac{64}{(b_{1}+b_{2}-7 b_{3}-b_{4}) (b_{1}-3 b_{2}+b_{3}-b_{4}) (b_{1}+b_{2}+b_{3}-b_{4}) (b_{1}-b_{2}-b_{3}+b_{4})}$
& $-3$ & $-1$& $0$
& $1/8$ 
\\
32 & $-\frac{50}{(b_{1}+b_{2}-4 b_{3}-b_{4}) (b_{1}-4b_{2}+b_{3}-b_{4}) (b_{1}+b_{2}+b_{3}-b_{4}) (b_{1}-b_{2}-b_{3}+b_{4})}$
& $-4$ & $-1$& $-1$
& $1/10$ 
\\
86 & $-\frac{36}{(b_{1}+b_{2}-b_{4}) (b_{1}-b_{2}+b_{4}) (b_{1}+b_{2}-3 b_{3}-b_{4}) (b_{1}-5 b_{2}+3 b_{3}-b_{4})}$
& $-2$ & $0$& $0$
& $1/12$ 
\\
87 & $-\frac{24}{(b_{1}+b_{2}-b_{4}) (2 b_{1}-b_{2}+b_{4}) (b_{1}+b_{2}-2 b_{3}-b_{4}) (b_{1}-3 b_{2}+2 b_{3}-b_{4})}$
& $-1$ & $-1$& $-1$
& $1/12$ 
\\
88 & $-\frac{36}{(b_{1}+b_{2}-3 b_{3}-b_{4}) (b_{1}-2 b_{2}+b_{3}-b_{4}) (b_{1}+b_{2}+b_{3}-b_{4}) (2 b_{1}-b_{2}-b_{3}+b_{4})}$
& $-3$ & $-3$& $-2$
& $1/12$ 
\\
89 & $-\frac{72}{(b_{1}+b_{2}-11 b_{3}-b_{4}) (b_{1}-2 b_{2}+b_{3}-b_{4}) (b_{1}+b_{2}+b_{3}-b_{4}) (b_{1}-b_{2}-b_{3}+b_{4})}$
& $-5$ & $-3$& $0$
& $1/12$ 
\\
90 & $-\frac{48}{(b_{1}+b_{2}-5 b_{3}-b_{4}) (b_{1}-3 b_{2}+b_{3}-b_{4}) (b_{1}+b_{2}+b_{3}-b_{4}) (b_{1}-b_{2}-b_{3}+b_{4})}$
& $-5$ & $-2$& $-1$
& $1/12$ 
\\
428 & $-\frac{16}{(b_{1}-b_{4}) (b_{1}-2 b_{2}-b_{4}) (b_{1}-2 b_{3}-b_{4}) (3 b_{1}-2 b_{2}-2 b_{3}+b_{4})}$
& $0$ & $0$& $0$
& $1/16$ 
\\
429 & $-\frac{32}{(b_{1}+b_{2}-b_{4}) (b_{1}-b_{2}+b_{4}) (b_{1}+b_{2}-4 b_{3}-b_{4}) (b_{1}-3 b_{2}+2  b_{3}-b_{4})}$
& $-3$ & $-1$& $0$
& $1/16$ 
\\
430 & $-\frac{16}{(b_{1}+b_{2}-b_{4}) (3 b_{1}-b_{2}+b_{4})  (b_{1}+b_{2}-b_{3}-b_{4}) (b_{1}-3 b_{2}+b_{3}-b_{4})}$
& $0$ & $0$& $0$
& $1/16$ 
\\
431 & $-\frac{32}{(b_{1}-3 b_{2}-b_{4}) (b_{1}+b_{2}-b_{4}) (b_{1}+b_{2}-4 b_{3}-b_{4}) (b_{1}-b_{2}-b_{3}+b_{4})}$
& $-3$ & $-1$& $0$
& $1/16$ 
\\
432 & $-\frac{32}{(b_{1}-7 b_{3}-b_{4}) (b_{1}+b_{3}-b_{4}) (b_{1}-2 b_{2}+b_{3}-b_{4}) (b_{1}-b_{2}-b_{3}+b_{4})}$
& $-4$ & $-2$& $0$
& $1/16$ 
\\
742 & $-\frac{24}{(b_{1}+b_{2}-b_{4}) (b_{1}-b_{2}+b_{4}) (b_{1}+b_{2}-2 b_{3}-b_{4}) (b_{1}-5 b_{2}+2 b_{3}-b_{4})}$
& $-2$ & $0$& $0$
& $1/18$ 
\\
743 & $-\frac{18}{(b_{1}+b_{2}-b_{4}) (2 b_{1}-b_{2}+b_{4}) (b_{1}+b_{2}-2 b_{3}-b_{4}) (b_{1}-2 b_{2}+b_{3}-b_{4})}$
& $-1$ & $-1$& $0$
& $1/18$ 
\\
744 & $\frac{18}{(b_{1}+b_{2}-3 b_{3}-b_{4}) (b_{1}+b_{2}-b_{3}-b_{4}) (b_{1}-2 b_{2}+b_{3}-b_{4}) (-2 b_{1}+b_{2}+b_{3}-b_{4})}$
& $-1$ & $-1$& $0$
& $1/18$ 
\\
745 & $-\frac{18}{(b_{1}-5 b_{3}-b_{4}) (b_{1}+b_{3}-b_{4}) (b_{1}-b_{2}+b_{3}-b_{4}) (2 b_{1}-b_{2}-b_{3}+b_{4})}$
& $-2$ & $-3$& $0$
& $1/18$ 
\\
746 & $-\frac{54}{(b_{1}+b_{2}-8 b_{3}-b_{4}) (b_{1}-2 b_{2}+b_{3}-b_{4}) (b_{1}+b_{2}+b_{3}-b_{4}) (b_{1}-b_{2}-b_{3}+b_{4})}$
& $-8$ & $-5$& $-1$
& $1/18$ 
\\
1115 & $-\frac{40}{(b_{1}+b_{2}-4 b_{3}-b_{4}) (b_{1}-3 b_{2}+b_{3}-b_{4}) (b_{1}+b_{2}+b_{3}-b_{4}) (b_{1}-b_{2}-b_{3}+b_{4})}$
& $-9$ & $-4$& $-3$
& $1/20$ 
\\
1944 & $-\frac{18}{(b_{1}-b_{4}) (b_{1}-3 b_{2}-b_{4}) (b_{1}-3  b_{3}-b_{4}) (b_{1}-b_{2}-b_{3}+b_{4})}$
& $-2$ & $0$& $0$
& $1/24$ 
\\
1945 & $-\frac{24}{(b_{1}+b_{2}-b_{4}) (b_{1}-b_{2}+b_{4})(b_{1}+b_{2}-3 b_{3}-b_{4}) (b_{1}-3 b_{2}+2  b_{3}-b_{4})}$
& $-5$ & $-3$& $-2$
& $1/24$ 
\\
1946 & $-\frac{12}{(b_{1}+b_{2}-b_{4}) (2 b_{1}-b_{2}+b_{4}) (b_{1}+b_{2}-b_{3}-b_{4}) (b_{1}-3 b_{2}+b_{3}-b_{4})}$
& $-1$ & $-1$& $-2$
& $1/24$ 
\\
1947 & $-\frac{36}{(b_{1}-2 b_{2}-b_{4}) (b_{1}+b_{2}-b_{4}) (b_{1}+b_{2}-6 b_{3}-b_{4}) (b_{1}-b_{2}-b_{3}+b_{4})}$
& $-5$ & $-3$& $0$
& $1/24$ 
\\
\end{tabular}
}
\caption{
{\sf {\small
Volume functions $V(b_i;Y)$ with their minima for reflexive toric Calabi-Yau 4-folds that are Abelian orbifolds of $\mathbb{C}^4$. 
\textbf{(Part 1/2)}
}
\label{tcy4volmin}}
}
\end{table}

\begin{table}[H]
\resizebox{\hsize}{!}{
\begin{tabular}{c|c|ccc|c}
\# & $V(b_i; \cX)$ & $b_1^{*}$ & $b_2^{*}$ & $b_3^{*}$ & $V(b_i^{*};Y)$
\\
\hline
1948 & $-\frac{24}{(b_{1}-3 b_{2}-b_{4}) (b_{1}+b_{2}-b_{4}) (b_{1}+b_{2}-3 b_{3}-b_{4}) (b_{1}-b_{2}-b_{3}+b_{4})}$
& $-6$ & $-2$& $-2$
& $1/24$ 
\\
1949 & $-\frac{24}{(b_{1}-5 b_{3}-b_{4}) (b_{1}+b_{3}-b_{4}) (b_{1}-2 b_{2}+b_{3}-b_{4}) (b_{1}-b_{2}-b_{3}+b_{4})}$
& $-7$ & $-4$& $-1$
& $1/24$ 
\\
1950 & $-\frac{48}{(b_{1}+b_{2}-7 b_{3}-b_{4}) (b_{1}-2b_{2}+b_{3}-b_{4}) (b_{1}+b_{2}+b_{3}-b_{4})(b_{1}-b_{2}-b_{3}+b_{4})}$
& $-11$ & $-7$& $-2$
& $1/24$ 
\\
3039 & $-\frac{30}{(b_{1}-2 b_{2}-b_{4}) (b_{1}+b_{2}-b_{4}) (b_{1}+b_{2}-5 b_{3}-b_{4}) (b_{1}-b_{2}-b_{3}+b_{4})}$
& $-7$ & $-4$& $-1$
& $1/30$ 
\\
3313 & $-\frac{16}{(b_{1}+b_{2}-b_{4}) (b_{1}-b_{2}+b_{4}) (b_{1}+b_{2}-2 b_{3}-b_{4}) (b_{1}-3 b_{2}+b_{3}-b_{4})}$
& $-3$ & $-1$& $0$
& $1/32$ 
\\
3314 & $-\frac{8}{(b_{1}-b_{4}) (b_{1}-2 b_{2}-b_{4}) (b_{1}-b_{3}-b_{4}) (3 b_{1}-2 b_{2}-b_{3}+b_{4})}$
& $0$ & $0$& $0$
& $1/32$ 
\\
3315 & $-\frac{16}{(b_{1}-b_{2}+b_{4}) (b_{1}+b_{2}-3 b_{3}-b_{4})(b_{1}+b_{2}-b_{3}-b_{4}) (b_{1}-3 b_{2}+2 b_{3}-b_{4})}$
& $-3$ & $-1$& $0$
& $1/32$ 
\\
3316 & $-\frac{16}{(b_{1}-b_{4}) (b_{1}-2 b_{2}-b_{4}) (b_{1}-4 b_{3}-b_{4}) (b_{1}-b_{2}-b_{3}+b_{4})}$
& $-4$ & $-2$& $0$
& $1/32$ 
\\
3726 & $-\frac{24}{(b_{1}+b_{2}-b_{4}) (b_{1}-b_{2}+b_{4}) (b_{1}+b_{2}-4 b_{3}-b_{4}) (b_{1}-2 b_{2}+b_{3}-b_{4})}$
& $-5$ & $-3$& $0$
& $1/36$ 
\\
3727 & $-\frac{12}{(b_{1}+b_{2}-b_{4}) (b_{1}-b_{2}+b_{4}) (b_{1}+b_{2}-b_{3}-b_{4}) (b_{1}-5 b_{2}+b_{3}-b_{4})}$
& $-2$ & $0$& $0$
& $1/36$ 
\\
3728 & $\frac{12}{(b_{1}-3 b_{3}-b_{4}) (b_{1}+b_{3}-b_{4}) (b_{1}-b_{2}+b_{3}-b_{4}) (-2 b_{1}+b_{2}+b_{3}-b_{4})}$
& $-6$ & $-9$& $-2$
& $1/36$ 
\\
3994 & $-\frac{20}{(b_{1}-4 b_{3}-b_{4}) (b_{1}+b_{3}-b_{4}) (b_{1}-2 b_{2}+b_{3}-b_{4}) (b_{1}-b_{2}-b_{3}+b_{4})}$
& $-13$ & $-8$& $-3$
& $1/40$ 
\\
4081 & $-\frac{42}{(b_{1}+b_{2}-6 b_{3}-b_{4}) (b_{1}-2b_{2}+b_{3}-b_{4}) (b_{1}+b_{2}+b_{3}-b_{4}) (b_{1}-b_{2}-b_{3}+b_{4})}$
& $-20$ & $-13$& $-5$
& $1/42$ 
\\
4229 & $-\frac{12}{(b_{1}-b_{4}) (b_{1}-2 b_{2}-b_{4}) (b_{1}-3 b_{3}-b_{4}) (b_{1}-b_{2}-b_{3}+b_{4})}$
& $-8$ & $-4$& $-2$
& $1/48$ 
\\
4230 & $-\frac{24}{(b_{1}-2 b_{2}-b_{4}) (b_{1}+b_{2}-b_{4}) (b_{1}+b_{2}-4 b_{3}-b_{4}) (b_{1}-b_{2}-b_{3}+b_{4})}$
& $-13$ & $-7$& $-4$
& $1/48$ 
\\
4256 & $-\frac{10}{(b_{1}+b_{2}-b_{4}) (b_{1}-b_{2}+b_{4}) (b_{1}+b_{2}-b_{3}-b_{4}) (b_{1}-4 b_{2}+b_{3}-b_{4})}$
& $-4$ & $-2$& $-5$
& $1/50$ 
\\
4282 & $\frac{6}{(b_{1}-b_{4}) (b_{1}-b_{2}-b_{4}) (b_{1}-2 b_{3}-b_{4}) (-2 b_{1}+b_{2}+b_{3}-b_{4})}$
& $-2$ & $-3$& $-5$
& $1/54$ 
\\
4283 & $-\frac{18}{(b_{1}+b_{2}-b_{4}) (b_{1}-b_{2}+b_{4}) (b_{1}+b_{2}-3 b_{3}-b_{4}) (b_{1}-2  b_{2}+b_{3}-b_{4})}$
& $-8$ & $-6$& $-3$
& $1/54$ 
\\
4312 & $\frac{4}{(b_{1}-b_{4}) (b_{1}-b_{2}-b_{4}) (b_{1}-b_{3}-b_{4}) (-3 b_{1}+b_{2}+b_{3}-b_{4})}$
& $0$ & $0$& $0$
& $1/64$ 
\\
4313 & $-\frac{8}{(b_{1}-b_{4}) (b_{1}-b_{2}+b_{4}) (b_{1}-2 b_{3}-b_{4}) (b_{1}-2 b_{2}+b_{3}-b_{4})}$
& $-4$ & $-2$& $0$
& $1/64$ 
\\
4318 & $-\frac{6}{(b_{1}-b_{4}) (b_{1}-b_{2}+b_{4}) (b_{1}-b_{3}-b_{4}) (b_{1}-3 b_{2}+b_{3}-b_{4})}$
& $-2$ & $0$& $0$
& $1/72$ 
\\
4319 & $-\frac{12}{(b_{1}-b_{2}+b_{4}) (b_{1}+b_{2}-3 b_{3}-b_{4}) (b_{1}+b_{2}-b_{3}-b_{4}) (b_{1}-2 b_{2}+b_{3}-b_{4})}$
& $-5$ & $-3$& $0$
& $1/72$ 
\\
\end{tabular}
}
\caption{
{\sf {\small
Volume functions $V(b_i;Y)$ with their minima for reflexive toric Calabi-Yau 4-folds that are Abelian orbifolds of $\mathbb{C}^4$. 
\textbf{(Part 2/2)}
}
\label{tcy4volmin2}}
}
\end{table}

\section{Enveloping Curves for $V(b_i^{*};Y)$}

In this section, we numerically fit for the enveloping shape for the minimal volume $V(b_i^{*};Y)$ with respect to the topological quantities we have discussed in section \sref{stop}.
\subsection{Envelop for $V(b_i^{*};Y) \chi(\widetilde{X(\Delta)})$}
\label{ap:envelop}
In Figure \ref{fCY34boundaryfits}, we present the plots of $1/V(b_i^{*};Y)$ against $\chi(\widetilde{X(\Delta)})$ for Calabi-Yau cones $\cX$ of dimension 3 and 4.
The orange line represents where $1/V(b_i^{*};Y) = \chi(\widetilde{X(\Delta)})$ corresponding to Abelian orbifolds. We fit the lower enveloping curve in blue.

\begin{figure}[H]
\begin{center}
\resizebox{0.7\hsize}{!}{
  \includegraphics[trim=0mm 0mm 0mm 0mm, width=8in]{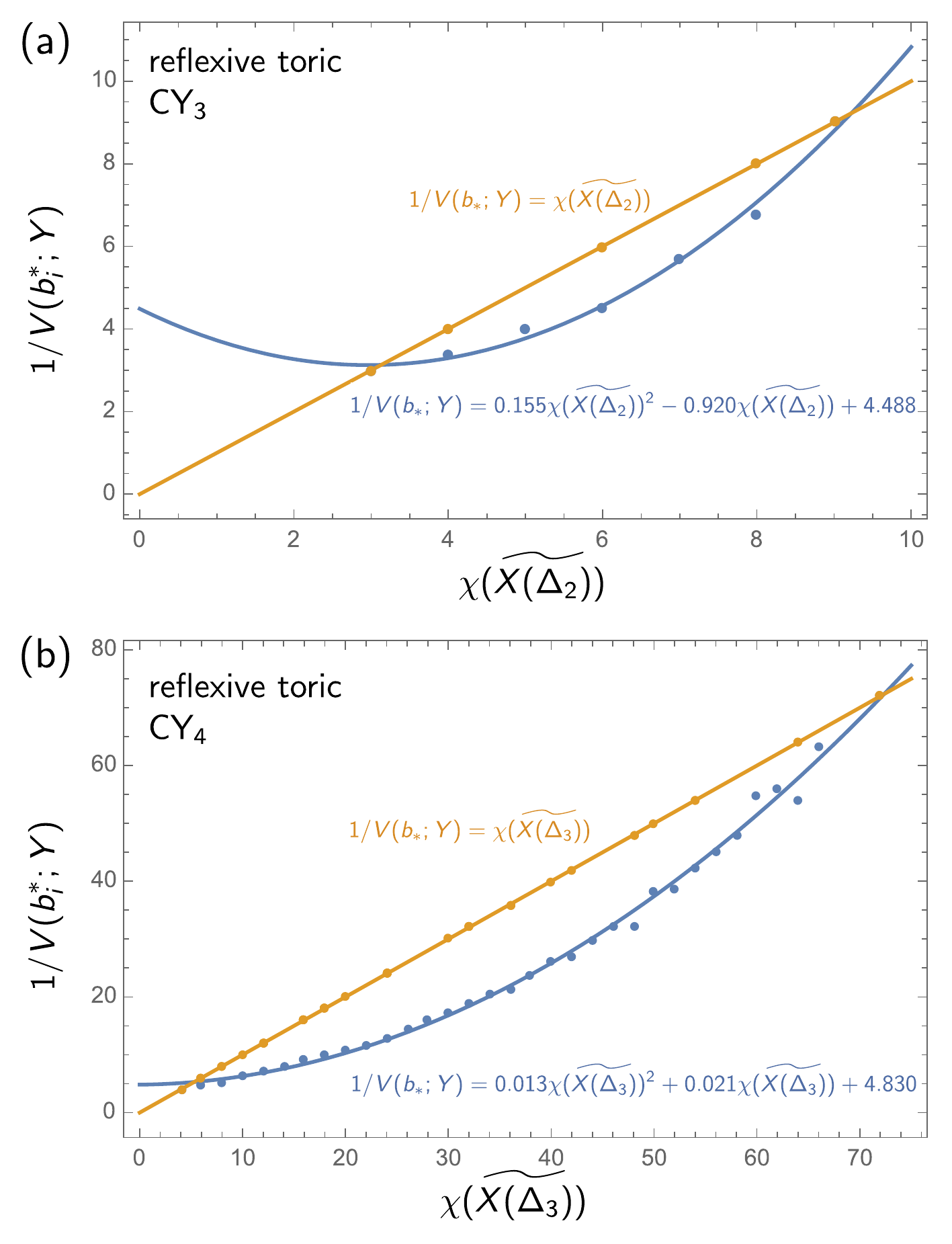}
}
\caption{{\sf {\small
      The reciprocal of the volume minimum has a bound at $1/V(b_i^{*};Y)=\chi(\widetilde{X(\Delta_n)})$ for
      (a) reflexive toric Calabi-Yau 3-folds and (b) 4-folds.
      Cf.~Figure~\ref{fCY34peakplots}.
      The orange straight lines indicate where $V(b_i^{*};Y) \chi(\widetilde{X(\Delta_n)})$ attains the minimum of 1, corresponding to the Abelian orbifolds.
      The blue lines correspond to the other extreme of the profile and have been fitted by regression.
\label{fCY34boundaryfits}}}}
\end{center}
\end{figure}

\subsection{Maximum of $V(b_i^{*};Y)$}
\label{ap:envelop2}
In Figure \ref{f:n=34boundChern}, we plot the minimum volume $V(b_i^{*};Y)$ against the Chern numbers $\int_{\widetilde{X(\Delta)}} c_1^n({\widetilde{X(\Delta_n)}})$ for $n=2,3$, corresponding to Calabi-Yau cones of dimension 3 and 4.
We fit the top envelop, which appears to be a straight line.
\begin{figure}[H]
\begin{center}
\resizebox{0.7\hsize}{!}{
  \includegraphics[trim=0mm 0mm 0mm 0mm, width=8in]{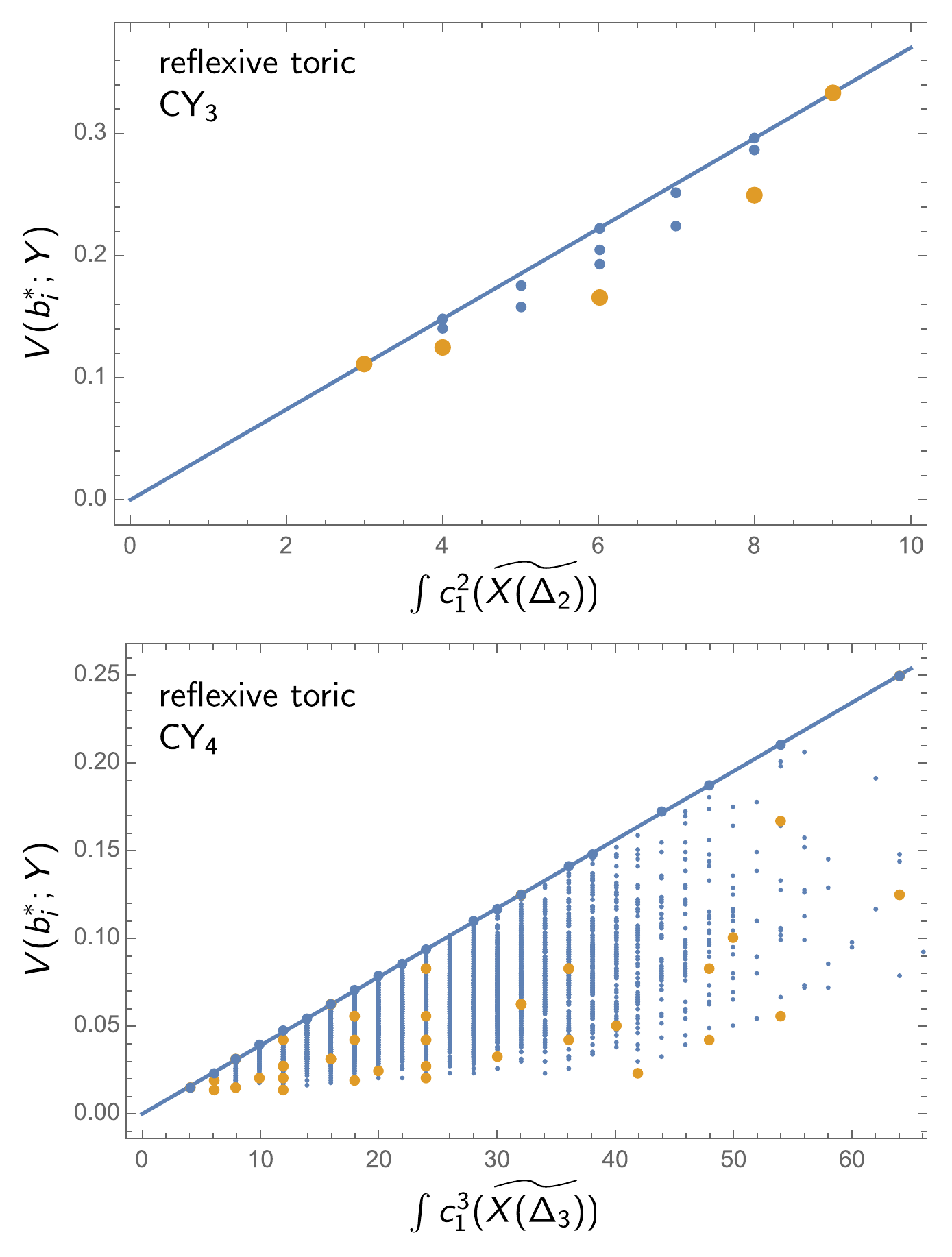}
}
\caption{{\sf {\small
The minimum volume $V(b_i^{*};Y)$ seems bounded above by a linear relation $V(b_i^{*};Y) = m_{n} \int c_1^n$ for $\cX$ being an Calabi-Yau $(n+1)$-cone, where $m_2=0.037$ and $m_3=0.0039$.
\label{f:n=34boundChern}}}}
\end{center}
\end{figure}

\section{Dual Reflexive Polytopes}

\begin{figure}[H]
\begin{center}
\resizebox{\hsize}{!}{
  \includegraphics[trim=0mm 0mm 0mm 0mm, width=8in]{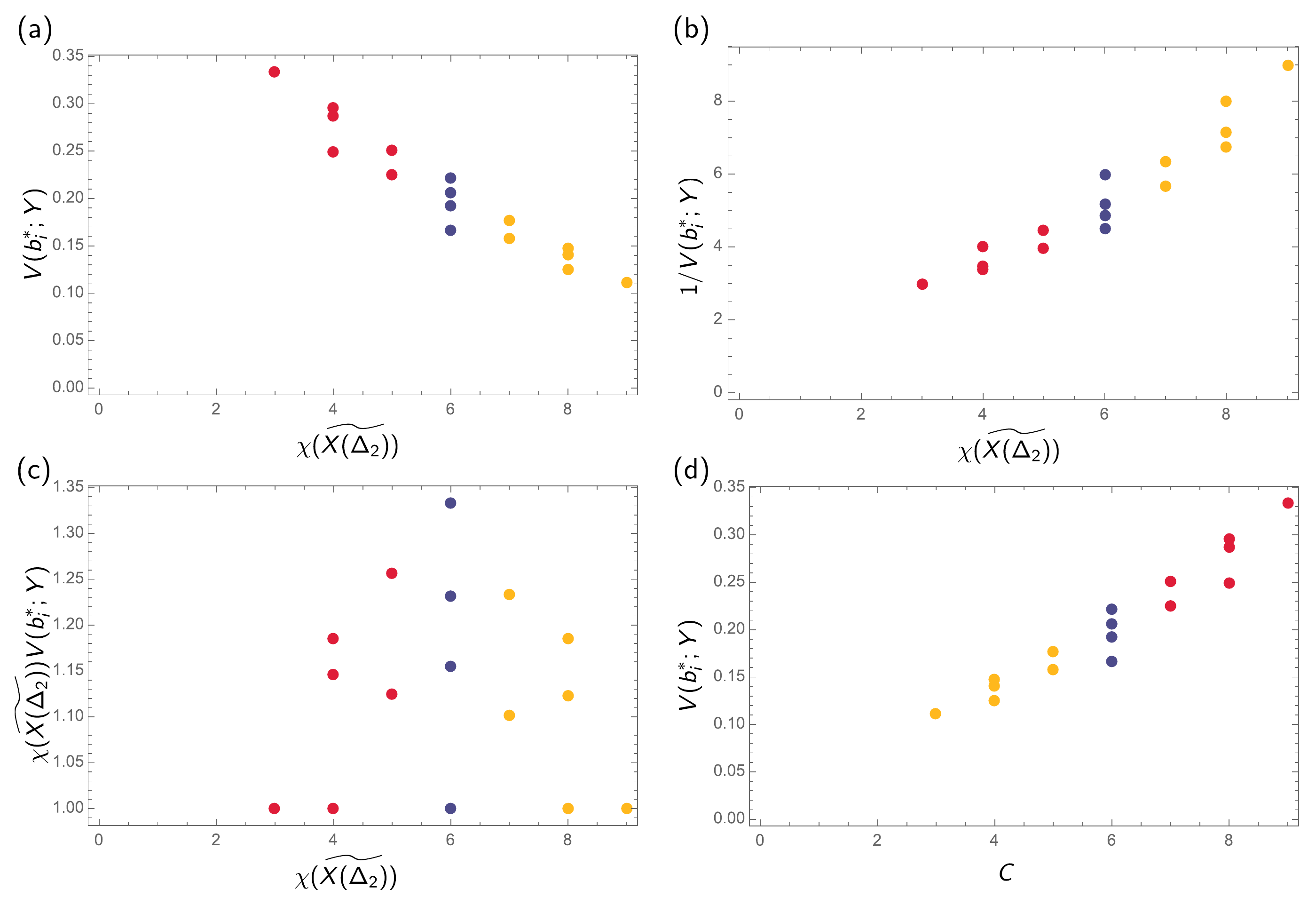}
}
\caption{{\sf {\small
(a) The Euler number  $\chi(\widetilde{X(\Delta_2)})$ against the minimum volume $V(b_i^*;Y)$, (b) the Euler number against the inverse of the minimum volume, (c) the Euler number against $\chi(\widetilde{X(\Delta_2)}) V(b_i^{*};Y)$, and (d) the Chern number $C = \int c_1^2(\widetilde{X(\Delta_2)})$ against the minimum volume for the 16 reflexive toric Calabi-Yau 3-folds $\cX$. Red and yellow points refer to dual reflexive polytopes $\Delta_2$ and the blue points refer to self-dual reflexive polytopes.
\label{fplotcy3refdualdias}}}}
\end{center}
\end{figure}

As discussed in section \sref{sreflexive}, reflexive polytopes $\Delta_n$ have dual polytopes $\Delta_n^\circ$ that are also lattice polytopes. Some reflexive polytopes are self-dual. In the case of the 16 $n=2$ reflexive polygons there are 5 self-dual polytopes while in the case of the 4319 $n=3$ reflexive polytopes there are 79 polytopes which are self-dual.

In \fref{fplotcy3refdualdias} and \fref{fplotcy4refdualdias}, we highlight dual polytopes and self-dual polytopes while plotting different topological quantities of $\widetilde{X(\Delta_n)}$ against the minimum volume. 
 
\begin{figure}[H]
\begin{center}
\resizebox{\hsize}{!}{
  \includegraphics[trim=0mm 0mm 0mm 0mm, width=8in]{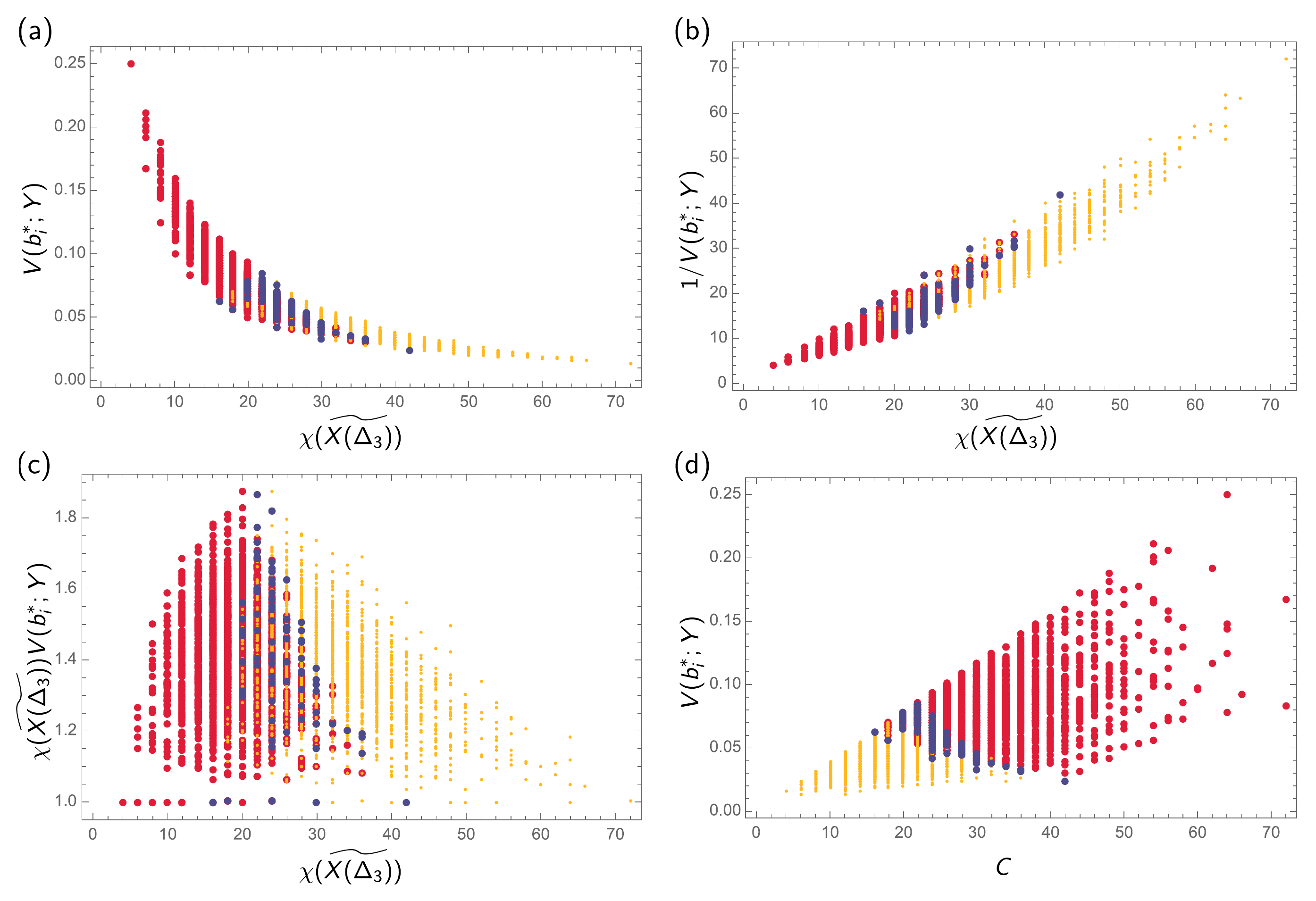}
}
\caption{{\sf {\small
(a) The Euler number  $\chi(\widetilde{X(\Delta_3)})$ against the minimum volume $V(b_i^*;Y)$, (b) the Euler number against the inverse of the minimum volume, (c) the Euler number against $\chi(\widetilde{X(\Delta_3)}) V(b_i^{*};Y)$, and (d) the Chern number $C = \int c_1^3(\widetilde{X(\Delta_3)})$ against the minimum volume for the 4319 reflexive toric Calabi-Yau 4-folds $\cX$. Red and yellow points refer to dual reflexive polytopes $\Delta_3$ and the blue points refer to self-dual reflexive polytopes.
\label{fplotcy4refdualdias}}}}
\end{center}
\end{figure}

\newpage


\bibliographystyle{jhep}
\bibliography{mybib}

\end{document}